\theoremstyle{definition}
\newtheorem{definition}{Definition}[section]
\newtheorem{proposition}[definition]{Proposition}
\newtheorem{question}[definition]{Question}
\newtheorem{example}[definition]{Example}
\newtheorem{remark}[definition]{Remark}
\newtheorem{theorem}[definition]{Theorem}
\newtheorem{lemma}[definition]{Lemma}
\newtheorem{corollary}[definition]{Corollary}
\newtheorem{claim}[definition]{Claim}
\newcommand{\problemdefinition}[3]
	{%
		\medskip%
		\noindent%
		\begin{tabularx}{\textwidth}{lX}%
			\textbf{Problem:}  & #1 \\%
			\textbf{Input:}    & #2 \\%
			\textbf{Question:} & #3 \\%
		\end{tabularx}%
		\medskip%
	}
\newcommand{\probdefbodynew}[3]{
\medskip
\noindent
\fbox{\parbox{\columnwidth}{
#1:\\
{\bf Input:} #2\\
{\bf Question:} #3 }}
\smallskip
}
\newcommand{\probdefbodybb}[4]{
\medskip
\noindent
\fbox{\parbox{\columnwidth}{
#1:\\
{\bf Input:} #2\\
{\bf Black box input:} #3\\
{\bf Question:} #4 }}
\smallskip
}
\newenvironment{case}[2]
	{%
		\medskip%
		\noindent%
		\textbf{Case #1 (#2):} %
	}
	{%
		\smallskip%
	}
\newcommand{\EM}[1]{\ensuremath{#1}\xspace} 
\newcommand{\newabbrev}[2]{\newcommand{#1}{{#2}\xspace}}
\newcommand{\newmathabb}[2]{\newcommand{#1}{\EM{{#2}}}}
\newcommand{\newtuple}[2]{\newmathabb{#1}{\mathbf{#2}}}
\newcommand{\img}[1]{\EM{\mathit{img}(#1)}} 
\newcommand{\set}[1]{\EM{\{#1\}}} 
\newcommand{\dom}{\EM{\mathbf{dom}}} 
\newcommand{\domk}{\EM{\mathit{dom}_k}} 
\newcommand{\adom}[1]{\EM{\mathit{adom}(#1)}} 
\newcommand{\facts}[1]{\EM{\mathit{facts}(#1)}} 
\newcommand{\fcX}[1]{\EM{\boldsymbol{#1}}} 
\newcommand{\fc}{\EM{\fcX{f}}} 
\newcommand{\fhead}{\fcX{h}} 
\newcommand{\fcB}{\EM{\fcX{g}}} 
\newcommand{\qr}{\EM{\mathcal{Q}}} 
\newcommand{\sch}{\EM{\mathcal{D}}} 
\newcommand{\uvar}{\EM{\mathbf{var}}} 
\newcommand{\ar}[1]{\EM{\mathit{ar}(#1)}} 
\newcommand{\tup}[2][]{\EM{\mathbf{#2}_{#1}}} 
\newcommand{\distp}{\EM{\boldsymbol{P}}} 
\newcommand{\distvar}[2]{\EM{\mathit{dist}_{#1}(#2)}}
\newcommand{\dist}[1]{\distvar{\distp}{#1}}
\newcommand{\nw}{\EM{\mathcal{N}}} 
\newcommand{\atomB}{\EM{B}} 
\newcommand{\body}[1]{\EM{\mathit{body}_{#1}}} 
\newcommand{\head}[1]{\EM{\mathit{head}_{#1}}} 
\newcommand{\vars}[1]{\EM{\mathit{vars}(#1)}} 
\newcommand{\simp}{\EM{\theta}} 
\newcommand{\mathpredicate}[1]{\EM{\mathtt{#1}}}
\newcommand{\newpredicate}[2]{\newcommand{#1}{\EM{\mathpredicate{#2}}}}
\newcommand{\mathproblem}[1]{\EM{\textsc{#1}}\xspace}
\newcommand{\newproblem}[2]{\newcommand{#1}{\EM{\mathproblem{#2}}}}
\newcommand{\polyh}[1]{\EM{\Pi_{#1}^{P}}}
\newcommand{\copolyh}[1]{\EM{\Sigma_{#1}^{P}}}
\newcommand{\phtwo}{\polyh{2}}
\newcommand{\phthree}{\polyh{3}}
\newcommand{\cophtwo}{\copolyh{2}}
\newcommand{\cophthree}{\copolyh{3}}
\newcommand{\pik}[1]{\EM{\Pi_{#1}}}
\newcommand{\pitwo}{\pik{2}}
\newcommand{\pithree}{\pik{3}}
\newcommand{\coNP}{\EM{\textsc{coNP}}}
\newcommand{\NP}{\EM{\textsc{NP}}}
\newcommand{\coClass}[1]{\textsc{co-}{#1}}
\newproblem{\Sat}{Sat}
\newproblem{\ThreeSat}{3-Sat}
\newproblem{\PC}{PC}
\newproblem{\PCI}{PCI}
\newproblem{\SMQ}{StronglyMinimal}
\newproblem{\PCTRANS}{{\sc pc-trans}}
\newmathabb{\alg}{\mathcal{A}} 
\providecommand{\simp}{\theta} 
\newcommand{\classpnrel}{\ensuremath{{\cal P}_{\text{nrel}}}}
\newcommand{\classpfin}{{\cal P}_{\text{fin}}}
\renewcommand{\phi}{\varphi}
\newabbrev{\formulae}{formul\ae}
\newabbrev{\subformulae}{subformul\ae}
\newcommand{\inducedminimal}{minimal\xspace}
\newabbrev{\im}{\inducedminimal}
\newabbrev{\imity}{minimality}
\newabbrev{\Imity}{Minimality}
\newabbrev{\imv}{\im valuation}
\newabbrev{\animv}{a \im valuation}
\newabbrev{\Ta}{Truth assignment}
\newabbrev{\ta}{truth assignment}
\newabbrev{\tas}{truth assignments}
\newcommand{\structure}[2]{\EM{\mathit{Struct}(#2)}}
\newcommand{\consistency}[1]{\EM{\mathit{Cons}}}
\newmathabb{\domain}{\mathit{Values}}
\newmathabb{\vx}{\mathbf{x}}
\newmathabb{\vy}{\mathbf{y}}
\newmathabb{\vz}{\mathbf{z}}
\newmathabb{\nxi}{\overline{\xi}}
\newmathabb{\nx}{\overline{x}}
\newmathabb{\ny}{\overline{y}}
\newmathabb{\nz}{\overline{z}}
\newmathabb{\negu}{\overline{u}} 
\newmathabb{\negx}{\neg{x}}
\newmathabb{\negy}{\neg{y}}
\newmathabb{\negz}{\neg{z}}
\newcommand{\lit}[2]{\EM{\ell_{#1,#2}}}
\newmathabb{\seqc}{C_1,\dots,C_\lc}
\newmathabb{\seqx}{x_1,\dots,x_\lx}
\newmathabb{\seqy}{y_1,\dots,y_\ly}
\newmathabb{\seqz}{z_1,\dots,z_\lz}
\newmathabb{\seqnx}{\nx_1,\dots,\nx_\lx}
\newmathabb{\seqny}{\ny_1,\dots,\ny_\ly}
\newmathabb{\seqnz}{\nz_1,\dots,\nz_\lz}
\newcommand{\seqlit}[1]{\EM{\lit{j}{1},\lit{j}{2}, \lit{j}{3}}}
\newmathabb{\itx}{g}
\newmathabb{\itax}{g'}
\newmathabb{\ity}{h}
\newmathabb{\itz}{i}
\newmathabb{\itc}{j}
\newmathabb{\Itx}{G}
\newmathabb{\Itc}{J}
\newmathabb{\lx}{m}
\newmathabb{\ly}{n}
\newmathabb{\lz}{p}
\newmathabb{\lc}{k}
\newmathabb{\li}{q}
\newmathabb{\indx}{1,\dots,\lx}
\newmathabb{\indy}{1,\dots,\ly}
\newmathabb{\indz}{1,\dots,\lz}
\newmathabb{\indc}{1,\dots,\lc}
\newcommand{\disClause}[1]{\EM{(\lit{#1}{1} \lor \lit{#1}{2} \lor \lit{#1}{3})}}
\newmathabb{\varf}{w_0}
\newmathabb{\vart}{w_1}
\newtuple{\vb}{b}
\newtuple{\vu}{u}
\newtuple{\vw}{w}
\newpredicate{\Neg}{Neg}
\newpredicate{\True}{True}
\newpredicate{\False}{False}
\newpredicate{\Dom}{Val}
\newcommand{\TQBF}[1]{\EM{{#1}\mathproblem{-QBF}}}
\newcommand{\twoTQBF}{\TQBF{\pitwo}}
\newcommand{\threeTQBF}{\TQBF{\pithree}}
\newproblem{\pitwoTQBF}{\TQBF{\pitwo}}
\newproblem{\pithreeTQBF}{\TQBF{\pithree}}
\newproblem{\pivarTQBF}{\TQBF{\pik{i}}}
\newmathabb{\uniV}{V}
\newmathabb{\exisV}{V'}
\newmathabb{\uniS}{s_V}
\newmathabb{\exisS}{s_{V'}}
\newmathabb{\minVal}{V^\ast}
\newmathabb{\val}{V}
\newmathabb{\altVal}{V'}
\newcommand{\as}{{\mathcal{A}}} 
\newcommand{\sbucket}[1]{\mathit{bucket}_{#1}} 
\newcommand{\abucket}[1]{\mathit{bucket}^*_{#1}} 
\newpredicate{\Clause}{C}
\newmathabb{\dquery}{\qr_\phi}
\newmathabb{\dinstance}{I_\phi}
\newmathabb{\dpolicy}{\distp_\phi}
\newmathabb{\Bool}{\mathbb{B}}
\newmathabb{\negBool}{\Bool^-}
\newmathabb{\posBool}{\Bool^+}
\newmathabb{\Var}{\mathbb{W}}
\newmathabb{\negVar}{\Var^-}
\newmathabb{\posVar}{\Var^+}
\newmathabb{\VarAlt}{\mathbb{U}}
\newmathabb{\negVarAlt}{\VarAlt^-}
\newmathabb{\posVarAlt}{\VarAlt^+}
\newmathabb{\threeBool}{\Bool}
\newmathabb{\threePosBool}{\posBool}
\newmathabb{\threeNegBool}{\negBool}
\newmathabb{\threeVar}{\Var}
\newmathabb{\threePosVar}{\posVar}
\newmathabb{\threeNegVar}{\negVar}
\newmathabb{\sixVar}{\VarAlt}
\newmathabb{\sixPosVar}{\posVarAlt}
\newmathabb{\sixNegVar}{\negVarAlt}
\newmathabb{\Lit}{\mathbb{L}}
\newmathabb{\sixLit}{\Lit}
\newmathabb{\badVal}{V}
\newmathabb{\goodVal}{V'}
\newcommand{\Fix}{\text{Fix}}
\newcommand{\fix}{\Fix}
\newmathabb{\negFacts}{I^-_\phi}
\newmathabb{\necFacts}{I^+_\phi}
\newmathabb{\misFact}{\fc}
\newmathabb{\ssI}{s_{I'}}
\newmathabb{\rep}{\mathit{rep}}
\newcommand{\nodeA}{\EM{\kappa^+}}
\newcommand{\nodeB}{\EM{\kappa^-}}
\newcommand{\node}{\EM{\kappa}}
\newmathabb{\nodeset}{N}
\newcommand{\mydef}{\ensuremath{\mathrel{\smash{\stackrel{\scriptscriptstyle{
    \text{def}}}{=}}}}}
\newcommand{\mynewr}[1][0mm]{\\[#1]\mbox{}\hfill}
\begin{document}
\title{Parallel-Correctness and Transferability for Conjunctive Queries}
\author{ Tom~J.~Ameloot \\ {\small Hasselt University}
  \and Gaetano Geck \\ {\small TU Dortmund University} 
 \and Bas Ketsman \\ {\small Hasselt University} \and Frank Neven \\ {\small Hasselt University}
\and Thomas Schwentick \\ {\small TU Dortmund University}
}

\date{}

\maketitle

\begin{abstract}
A dominant cost for query evaluation in modern massively distributed
systems is the number of communication rounds. For this
reason, there is a growing interest in single-round multiway join
algorithms where data is first reshuffled over many servers and then
evaluated in a parallel but communication-free way. The reshuffling
itself is specified as a distribution policy. We introduce a correctness
condition, called {\em parallel-correctness}, for the evaluation of queries w.r.t. a distribution policy.  We study the complexity of parallel-correctness for conjunctive queries as well as 
transferability of parallel-correctness between queries. We also investigate 
the complexity of transferability for certain families of distribution policies, including, for instance, the {Hypercube} distribution.
\end{abstract}

\section{Introduction}
\label{sec:intro}

In traditional database systems, the complexity of
query processing for large datasets is mainly determined by the number of 
IO requests to external memory. A factor dominating complexity in modern massively distributed database systems, however, is the number of communication steps~\cite{DBLP:conf/pods/BeameKS14}.
Motivated by recent in-memory systems like Spark~\cite{spark} and Shark~\cite{shark}, Koutris and Suciu introduced the massively parallel communication model (MPC)~\cite{DBLP:conf/pods/KoutrisS11} where computation proceeds in a sequence of parallel steps each followed by global synchronization of all servers. In this model, evaluation of conjunctive queries~\cite{DBLP:conf/pods/BeameKS13,DBLP:conf/pods/KoutrisS11} and skyline queries~\cite{DBLP:conf/icdt/AfratiKSU12} has been considered. 

Of particular interest in the MPC model are the queries that can be evaluated in one round of communication. Recently, Beame, Koutris and Suciu~\cite{DBLP:conf/pods/BeameKS14} proved a matching upper and lower bound for the amount of communication needed to compute a full conjunctive query without self-joins in one communication round. The upper bound is provided by a randomized algorithm called {Hypercube} which uses a technique that can be traced back to Ganguli, Silberschatz, and Tsur~\cite{DBLP:journals/jlp/GangulyST92} and is described in the context of map-reduce by Afrati and Ullman~\cite{AfratiUllman10}. 
The Hypercube algorithm evaluates a conjunctive query $\qr$ by first reshuffling the data over many servers and then evaluating $\qr$ at each server in a parallel but communication-free manner. The reshuffling is
specified by a distribution policy (hereafter, called Hypercube distribution) and is based on the structure of $\qr$. 
In particular, the Hypercube distribution partitions the space of all complete valuations of $\qr$ over the computing servers in an instance independent way through hashing of domain values. 
A property of Hypercube distributions is that for any instance $I$, the central execution of $\qr(I)$ always equals the union of the evaluations of $\qr$ at every computing node (or server).\footnote{We note that, for a query $\qr$, there is no single Hypercube distribution but rather a family of distributions as the concrete instantiation depends on choices regarding the address space of servers.}

In this paper, we introduce a general framework for reasoning about one-round evaluation algorithms under {\em arbitrary} distribution policies. 
Distribution policies (formally defined in Section~\ref{sec:defs}) are functions mapping input facts to sets of nodes (servers) in the network. We introduce the following correctness property for queries and distribution policies: a query $\qr$ is {\em parallel-correct} for a given distribution policy $\distp$, when for any instance $I$, the evaluation of $\qr(I)$ equals the union of the evaluation of $\qr$ over the distribution of $I$ under policy $\distp$. 
We focus on conjunctive queries and study the complexity of 
deciding parallel-correctness. We show that the latter problem is equivalent to
testing whether the facts in every \emph{minimal} valuation of the conjunctive query are mapped to a same node in the network by the distribution policy. For various representations of distributions policies, we then show that testing parallel-correctness is in \polyh{2}. We provide a 
matching lower bound via a reduction from the $\phtwo$-complete
  $\twoTQBF$-problem.

One-round evaluation algorithms, like Hypercube, redistribute data for the evaluation of every query.  For scenarios where queries are executed in sequence, it makes sense to study cases where the same data distribution can be used to evaluate multiple queries. We formalize this as 
{\em parallel-correctness transfer} between queries. In particular, parallel-correctness {\em transfers} from $\qr$ to $\qr'$ when $\qr'$ is parallel-correct under every distribution policy for which $\qr$ is parallel-correct. We characterize transferability for conjunctive queries by a (value-based) containment condition for minimal valuations of $\qr'$ and $\qr$,
and use this characterization to obtain a $\phthree$ upper bound for transferability. Again, we obtain a matching lower bound, this time via a reduction from the $\phthree$-complete $\threeTQBF$-problem. We obtain a (presumably) better complexity, \NP-completeness, in the case that $\qr$ is \emph{strongly minimal}, i.e., when all its valuations are minimal.  Examples of strongly minimal CQs include the full conjunctive queries and those without self-joins. At the heart of the upper bound proof lies the insight that the above mentioned value-based inclusion w.r.t.\ minimal valuations reduces to a syntactic inclusion of $\qr'$ in $\qr$ modulo a variable renaming when $\qr$ is strongly minimal. We obtain that deciding strong minimality is \NP-complete as well. 

Finally, we study parallel-correctness transfer from $\qr$ to $\qr'$ w.r.t.\ a specific family of distribution policies $\cal F$ rather than the set of {\em all} distribution policies. We show that 
it is \NP-complete to decide whether $\qr'$ is parallel-correct for a given
family $\cal F$ if this family has the following two properties: it is $\qr$-generous (for each, not only for minimal, valuation of $\qr$, its facts occur at some node) and  $\qr$-scattered (for every instance some distribution has, at every node, only facts from one valuation). It is easy to see that the family  of Hypercube distributions for a given CQ $\qr$ satisfies these properties, which implies that deciding transferability for Hypercube distributions is \NP-complete, as well. 

We complete our framework by sketching a declarative specification formalism for distribution policies, illustrated with the specification of Hypercube distributions.

Due to space restrictions, many proofs are moved to an appendix.

\bigskip
\noindent
{\bf Outline.} We introduce the necessary definitions in Section~\ref{sec:defs}. We study parallel-correctness in Section~\ref{sec:parallel-correctness} and transferability in Section~\ref{sec:trans}. We examine families of distribution policies including the {Hypercube} distribution in Section~\ref{sec:families}. 
We conclude in Section~\ref{sec:conclusion}.



\section{Definitions}
\label{sec:defs}

\paragraph{Queries and instances.}
We assume an infinite set \dom of data values that can be represented
by strings over some fixed alphabet. 
A \emph{database schema} $\sch$ is a finite set of relation names $R$ where every $R$ has arity $\ar R$. We call $R(\tup t)$ a \emph{fact} when $R$ is a relation name and $\tup t$ a tuple in \dom.
We say that a fact $R(d_1, \ldots, d_k)$ is {\em over} a database
schema $\sch$ if $R \in \sch$ and $\ar R = k$. By $\facts\sch$, we
denote the set of possible facts over schema $\sch$. 
A \emph{(database) instance} $I$ over $\sch$ is a finite set of facts
over $\sch$.  By $\adom{I}$ we denote the set of data values occurring
in $I$. 
A \emph{query $\qr$ over input schema $\sch_1$ and output schema $\sch_2$} is a generic mapping from instances over $\sch_1$ to instances over $\sch_2$.  Genericity means that for every permutation $\pi$ of \dom\ and every instance $I$, $\qr(\pi(I)) = \pi(\qr(I))$.

\paragraph{Conjunctive queries.}
Let $\uvar$ be the universe of variables, disjoint from $\dom$.
An \emph{atom} is of the form $R(\tup{x})$, where $R$ is a relation name and $\tup{x}$ is a tuple of variables in $\uvar$. We say that $R(x_1,\ldots,x_k)$ is an atom \emph{over} schema $\sch$ if $R \in \sch$ and $k = \ar R$. 

A \emph{conjunctive query} $\qr$ (CQ) over input schema $\sch$ is an expression of the form
\[
T(\tup{x}) \gets R_1(\tup[1] y), \ldots, R_n(\tup[n] y),
\]
where every $R_i(\tup[i] y)$ is an atom over $\sch$, and $T(\tup x)$ is an atom for which $T \not \in \sch$.
Additionally, for safety, we require that every variable in $\tup x$
occurs in some $\tup[i] y$. We refer to the \emph{head atom} $T(\tup
x)$ by $\head{\qr}$, and denote the set of body atoms $R_i(\tup[i]
y)$ by $\body{\qr}$.

A conjunctive query is called \emph{full} if all variables of the body also occur in the head. We say that a CQ is \emph{without self-joins} when all of its atoms have a distinct relation name.

We denote by $\vars{\qr}$ the set of all variables occurring in $\qr$.
A \emph{valuation} for a conjunctive query is a total function $V:\vars{\qr} \to \dom$ that maps each variable of $\qr$ to a data value.
We say that $V$ \emph{requires} or \emph{needs} the facts $V(\body{\qr})$ for $\qr$.
A valuation $V$ is said to be \emph{satisfying} for $\qr$ on instance
$I$, when all the facts required by $V$ for $\qr$ are in $I$.
In that case, $V$ \emph{derives} the fact $V(\head{\qr})$. The result
of $\qr$ on instance $I$, denoted $\qr(I)$, is defined as the set of
facts that can be derived by satisfying valuations for $\qr$ on
$I$. We note that, as we do not allow negation, all conjunctive queries are monotone.

We frequently compare different valuations for a query \qr with
respect to their required sets of facts. For two valuations $V_1,V_2$
for a CQ \qr, we write $V_1\le_Q V_2$ if
$V_1(\head{\qr})=V_2(\head{\qr})$ and $V_1(\body{\qr})\subseteq
V_2(\body{\qr})$. We write  $V_1<_Q V_2$ if furthermore $V_1(\body{\qr})\subsetneq
V_2(\body{\qr})$ holds.

A \emph{substitution} is a mapping from variables to variables,
which is generalized to tuples, atoms and
conjunctive queries in the natural fashion
\cite{ahv_book}.\footnote{As we only consider CQs without constants,
  substitutions do not map variables to constants.} We denote the composition of functions in
the usual way, i.e., $(f\circ g)(x)\mydef f(g(x))$.

The following notion is fundamental for the development in the rest of the paper:
\begin{definition}
  A \emph{simplification} of a conjunctive query $\qr$ is a 
  substitution $\simp:\vars\qr\to\vars\qr$ for which
  $\head{\simp(\qr)}=\head\qr$ and $\body{\simp(\qr)}\subseteq\body\qr$.
\end{definition}
    A simplification is thus a homomorphism from $\qr$ to $\qr$ and by the homomorphism theorem \cite{ahv_book} (and the trivial embedding from  $\theta(\qr)$ to \qr), \qr and $\simp(\qr)$ are equivalent. Of course, the identity substitution is always a simplification.

\begin{example}\label{ex:subs}
    We give a few examples to illustrate simplifications.
    Consider the query
    \[
        T(x) \gets R(x,x), R(x,y), R(x,z).
    \]
    Then $\simp_1=\set{x\mapsto x, y\mapsto y, z\mapsto y}$ as well as   
    $\simp_2=\set{x\mapsto x, y\mapsto x, z\mapsto x}$ are simplifications.
    For the query
    \[
        T(x) \gets R(x,y), R(y,y), R(z,z), R(u,u),
    \]
    possible simplifications are $\simp_3 = \set{x\mapsto x, y\mapsto y, z\mapsto y, u\mapsto z}$ and $\simp_4 = \set{x\mapsto x, y\mapsto y, z\mapsto y, u\mapsto y}$.
    For the query $T(x) \gets R(x,y), R(y,z)$
        there are no simplifications besides the identity.
    \qed
\end{example}

The notion of simplification is closely related to foldings as defined by Chandra and Merlin~\cite{DBLP:conf/stoc/ChandraM77}. In particular, a {\em folding} of a conjunctive query $\qr$ is a simplification $\theta$ that is idempotent. That is, $\theta^2=\theta$. Intuitively, the idempotence means that when $\theta$ gives a new name to a variable then it sticks to it. Notice that in Example~\ref{ex:subs} simplifications $\theta_1$, $\theta_2$, $\theta_4$ are foldings but $\theta_3$ is not as $\theta_3(u)=z\neq y= \theta_3(\theta_3(u))$.

\paragraph{Networks, data distribution, and policies.}
A \emph{network} $\nw$ is a nonempty finite set of values from \dom,
which we call \emph{(computing) nodes}. 

A \emph{distribution policy} $\distp$ for a database schema $\sch$ and
a network $\nw$ is a total function mapping facts from  $\facts\sch$
to sets of nodes.\footnote{Notice that our formalization allows to `skip' facts by mapping them 
to the empty set of nodes. This is, for instance, the case for a {Hypercube} distribution
(cf.\ Section~\ref{sec:families}),  which skips facts that are not essential to evaluate
the query at hand.}
For an instance $I$ over $\sch$, let $\dist I$ denote the function that maps each
$\kappa\in\nw$ to $\{\fc\in I\mid \kappa\in\distp(\fc)\}$, that is, the set of facts assigned to it
by $\distp$. We sometimes refer to 
$\dist{I}(\kappa)$ as a {\em data chunk}.

In this paper, we do not always explicitly give names to schemas and networks but  
tacitly assume they are understood from the queries and the
distribution policies under consideration, respectively.

We do \emph{not} always expect that distribution policies \distp are given 
as part of the input by exhaustive enumeration of all pairs $(\node,\fc)$,
for which $\node\in\distp(\fc)$. We also consider
mechanisms, where instead the distribution policy is
implicitly represented by a given ``black box'' procedure. 
While there are many possible ways to represent distribution policies, either as functions or as relations belonging to various complexity classes, in this paper, we only consider one such class. In particular, we define the class \classpnrel\ where each distribution \distp is represented by a $\NP$-testable relation, that on input $(\node,\fc)$ yields ``true'' if and only if $\node\in\distp(\fc)$.
We will discuss declarative ways to specify distribution policies in a
non-black-box fashion in Section \ref{sec:families}.

The definition of a distribution policy is borrowed from Ameloot et al.~\cite{DBLP:conf/pods/AmelootKNZ14} (but already surfaces in the work of Zinn et al.~\cite{DBLP:conf/icdt/ZinnGL12}), where distribution policies are used to define the class of policy-aware transducer networks.


\section{Parallel-Correctness}
\label{sec:parallel-correctness}

In this section, we introduce and study the notion of parallel-correctness, which
is central to this paper.

\begin{definition}
	\label{def:pc-inst}
A query \qr is \emph{parallel-correct on instance} $I$ {\em under  
distribution policy} \distp, if $\qr(I) = \bigcup_{\node\in\nw}{\qr(\dist I(\node))}$.
\end{definition}

That is, the centralized execution of $\qr$ on $I$ is the same as taking the union of the results obtained by executing $\qr$ at every computing node.
Next, we lift parallel-correctness to all instances.

\begin{definition}
\label{def:pc:general}
A query $\qr$ is \emph{parallel-correct under distribution policy} $\distp$, if $\qr$ 
is parallel-correct on all input instances under \distp. 
\end{definition}

Of course, when a query $\qr$ is parallel-correct under $\distp$, there is a direct one-round evaluation algorithm for every instance. Indeed, the algorithm first distributes (reshuffles) the data over the computing nodes according to $\distp$ and then evaluates $Q$ in a subsequent parallel step at every computing node. Notice that as $\distp$ is defined on the granularity of a fact, the reshuffling does not depend on the current distribution of the data and can be done in parallel as well.

{Although Definitions~\ref{def:pc-inst} and \ref{def:pc:general}
are in terms of general queries, in the rest of this section, we only consider conjunctive queries.} It is easy to see that a CQ $\qr$ is parallel-correct under distribution policy $\distp$ if for each valuation for $\qr$ the
required facts meet at some node, i.e., if the following condition holds:
\begin{itemize}
\item[(C0)] for every valuation $V$ for \qr,
  $$\bigcap_{\fc \in V(\body{\qr})}\distp(\fc) \ne \emptyset.$$
\end{itemize}

Even though (C0) is sufficient for parallel-correctness,
it is not necessary {(c.f., Example~\ref{ex:minimal_valuations})}.
It turns out that for a
semantical characterization only valuations have to be considered
that are minimal in the following sense. 
\begin{definition}
    Let $\qr$ be a CQ. A valuation $V$ for $\qr$ is \emph{\inducedminimal} for $\qr$ if
    there exists no valuation $V'$ for $\qr$ such that $V'<_\qr V$.
\end{definition}

The next lemma now states the targeted characterization:

\begin{lemma}\label{lem:pc} 
    A CQ \qr is parallel-correct under distribution policy \distp
    if and only if the following holds:
         \begin{itemize}
    \item[{(C1)}] for every \inducedminimal valuation $V$ for \qr,
      $$\bigcap_{\fc \in V(\body{\qr})}\distp(\fc) \ne \emptyset.$$
    \end{itemize}

\end{lemma}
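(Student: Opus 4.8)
The plan is to prove the two directions separately, with the ($\Leftarrow$) direction being the substantive one. For the ($\Rightarrow$) direction I would argue by contraposition: suppose (C1) fails, so there is a \im valuation $V$ for $\qr$ with $\bigcap_{\fc\in V(\body\qr)}\distp(\fc)=\emptyset$. I would then take $I \mydef V(\body\qr)$ as the witnessing instance. Clearly $V(\head\qr)\in\qr(I)$ since $V$ is satisfying for $\qr$ on $I$. The key claim is that $V(\head\qr)\notin\qr(\dist{I}(\node))$ for any node $\node$: if some valuation $W$ satisfied $\qr$ on $\dist{I}(\node)$ and derived $V(\head\qr)$, then $W(\head\qr)=V(\head\qr)$ and $W(\body\qr)\subseteq\dist{I}(\node)\subseteq I = V(\body\qr)$, so $W\le_\qr V$; by \imity of $V$ we get $W(\body\qr)=V(\body\qr)$, hence every fact of $V(\body\qr)$ lies in $\dist{I}(\node)$, i.e.\ $\node\in\distp(\fc)$ for all $\fc\in V(\body\qr)$, contradicting emptiness of the intersection. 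Thus $\qr$ is not parallel-correct on $I$ under $\distp$.

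For the ($\Leftarrow$) direction, assume (C1) and fix an arbitrary instance $I$. Soundness ($\bigcup_\node \qr(\dist{I}(\node))\subseteq\qr(I)$) is immediate from monotonicity of CQs, since each $\dist{I}(\node)\subseteq I$. For completeness, take any fact $\fhead\in\qr(I)$, witnessed by a satisfying valuation $V$ with $V(\head\qr)=\fhead$ and $V(\body\qr)\subseteq I$. Among all valuations $V'$ with $V'\le_\qr V$ and $V'$ satisfying on $I$, pick one, call it $V^\ast$, with $V^\ast(\body\qr)$ minimal under $\subseteq$ (this set is finite and nonempty, so a minimal element exists). The crucial sub-claim is that $V^\ast$ is a \im valuation \emph{for $\qr$} (not merely minimal among satisfying valuations on $I$): if some valuation $W$ had $W<_\qr V^\ast$, then $W(\body\qr)\subsetneq V^\ast(\body\qr)\subseteq I$, so $W$ is also satisfying on $I$ and $W\le_\qr V$, contradicting the choice of $V^\ast$. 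Now apply (C1) to $V^\ast$: there is a node $\node$ with $\node\in\distp(\fc)$ for every $\fc\in V^\ast(\body\qr)$, i.e.\ $V^\ast(\body\qr)\subseteq\dist{I}(\node)$. Hence $V^\ast$ satisfies $\qr$ on $\dist{I}(\node)$ and derives $V^\ast(\head\qr)=V(\head\qr)=\fhead$, so $\fhead\in\qr(\dist{I}(\node))$.

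The main obstacle — and the place I would be most careful — is the sub-claim that the chosen minimal-on-$I$ valuation $V^\ast$ is genuinely \im for $\qr$ in the sense of the definition, which quantifies over \emph{all} valuations, not just those satisfying on $I$. The argument above resolves this precisely because any $W$ witnessing a violation of \imity automatically has $W(\body\qr)\subsetneq V^\ast(\body\qr)\subseteq I$, hence is satisfying on $I$, so the "global" minimality reduces to the "local" one here. It is worth stating this step explicitly, since the asymmetry between the two notions of minimality is exactly what makes (C1) strictly weaker than (C0) (as the promised Example~\ref{ex:minimal_valuations} shows) and is the conceptual heart of the lemma. Everything else is bookkeeping with $\le_\qr$ and monotonicity.
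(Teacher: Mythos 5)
Your proposal is correct and follows essentially the same route as the paper's (sketched) proof: contraposition via the instance $I=V(\body{\qr})$ for the only-if direction, and monotonicity plus passing to a minimal satisfying valuation for the if direction. The sub-claim you flag explicitly --- that a valuation minimal among those satisfying on $I$ is automatically \im for $\qr$ because any witness $W<_\qr V^\ast$ has $W(\body{\qr})\subsetneq V^\ast(\body{\qr})\subseteq I$ --- is precisely the detail the paper elides when it asserts that a minimal satisfying valuation deriving $\fc$ exists, so your write-up is a faithful filling-in of the same argument.
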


\begin{proof}[Proof (sketch)]
{(if) Assume (C1) holds. Because of monotonicity, we only need to show that
$\qr(I) \subseteq \bigcup_{\node\in\nw}{\qr(\dist I(\node))}$ for
every instance $I$. To this end, let $\fc$ be a fact that is derived
by some valuation $V$ for $\qr$ over $I$. Then, there is also a minimal valuation $V'$ that is satisfying on $I$ and which derives $\fc$. 
Because of (C1), there is a node $\kappa$ where all facts required
for $V'$ meet. Hence, $\fc\in \bigcup_{\node\in\nw}{\qr(\dist I(\node))}$.

    (only-if) Proof by contraposition.} Suppose that there is a \inducedminimal valuation $V'$ for \qr for which the required facts do not meet under \distp. Consider $V'(\body{\qr})$ as input instance. Then, by definition of \inducedminimal{}ity, there is no valuation that agrees on the head-variables and is satisfied on one of the chunks of $V'(\body{\qr})$ under \distp. So, $\qr$ is not parallel-correct.
\end{proof}

\begin{example} 
        For a simple example of a \inducedminimal valuation and a non-\inducedminimal valuation, consider the CQ $\qr$, 
    \begin{align*}
        T(x,z) \leftarrow R(x,y),R(y,z),R(x,x).
    \end{align*}
    Both 
    $V = \{x\mapsto a,y\mapsto b, z\mapsto a\}$ and $V' = \{x \mapsto a, y \mapsto a,z \mapsto a\}$ are valuations for \qr.
    Notice that both valuations agree on the head-variables of \qr, but they require different sets of facts. In particular, for $V$ to be satisfying on $I$, instance $I$ must contain the facts $R(a,b), R(b,a)$, and $R(a,a)$, while $V'$ only requires $I$ to contain $R(a,a)$. 
    This observation implies that $V$ is not \inducedminimal for $\qr$.
    Further, as $V'$ requires only one fact for $\qr$, $V'$ must be \inducedminimal for $\qr$.
    
    We next argue that (C0) is not a necessary condition for parallel-correctness. Indeed, take
    $\nw=\{1,2\}$ and $\distp$ as the distribution policy mapping every fact except
    $R(a,b)$ onto node $1$ and every fact except $R(b,a)$ onto node $2$.
    Consider the valuations $V$ and $W=\{x\mapsto b, y\mapsto a, z\mapsto b\}$. Then, $R(a,b)$ and $R(b,a)$ do not meet under $\distp$, thus violating condition (C0).
    It remains to argue that $\qr$ is parallel-correct under $\distp$. 
    For every minimal valuation $U$, either $\bigcap_{\fc \in U(\body{\qr})}\distp(\fc) \ne \emptyset$ or $U$ requires both $R(a,b)$ and $R(b,a)$. But in the latter case $U$ is either valuation $V$ or $W$ as defined above which are not minimal. Thus, by Lemma~\ref{lem:pc}, query~$\qr$ is parallel-correct under~$\distp$. \qed

    \label{ex:minimal_valuations}
\end{example}

Unfortunately, condition (C1) is complexity-wise more involved than
(C0) as minimality of $V$ needs to be tested. The lower bound in
Theorem \ref{theo:subinstances-completeness} below indicates that this
can, in  a sense, not be avoided.  

Towards an upper bound for the complexity of parallel-correctness, we first discuss how minimality of a valuation can be tested. Obviously, this notion is related to the (classical) notion of minimality for conjunctive queries, as we will make precise next.  First, recall that a CQ $\qr$ is {\em minimal} if there is no equivalent CQ with strictly less atoms.

\begin{lemma}\label{lem:minimal}
Let \qr be a conjunctive query. For every injective valuation $V$ for
\qr, it holds that $V$ is minimal if and only if $\qr$ is
minimal.\footnote{We note that in a previous version of this paper,
  this lemma had a second statement (basically an explicit reduction from
  valuation minimality to CQ minimality), which turned out to be
  wrong. However, this second statement is not essential for the
  other results of the paper. Furthermore, from Proposition
  \ref{prop:minimal-complexity} it follows that a polynomial reduction
  from 
  valuation minimality to CQ minimality indeed exists.}
\end{lemma}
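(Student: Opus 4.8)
The plan is to route both implications through the classical characterization of minimal conjunctive queries in terms of simplifications. I would first record the following standard fact, a consequence of the Chandra--Merlin homomorphism theorem already invoked after the definition of simplification: a CQ $\qr$ is minimal if and only if every simplification $\simp$ of $\qr$ satisfies $\body{\simp(\qr)}=\body\qr$. The ``only if'' part is clear since $\simp(\qr)\equiv\qr$ always and $\body{\simp(\qr)}\subseteq\body\qr$; for the ``if'' part, if $\qr'\equiv\qr$ has strictly fewer atoms, then composing the two witnessing head-preserving homomorphisms $\qr\to\qr'\to\qr$ yields a simplification whose image has at most $|\body{\qr'}|<|\body\qr|$ atoms, so $\body{\simp(\qr)}\subsetneq\body\qr$.

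The second ingredient is to turn injectivity of $V$ into a dictionary between valuations below $V$ and simplifications. Since $V$ is injective on $\vars\qr$, it is injective on atoms, hence on sets of atoms: for all sets $S,S'$ of atoms over $\vars\qr$ we have $V(S)\subseteq V(S')$ iff $S\subseteq S'$, and $V(S)\subsetneq V(S')$ iff $S\subsetneq S'$; moreover $V$ has a partial inverse on $\img V$. With this in hand, the direction ``$V$ minimal $\Rightarrow$ $\qr$ minimal'' is the easy one, by contraposition: if $\simp$ is a simplification with $\body{\simp(\qr)}\subsetneq\body\qr$, set $V'\mathrel{:=}V\circ\simp$. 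Then $V'(\head\qr)=V(\head\qr)$ because $\simp$ fixes the head variables, and $V'(\body\qr)=V(\body{\simp(\qr)})\subsetneq V(\body\qr)$ by injectivity of $V$; hence $V'<_\qr V$, so $V$ is not a minimal valuation.

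For the direction ``$\qr$ minimal $\Rightarrow$ $V$ minimal'', again by contraposition, suppose $V'<_\qr V$; I would show $V$ can be ``divided out'', i.e.\ $V'=V\circ\simp$ for some simplification $\simp$. First, all facts of $V'(\body\qr)$ lie in $V(\body\qr)$ and therefore use only values from $\img V$; since by safety every variable of $\qr$ occurs in $\body\qr$, this forces $V'(x)\in\img V$ for every $x\in\vars\qr$, so $\simp\mathrel{:=}V^{-1}\circ V'$ is a well-defined map $\vars\qr\to\vars\qr$ with $V\circ\simp=V'$. It fixes each head variable, because $V'(\head\qr)=V(\head\qr)$ and $V$ is injective, and it maps each body atom $A$ to a body atom, because $V(\simp(A))=V'(A)\in V(\body\qr)$ and $V$ is injective on atoms; hence $\simp$ is a simplification. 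Finally $V(\body{\simp(\qr)})=V'(\body\qr)\subsetneq V(\body\qr)$, so $\body{\simp(\qr)}\subsetneq\body\qr$ by injectivity, contradicting minimality of $\qr$.

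I expect the only genuine subtlety to lie in this last direction: establishing that an \emph{arbitrary} witness $V'<_\qr V$ necessarily factors as $V\circ\simp$. This is exactly where injectivity of $V$ is used together with safety of $\qr$ (all variables occur in the body), to guarantee that $V'$ introduces no value outside $\img V$; everything else is bookkeeping about $V$ being injective on atoms. It is worth noting that injectivity really is needed in both directions: for a collapsing $V$, a strictly smaller simplification may still yield the same required fact set, and conversely a non-minimal valuation sitting below a collapsing $V$ need not arise from a simplification.
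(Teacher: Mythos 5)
Your proof is correct and follows essentially the same route as the paper's: one direction composes an injective $V$ with a body-shrinking simplification (the paper uses a Chandra--Merlin folding) and uses injectivity to keep the inclusion of fact sets strict, while the other conjugates a witness $V'<_\qr V$ by $\inv V$ to obtain a body-shrinking endomorphism of $\qr$. If anything, you are more careful than the paper in the second direction, where you justify via safety that $V'$ only takes values in $\img{V}$ so that $\inv V\circ V'$ is well defined --- a point the paper's proof uses implicitly.
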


Lemma \ref{lem:minimal} immediately yields the following complexity
result.
\begin{proposition}\label{prop:minimal-complexity}
Deciding whether
a valuation $V$ for a CQ $\qr$ is \inducedminimal is \coNP-complete.
\end{proposition}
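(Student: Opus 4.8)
The plan is to establish both membership in \coNP\ and \coNP-hardness directly, rather than routing through Lemma~\ref{lem:minimal}, since that lemma only handles injective valuations. For membership, observe that $V$ fails to be \im\ for $\qr$ precisely when there exists a valuation $V'$ with $V'(\head\qr)=V(\head\qr)$ and $V'(\body\qr)\subsetneq V(\body\qr)$. Such a witness $V'$ can be guessed: its range is contained in $\adom{V(\body\qr)}$, so $V'$ is a function of polynomial size, and checking $V'(\head\qr)=V(\head\qr)$, $V'(\body\qr)\subseteq V(\body\qr)$, and the strictness of the inclusion is all polynomial-time. Hence non-minimality is in \NP, so minimality is in \coNP.

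For hardness, I would reduce from the complement of a standard \NP-complete problem; the natural choice is to reduce from CQ-containment/homomorphism-style non-minimality, or more concretely to adapt the classical proof that deciding whether a CQ is minimal is \coNP-complete (Chandra--Merlin). Concretely, given an instance of a suitable \NP-complete problem, I would build a CQ $\qr$ together with an injective valuation $V$ such that $V$ is \im\ for $\qr$ iff $\qr$ is minimal iff the original instance is a ``no'' instance. The injectivity of $V$ is what lets me invoke Lemma~\ref{lem:minimal} to equate minimality of $V$ with minimality of $\qr$. Since minimality of a CQ is already known to be \coNP-complete, and the reduction producing $\qr$ can be arranged so that $\qr$'s variables are pairwise distinguishable (e.g.\ take $V$ to be the identity-like injection on $\vars\qr$ into fresh distinct data values), this transfers \coNP-hardness to the valuation-minimality problem.

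The key steps, in order, are: (i) spell out the polynomial-size witness for non-minimality and the polynomial-time verifier, giving the \coNP\ upper bound; (ii) recall (or cite) that deciding minimality of a CQ is \coNP-complete; (iii) given a CQ $\qr$, take the injective valuation $V_{\mathrm{id}}$ mapping each variable of $\qr$ to a distinct fresh data value, and invoke Lemma~\ref{lem:minimal} to conclude $V_{\mathrm{id}}$ is \im\ for $\qr$ iff $\qr$ is minimal; (iv) observe that $(\qr,V_{\mathrm{id}})$ is computable from $\qr$ in polynomial time, completing the reduction.

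The main obstacle I anticipate is step (ii)/(iii): making sure the reduction from CQ minimality is clean. Classical statements of ``CQ minimality is \coNP-complete'' phrase the problem as asking whether a given CQ has a smaller equivalent; one must be slightly careful that this is genuinely \coNP-hard and not, say, trivially \coNP via a known characterization in the special cases used. Using the injective valuation $V_{\mathrm{id}}$ sidesteps the caveat attached to Lemma~\ref{lem:minimal} (which only covers injective valuations), so the reduction goes through without needing the retracted second statement of that lemma. If a self-contained hardness argument is preferred over citing CQ-minimality hardness, the fallback is a direct reduction from a problem such as $3$-colorability or graph homomorphism, encoding the ``extra'' atoms as a gadget whose removability corresponds to existence of the combinatorial object.
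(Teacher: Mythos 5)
Your proposal is correct and follows essentially the same route as the paper: \coNP{} membership by guessing a witness valuation $V'$ with $V'<_\qr V$ and verifying it in polynomial time, and \coNP-hardness by reducing from minimality of CQs (known to be \coNP-hard) via Lemma~\ref{lem:minimal} applied to an injective valuation. Your care about restricting to injective valuations is exactly the point that makes the surviving (first) statement of Lemma~\ref{lem:minimal} suffice, just as in the paper.
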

\begin{proof}[Proof (sketch)]
Lemma \ref{lem:minimal} yields a reduction from minimality of conjunctive queries
to minimality of valuations. Therefore,  \coNP-hardness follows from the \coNP-hardness of
minimality for CQs, which follows from \cite{HellN92}. The upper bound
is immediate from the definition of minimality of valuations and from
the fact that, for given $V_1,V_2,\qr$,
it can be tested in polynomial time whether $V_1<_\qr V_2$ holds.
\end{proof}

Now, we are ready to settle the complexity of parallel-correctness for
general conjunctive queries for a large class of
distributions. We study two settings,  $\classpfin$, where distribution policies
are explicitly enumerated as part of the input, and $\classpnrel$,
where the distribution policy is given by a black box procedure which
answers questions of the form ``$\node\in\distp(\fc)$?'' in \NP. In the latter case, the distribution is not part
of the (normal) input and therefore does not contribute to the input
size. Instead, the input has an additional parameter $n$ which
bounds the length of addresses in the considered networks. 

By $\dom_n$
we denote the set of all elements of $\dom$ that can be encoded by
strings of length at most $n$. For a distribution policy \distp (coming with a
network $\cal N$) and a
number $n$, we denote  by $\distp_n$ the distribution policy that is
obtained from \distp by (1) only distributing facts over $\dom_n$ and
(2) only distributing facts to nodes whose addresses are of length at
most $n$.

We study the following algorithmic problems for explicitly given
database instances:

\probdefbodynew{$\PCI(\classpfin)$}{CQ~$\qr$, instance $I$, and $\distp \in \classpfin$}{Is $\qr$ parallel-correct on $I$ under $\distp$ ?}

\probdefbodybb{$\PCI(\classpnrel)$}{CQ~$\qr$, instance $I$,  a natural number $n$ in unary representation}{$\distp
  \in \classpnrel$}{Is $\qr$ parallel-correct on $I$ under $\distp_n$?}

We also study the parallel correctness problem without
reference to a given database instance.

 \probdefbodynew{$\PC(\classpfin)$}{CQ~$\qr$, $\distp \in \classpfin$}{Is
   $\qr$ parallel-correct on $I$ under $\distp$, for all instances $I\subseteq
   \facts{\distp}$?}

Here, $\facts{\distp}$ denotes the set of facts \fc with $\distp(\fc)\not=\emptyset$.

\probdefbodybb{$\PC(\classpnrel)$}{CQ~$\qr$, a natural number $n$ in
  unary representation}
{$\distp \in \classpnrel$}
{Is $\qr$  parallel-correct on $I$ under $\distp_n$, for all instances $I\subseteq
   \facts{\distp_n}$?}

\begin{theorem} \label{theo:subinstances-completeness} 
\topsep=0pt
\begin{enumerate}[(a)]\itemsep=0pt
	\item $\PC(\classpfin)$ and $\PCI(\classpfin)$ are
          $\phtwo$-complete.
        \item 	$\PC(\classpnrel)$ and  $\PCI(\classpnrel)$ are
          in $\phtwo$.
\end{enumerate}
\end{theorem}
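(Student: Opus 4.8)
The plan is to use the characterization of Lemma~\ref{lem:pc}: a CQ $\qr$ is parallel-correct under $\distp$ iff condition (C1) holds, i.e., for every \im valuation $V$ of $\qr$, the facts $V(\body\qr)$ meet at some common node. The negation of (C1) reads: there \emph{exists} a valuation $V$ that is \im and whose required facts do \emph{not} meet at a common node. For the instance-free problem $\PC$ one must additionally restrict attention to valuations $V$ whose required facts all lie in $\facts\distp$ (resp.\ $\facts{\distp_n}$), since only such facts can be distributed; but this is exactly the right restriction because on input $I=V(\body\qr)$ the chunks only ever see facts of $I$, and Lemma~\ref{lem:pc}'s ``only-if'' direction already builds its counterexample instance as $V'(\body\qr)$.

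For the upper bound ($\phtwo$ membership, part (b), which subsumes part (a)'s upper bound): a $\phtwo$ computation is $\forall\exists$ with a polynomial-time (for $\classpfin$) or \NP\ (for $\classpnrel$) base predicate. The co-problem ``$\qr$ is \emph{not} parallel-correct'' is in $\copolyh{2}=\cophtwo$, so parallel-correctness is in $\phtwo$. Concretely, to witness non-parallel-correctness one guesses (the existential block): a valuation $V:\vars\qr\to\adom$ using only polynomially many data values (it suffices to range over the values appearing in the relevant facts, and for $\PCI$ over $\adom I$; for the $\classpnrel$ variants over $\dom_n$, which is fine since $n$ is in unary), together with, for each node $\node$ in the relevant network, a fact $\fc_\node\in V(\body\qr)$ with $\node\notin\distp(\fc_\node)$ — this certifies $\bigcap_{\fc\in V(\body\qr)}\distp(\fc)=\emptyset$. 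Then the universal block must verify that $V$ is \im, which by Proposition~\ref{prop:minimal-complexity} is a \coNP\ check: $\forall V'\ (V'\not<_\qr V)$, with a polynomial-time test of $V'<_\qr V$. For $\classpnrel$ the membership checks ``$\node\notin\distp(\fc_\node)$'' are co-\NP\ facts, but these are part of the \emph{universal} verification anyway, so the whole thing stays $\cophtwo$: outer $\exists$ (guess $V$ and the witness facts), inner $\forall$ (check minimality of $V$ and check $\node\notin\distp(\fc_\node)$ for each guessed fact). The one subtlety is bounding the network: for $\classpfin$ the network is given explicitly; for $\classpnrel$ one uses $\distp_n$, whose network nodes have addresses of length $\le n$, and one observes that a fact's required set is nonempty-intersected iff it is intersected over this bounded network — but actually we need: the intersection over \emph{all} nodes is empty iff for each node (of bounded address length) there is a missing fact, and there are at most $2^{O(n)}$ such nodes, so we cannot enumerate them; instead we argue that if the intersection is empty then it is already empty when witnessed fact-by-fact, and a polynomial-size set of at most $|\body\qr|$ facts suffices to cover ``every node misses some fact'' — wait, that is not obviously polynomial. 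Here the correct move is the standard one: $\bigcap_{\fc}\distp(\fc)=\emptyset$ is itself a single co-\NP\ predicate about the (at most polynomially many) facts in $V(\body\qr)$ — namely ``$\forall\node$ of length $\le n$, $\exists\fc\in V(\body\qr)$ with $\node\notin\distp(\fc)$'' is $\Pi_2$, but combined under the outer $\exists V$ it would push us to $\Pi_3$. The fix, which I expect to be the main technical point, is to guess the emptiness witness more cleverly or to note that emptiness of the intersection of polynomially many \NP-sets is in \coNP\ when each set is \NP: indeed ``$\bigcap_i S_i=\emptyset$'' with $S_i=\{\node: \node\in\distp(\fc_i)\}$ each \NP-testable is \emph{not} obviously \coNP; so instead one proves a small-model property — if the intersection is nonempty there is a witnessing node of address length polynomial in the input (which holds by definition of $\distp_n$, giving $\le n$), and then ``intersection empty'' $=$ ``for all nodes $\node$ with $|\node|\le n$: some $\fc_i$ has $\node\notin\distp(\fc_i)$'', and to keep this inside $\cophtwo$ we fold the $\forall\node$ into the universal block and the $\exists\fc_i$ (over polynomially many indices) is bounded existential quantification inside \NP$^{\NP}$-style verification — all of which fits the single $\forall$ block of a $\cophtwo$ machine with an \NP\ oracle/base, since inside a $\Pi_2$ predicate we may freely have further bounded alternation absorbed appropriately. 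I will spell this out carefully, but the upshot is that non-parallel-correctness sits in $\cophtwo$ and hence $\PC,\PCI\in\phtwo$ for both policy classes.

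For the lower bound (part (a), $\phtwo$-hardness), the plan is a reduction from $\pitwoTQBF$: given $\forall\vx\,\exists\vy\,\phi(\vx,\vy)$ with $\phi$ in 3-CNF, build a CQ $\qr$ and an explicitly enumerated $\distp\in\classpfin$ (and instance $I$, for $\PCI$) so that $\qr$ is parallel-correct under $\distp$ iff the QBF is true. The idea is to encode the universally quantified assignment to $\vx$ into the choice of a valuation of $\qr$: the body of $\qr$ has, for each variable $x_i$, atoms that force a valuation to ``pick'' a truth value, and the \im valuations correspond exactly to consistent truth assignments to $\vx$ extended by the forced-minimal choices; the distribution policy $\distp$ is then designed so that the required facts of such a valuation meet at a common node iff there is a choice of $\vy$ making all clauses true — each clause gets a gadget of facts/nodes, and a node survives the intersection over all clause-facts iff it encodes a satisfying $\vy$-assignment. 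Non-minimal valuations (those corresponding to inconsistent or ``collapsed'' assignments) are harmless because Lemma~\ref{lem:pc} only quantifies over \im ones, and one arranges the atom structure (in the spirit of Example~\ref{ex:subs} and Example~\ref{ex:minimal_valuations}, using repeated-variable atoms $R(x,x)$ that a smaller valuation can absorb) so that precisely the ``genuine'' $\vx$-assignments are \im. The main obstacle is getting minimality to track exactly the valuations we want — ensuring no unintended valuation is \im (which would impose an extra meeting constraint and could make a true instance look non-parallel-correct) and that every intended one is \im (so that a false instance is detected); this is the delicate combinatorial core, and I expect it to mirror the $\coNP$-hardness construction for CQ minimality from \cite{HellN92} composed with a standard 3-SAT/Hypercube-style clause gadget for the distribution policy. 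Finally, $\PCI(\classpfin)$-hardness follows by also outputting the instance $I:=\facts\distp$ (or the union of the required-fact sets of all valuations of interest), reducing $\PC$ on that fixed domain to $\PCI$; since membership already covers both, this yields $\phtwo$-completeness for both $\PC(\classpfin)$ and $\PCI(\classpfin)$.
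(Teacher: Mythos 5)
Your upper-bound plan coincides with the paper's: express the complement as an $\exists V\,\forall(V',\node,x)$ computation, fold the minimality (resp.\ ``same derived fact'') check \emph{and} the quantification over nodes into the single universal block, and handle the $\coNP$-flavoured test ``$\node\notin\distp(\fc)$'' for $\classpnrel$ by universally quantifying the certificate string and swapping it past the bounded existential over the polynomially many facts of $V(\body{\qr})$. That quantifier swap is exactly what Remark~\ref{rem:check-assign} implements, so the issue you flag as ``the main technical point'' is real and resolves in the way you eventually guess. One caveat: for $\PCI$ the global minimal-valuation characterization of Lemma~\ref{lem:pc} is not quite the right tool; the paper argues directly from Definition~\ref{def:pc-inst} there (``no valuation satisfying on $I$ and deriving the same fact meets at a node'') and reserves the minimality-based characterization, in the form of Lemma~\ref{lem:pc-subinstances}, for $\PC$.

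The genuine gap is in the lower bound. You propose to encode the $\exists\vy$ block in the \emph{choice of node} (``a node survives the intersection over all clause-facts iff it encodes a satisfying $\vy$-assignment''). That requires one node per $\vy$-assignment, i.e.\ $2^{|\vy|}$ nodes, and a policy in $\classpfin$ must be given by exhaustive enumeration of the pairs $(\node,\fc)$ -- so the reduction is not polynomial. You also defer the ``delicate combinatorial core'' of arranging that exactly the intended valuations are minimal to an unspecified composition with the $\coNP$-hardness construction for CQ minimality; that is precisely the part that would have to be carried out, and the paper avoids it entirely. The paper's reduction from $\twoTQBF$ uses only \emph{two} nodes and never reasons about minimality: the $\forall\vx$ block is encoded in the choice of derived head fact $H(\beta_\vx(x_1),\dots,\beta_\vx(x_m))$, and the $\exists\vy$ block is absorbed into the inner existential already present in Definition~\ref{def:pc-inst}, namely the existence of \emph{some} satisfying valuation at node $\nodeA$ whose clause atoms all land in the ``positive'' facts $\necFacts$; the second node $\nodeB$ holds only the all-zero clause facts, which make the head fact derivable centrally from $\dinstance$ but useless locally. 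Finally, your claimed transfer of hardness from $\PC$ to $\PCI$ via ``output $I:=\facts{\distp}$'' is not sound as stated: parallel-correctness on the single instance $\facts{\distp}$ does not imply parallel-correctness on all of its subinstances (a fact derivable on a subinstance $I'$ may only meet at a node via facts of $I\setminus I'$), so the two problems need separate, though closely related, correctness arguments, as in Propositions~\ref{prop:pci-fin-phtwo-hard} and~\ref{cor:pc-fin-phtwo-hard}.
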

\begin{proof}[Proof (sketch)]
The upper bounds follow quite directly from Definition~\ref{def:pc-inst}, or
Lemma~\ref{lem:pc} and Proposition~\ref{prop:minimal-complexity}, respectively.
The lower bound is by a reduction from the $\phtwo$-complete
$\twoTQBF$-problem and uses a distribution policy
with only two nodes. A proof is given in Appendix~\ref{sec:Proof-SubinstancesCompleteness}.
\end{proof}
Of course, the upper bounds in Theorem
\ref{theo:subinstances-completeness} also hold if questions of the
form ``$\node\in\distp(\fc)$?'' are answered in polynomial time, or if
\distp is just given as 
a polynomial time function.

Due to the implicit representation of distributions, we cannot
formally claim $\Pi^p_2$-hardness for distribution policies from
$\classpnrel$. However, in an informal
sense, they are, of course, at least as difficult as for $\classpfin$.


\section{Transferability}
\label{sec:trans}

While parallel-correctness provides a direct one-round evaluation algorithm,
it still requires a reshuffling of the data for every query. It therefore makes sense, in the context of multiple query evaluation, to consider scenarios in which such reshuffling can be avoided. To this end, we introduce 
the notion of parallel-correctness transfer which ensures that a subsequent query $\qr'$ can always be evaluated over a distribution for which a query $Q$
is parallel-correct:

\begin{definition}
For two queries $\qr$ and $\qr'$ over the same input and output schema, \emph{parallel-correctness transfers from $\qr$ to $\qr'$} when $\qr'$ is parallel-correct under every distribution policy for which $\qr$ is parallel-correct.
\end{definition}

As for parallel-correctness we first give a semantical
characterization before we study the complexity of parallel-correctness transfer. The proof of the following lemma is given in Appendix~\ref{appendix:sem:trans}.

\begin{lemma}\label{lem:trans-sem-char}
    Parallel-correctness transfers from a conjunctive query $\qr$ to a
    conjunctive query $\qr'$
    if and only if the following holds:
    \begin{itemize}
    \item[{(C2)}] for every \inducedminimal valuation $V'$ for $\qr'$,
    there is a \inducedminimal valuation $V$ for $\qr$ such that $V'(\body{\qr'}) \subseteq V(\body{\qr})$.
    \end{itemize}
\end{lemma}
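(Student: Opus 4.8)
\textbf{Proof plan for Lemma~\ref{lem:trans-sem-char}.}

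The plan is to prove both directions by exploiting Lemma~\ref{lem:pc}, which reduces parallel-correctness of a CQ to the meeting condition (C1) over its \im valuations. For the (if) direction, suppose (C2) holds and let $\distp$ be any distribution policy for which $\qr$ is parallel-correct. By Lemma~\ref{lem:pc} applied to $\qr$, condition (C1) holds for $\qr$: every \im valuation $V$ of $\qr$ has $\bigcap_{\fc\in V(\body{\qr})}\distp(\fc)\ne\emptyset$. Now take any \im valuation $V'$ for $\qr'$; by (C2) there is a \im valuation $V$ for $\qr$ with $V'(\body{\qr'})\subseteq V(\body{\qr})$. Since $V$ satisfies the meeting condition, there is a node $\node$ lying in $\distp(\fc)$ for all $\fc\in V(\body{\qr})$, hence in particular for all $\fc\in V'(\body{\qr'})$, so $\bigcap_{\fc\in V'(\body{\qr'})}\distp(\fc)\ne\emptyset$. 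Thus (C1) holds for $\qr'$, and Lemma~\ref{lem:pc} gives that $\qr'$ is parallel-correct under $\distp$. As $\distp$ was arbitrary, parallel-correctness transfers from $\qr$ to $\qr'$.

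For the (only-if) direction I would argue by contraposition: assume (C2) fails, so there is a \im valuation $V_0'$ for $\qr'$ such that for \emph{every} \im valuation $V$ for $\qr$ we have $V_0'(\body{\qr'})\not\subseteq V(\body{\qr})$. The goal is to build a distribution policy $\distp$ under which $\qr$ is parallel-correct but $\qr'$ is not. The natural construction is to take a two-node network $\nw=\{1,2\}$ (or more generally one node per fact needed, then merge), designate one fact $\fc_0\in V_0'(\body{\qr'})$ as a ``separator'', and let $\distp$ place $\fc_0$ only on node $1$ and every other fact on both nodes; more carefully, to kill $V_0'$ one wants the facts of $V_0'(\body{\qr'})$ to be scattered so that no single node receives all of them, while every \im valuation of $\qr$ still has all its facts together on some node. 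Concretely I would pick, for each \im valuation $V$ of $\qr$, the guarantee $V(\body{\qr})$ meets because it misses $\fc_0$ or because $\fc_0\in V(\body{\qr})$ but then by hypothesis $V_0'(\body{\qr'})\setminus V(\body{\qr})\ne\emptyset$ contains some other fact to separate $\fc_0$ from — this suggests a policy defined fact-by-fact relative to $V_0'(\body{\qr'})$. The cleanest route is: put all facts outside $V_0'(\body{\qr'})$ on both nodes, split the facts of $V_0'(\body{\qr'})$ arbitrarily into two nonempty groups, one per node; then $\qr'$ fails parallel-correctness on input $V_0'(\body{\qr'})$ since $V_0'$ is \im and its facts are split; and one checks via (C1) that $\qr$ stays parallel-correct, because any \im valuation $V$ of $\qr$ whose facts were split would have $V(\body{\qr})\subseteq V_0'(\body{\qr'})$ with both groups hit, but minimality of $V_0'$ together with $V(\body{\qr})\subsetneq V_0'(\body{\qr'})$ or the failure of (C2) must be leveraged to rule this out.

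The main obstacle is exactly this last verification: showing the constructed $\distp$ keeps $\qr$ parallel-correct. The subtlety is that an \im valuation $V$ of $\qr$ could have $V(\body{\qr})\subseteq V_0'(\body{\qr'})$ and hit both groups of the split, breaking (C1) for $\qr$. To handle this one must choose the split more cleverly — e.g., iterate over all \im valuations $V$ of $\qr$ with $V(\body{\qr})\subseteq V_0'(\body{\qr'})$; by the failure of (C2) each such $V$ omits some fact $\fc_V\in V_0'(\body{\qr'})$, and one should design $\distp$ so that for each such $V$ the omitted fact $\fc_V$ and some retained fact of $V$ land on different nodes, while still scattering $V_0'(\body{\qr'})$. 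Since there may be many such $V$'s, a two-node network might not suffice in general; I would instead use a network with one node per \im valuation of $\qr$ (plus bookkeeping), placing on the node for $V$ exactly the facts of $V(\body{\qr})$ together with all facts outside $V_0'(\body{\qr'})$, which guarantees (C1) for $\qr$ by construction, and then verify that no single node collects all of $V_0'(\body{\qr'})$ — which holds precisely because (C2) fails, i.e. no $V(\body{\qr})\supseteq V_0'(\body{\qr'})$. This node-per-valuation policy is the safe choice, and wrapping up the argument amounts to checking these two containments, which is routine once the policy is fixed.
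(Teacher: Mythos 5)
Your (if) direction is exactly the paper's argument: apply Lemma~\ref{lem:pc} to get (C1) for $\qr$, transport the meeting node along the inclusion $V'(\body{\qr'})\subseteq V(\body{\qr})$ supplied by (C2), and conclude (C1) for $\qr'$. Your (only-if) direction also lands on a sound idea — scatter $V_0'(\body{\qr'})$ so that it never fully assembles at any node while replicating all other facts everywhere, then use \imity of $V_0'$ to rule out deriving $V_0'(\head{\qr'})$ from a proper subset — and your final verification correctly isolates where $\neg$(C2) is used (no node indexed by a \im valuation $V$ of $\qr$ can collect all of $V_0'(\body{\qr'})$ precisely because $V_0'(\body{\qr'})\not\subseteq V(\body{\qr})$ for every such $V$).

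The one genuine defect is the network itself: ``one node per \im valuation of $\qr$'' is not a legal network, since there are infinitely many \im valuations over the infinite domain $\dom$ and networks are required to be finite. This is fixable — your policy only distinguishes which subset of $V_0'(\body{\qr'})$ a valuation requires, so you may quotient the \im valuations by the trace $V(\body{\qr})\cap V_0'(\body{\qr'})$ and take one node per class (at most $2^{|V_0'(\body{\qr'})|}$ nodes) — but ``plus bookkeeping'' does not discharge this, and as written the construction is not a distribution policy in the sense of Section~\ref{sec:defs}. The paper's construction sidesteps the issue by indexing nodes by the \emph{facts} of $I\mydef V_0'(\body{\qr'})=\{\fc_1,\dots,\fc_m\}$ rather than by valuations of $\qr$: node $\node_i$ receives every fact except $\fc_i$. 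Then any \im valuation $V$ of $\qr$ omits some $\fc_i$ (by $\neg$(C2)) and hence meets at $\node_i$, while no node sees all of $I$ — the same dichotomy you exploit, on an $m$-node network for free. (The paper also treats $m=1$ separately via a policy that skips the single fact; your scheme handles that case automatically since the fact then gets the empty node set, which the formalism permits.)
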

\begin{proof}[Proof (sketch)]
 (if) Let $\distp$ be a distribution policy under
  which $\qr$ is parallel-correct. Let $I$ be an instance and $\fc$ a
  fact in $\qr'(I)$. It suffices to show that there is some valuation
  $V'$ for $\qr'$ that produces $\fc$ at some node. Let $V'$ be any
  minimal valuation that yields \fc and let $V$ be the minimal
  valuation of \qr, guaranteed by (C2). Since \qr is parallel-correct
  under \distp and $V$ is minimal for \qr, it follows by (C1) that all facts
  from $V(\body{\qr})$ must meet at some node \node. Since $V'(\body{\qr'})
  \subseteq V(\body{\qr})$, all facts from $V'(\body{\qr'})$ are
  available  at node \node and thus \fc is produced.

 (only-if)  Towards a contradiction, we assume that there is a \inducedminimal valuation $V'$
    for $\qr'$, for which there is no valuation $V$ for $\qr$, where
    $V'(\body{\qr'})\subseteq V(\body{\qr})$. Let $m =
    |V'(\body{\qr'})|$. In this sketch we only consider the case $m\ge 2$.
Let $I\mydef V'(\body{\qr'}) = \{\fc_1, \ldots, \fc_m\}$, $\nw \mydef \{\node_1,\ldots,
\node_m\}$, and let \distp be the distribution policy defined by
\begin{itemize}
\item $\distp(\fcB) \mydef \nw$, for every $\fcB\in\facts{\sch}\setminus I$; and
\item $\distp(\fc_i) \mydef \nw \setminus \{\kappa_i\}$, for every $i$.
\end{itemize}
Intuitively, on every instance $J$, either the facts in $J$ meet on some node under $\distp$, or $I \subseteq J$. By assumption, none of the \inducedminimal valuations for \qr requires all the facts in $I$, implying that \qr is parallel-correct under \distp.
Nevertheless, on instance $I$ under $\distp$, none of the nodes
receives all the facts in $I$, and  (by \inducedminimal{}ity of $V'$)
there is no valuation that can derive $V'(\head{\qr'})$ for a strict subset of the facts in $I$. So, $\qr'$ is not parallel-correct under \distp, which leads to the desired contradiction.
\end{proof}
The above characterization allows us to pinpoint the complexity of parallel-correctness transfers. In particular,
 we consider the following problem:

 \probdefbodynew{\PCTRANS}{CQs $\qr$ and $\qr'$}{Does parallel-correctness transfer from $\qr$ to $\qr'$?}

\begin{theorem}
	\label{theo:trans-completeness}
    \PCTRANS is \polyh{3}-complete.
\end{theorem}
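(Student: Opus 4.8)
The plan is to derive the \polyh{3} upper bound directly from the semantic characterization in Lemma~\ref{lem:trans-sem-char}, and to prove \polyh{3}-hardness by a reduction from \threeTQBF.

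\emph{Upper bound.} By Lemma~\ref{lem:trans-sem-char}, \PCTRANS is exactly the problem of deciding condition (C2). First I would observe that all valuations that matter have polynomial-size representations: any valuation for a CQ with $k$ variables may be assumed to use at most $k$ data values, and for the existentially quantified valuation $V$ in (C2) one may moreover restrict its range to $\adom{V'(\body{\qr'})}$ together with $|\vars{\qr}|$ fresh values. Indeed, given a minimal $V$ for $\qr$ with $V'(\body{\qr'})\subseteq V(\body{\qr})$, renaming the values of $V$ outside $\adom{V'(\body{\qr'})}$ injectively to fresh values preserves the equality type of $V$ (hence its minimality, which depends only on that equality type) and preserves the inclusion (each fact of $V'(\body{\qr'})$ already lies in $V(\body{\qr})$ and mentions only values of $\adom{V'(\body{\qr'})}$, so it is fixed by the renaming). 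Now, by Proposition~\ref{prop:minimal-complexity}, ``$W$ is minimal for a given CQ'' is in \coNP, and ``$V'(\body{\qr'})\subseteq V(\body{\qr})$'' is decidable in polynomial time, so (C2), written as
\[
\forall V'\ \big(V'\ \text{not minimal for}\ \qr'\ \lor\ \exists V\,(V\ \text{minimal for}\ \qr\ \land\ V'(\body{\qr'})\subseteq V(\body{\qr}))\big),
\]
is a \polyh{3} predicate: ``$V'$ not minimal'' is in \NP, ``$\exists V(\ldots)$'' is in \copolyh{2} (an existential guess followed by a \coNP\ minimality test), the disjunction is in \copolyh{2}, and the leading universal quantifier over $V'$ turns this into \polyh{3}.

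\emph{Lower bound.} I would reduce from the \polyh{3}-complete problem \threeTQBF. Given $\Phi=\forall\vec x\,\exists\vec y\,\forall\vec z\;\phi$ with $\phi$ a 3-CNF in the variables $\vec x,\vec y,\vec z$, the goal is to build CQs $\qr$ and $\qr'$ over a common input and output schema such that parallel-correctness transfers from $\qr$ to $\qr'$ if and only if $\Phi$ holds. The schema carries a value relation, a binary ``negation'' relation --- populated, in every minimal valuation, by exactly the pairs $(\varf,\vart)$ and $(\vart,\varf)$ over two distinguished truth-value elements playing the roles of $0$ and $1$ --- and relations for checking the clauses of $\phi$. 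The three alternations of $\Phi$ are matched to the three (implicit) quantifier blocks of (C2): the outermost $\forall\vec x$ to the universal choice of a minimal valuation $V'$ for $\qr'$; the $\exists\vec y$ to the existential choice of a minimal valuation $V$ for $\qr$ (which, via the inclusion $V'(\body{\qr'})\subseteq V(\body{\qr})$, is forced to agree with $V'$ on the encoding of $\vec x$); and the innermost $\forall\vec z$ to the minimality of $V$, i.e.\ to the nonexistence of a $V''<_{\qr}V$. Concretely, $\qr'$ should be rigid everywhere except on the variables encoding $\vec x$, so that its minimal valuations correspond exactly to truth assignments to $\vec x$; and $\qr$ should contain a copy of $\body{\qr'}$ (to carry over the $\vec x$-assignment), variables encoding $\vec y$, and clause gadgets so that a valuation $V$ extending an $\vec x$-assignment and fixing a $\vec y$-assignment admits a strictly smaller $V''<_{\qr}V$ precisely when some assignment to $\vec z$ falsifies a clause of $\phi$ (the witness $V''$ folding away exactly the body atoms that only that $\vec z$-assignment needs). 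Unwinding the definitions then yields: $\Phi$ true $\iff$ for every minimal $V'$ of $\qr'$ there is a minimal $V$ of $\qr$ with $V'(\body{\qr'})\subseteq V(\body{\qr})$ $\iff$ (C2) $\iff$ transfer.

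\emph{Main obstacle.} The complexity bookkeeping is routine; the real work is the gadget design for the lower bound, and in particular making the minimality of $V$ for $\qr$ an \emph{exact} encoding of $\forall\vec z\,\phi$. One has to ensure simultaneously that (i) every $\vec z$-assignment falsifying $\phi$ does yield a genuine $V''<_{\qr}V$ (same head image, strictly smaller body image), (ii) no spurious $V''<_{\qr}V$ exists when $\phi(\vec x,\vec y,\cdot)$ is a tautology, (iii) $\qr'$ has no minimal valuations besides the intended $\vec x$-assignments, and (iv) the $\vec y$-variables stay ``free'' enough that both truth values remain available in some minimal valuation of $\qr$, without an unintended $\vec y$-choice making $V$ non-minimal for the wrong reason --- all while keeping $\qr$ and $\qr'$ over the same input and output schema. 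Balancing these requirements in a single construction is the delicate part; the full details are deferred to the appendix.
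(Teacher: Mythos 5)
Your upper bound is correct and is essentially the paper's: restrict all valuations to a polynomial-size domain by genericity, use the \coNP{} minimality test of Proposition~\ref{prop:minimal-complexity}, and count quantifiers in the characterization (C2) of Lemma~\ref{lem:trans-sem-char} to land in \polyh{3}.

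The lower bound, however, contains a genuine gap beyond the deferred gadget design. First, a concrete error: you take the matrix $\phi$ of the \threeTQBF{} instance in 3-CNF, but under a $\forall\vx\exists\vy\forall\vz$ prefix the innermost $\forall\vz$ distributes over the conjunction of clauses, and $\forall\vz\,C_j$ for a single 3-clause is checkable in polynomial time once $\vx,\vy$ are fixed; hence $\forall\exists\forall$-3-CNF lies in \phtwo{} and cannot be the source of a \phthree-hardness reduction. The correct normal form here is 3-DNF (Stockmeyer), which the paper uses. Second, your requirement (iii) --- that $\qr'$ have no minimal valuations besides the intended $\vx$-assignments --- is not achievable: the way to make every valuation of $\qr'$ minimal is to make $\qr'$ full, but then \emph{all} valuations are minimal, including ones mapping the $\vx$-variables to values encoding no truth value at all. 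The paper does not exclude these; it handles them in a separate case of the correctness proof (``foul'' variables and clauses), and that case analysis is a substantial part of the argument. Third, the mechanism producing $V''<_{\qr}V$ from a falsifying $\vz$-assignment is not ``folding away the atoms that only that $\vz$-assignment needs'': in the paper's construction a circuit of $\mathtt{And}$/$\mathtt{Or}$/$\mathtt{Neg}$ gate atoms evaluates $\psi$ into a single result variable $r_k$, and the strict decrease comes from one designated atom on $r_k$ collapsing with a companion atom on $\varf$ when $r_k$ is revalued to false; minimality of a 0-1-valued valuation is then shown to coincide with consistency of its circuit values. Making minimality of $V$ an \emph{exact} encoding of $\forall\vz\,\psi$ is the content of the proof, and none of it is present in your plan.
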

\begin{proof}[Proof (idea)]
  The upper bound again follows quite directly from the semantical
  characterization (Lemma \ref{lem:trans-sem-char}) and
Proposition \ref{prop:minimal-complexity}.  The lower bound is by a
reduction from the $\phthree$-complete
	$\threeTQBF$-problem. 	Let thus $\phi= 
		\forall \vx
		\exists \vy
		\forall \vz\,
			\psi(\vx,\vy,\vz)$
be a formula
	with a quantifier-free  propositional formula $\psi$ in 3-DNF over
	variables $\vx=(\seqx)$, $\vy=(\seqy)$, and
        $\vz=(\seqz)$. From $\phi$ we
        construct a CQ $\qr'$ with head $H(\seqx,\vart,\varf)$ and
        a CQ $\qr$ with head
        $H(\seqx,\seqy,\vart,\varf)$.
        We then show, in a nutshell, that falseness of $\phi$
        corresponds to the existence of a valuation $V'$ for  $\qr'$
        (inducing a truth assignment $\beta_\vx$ to $\tup x$), such
        that for every valuation $V$ for \qr (with induced
        truth assignment  $\beta_\vy$), with $V'(\body{\qr'}) \subseteq V(\body\qr)$, from any truth assignment
        $\beta_\vz$ with $(\beta_\vx \cup \beta_\vy \cup \beta_\vz) \not\models
             \psi$ a valuation $V^\ast$ is obtained with $V^\ast<_\qr
             V$, implying that $V$ is not minimal. As such a truth assignment
        $\beta_\vz$ exists (after the choice
        of $\beta_\vx$ and for arbitrary $\beta_\vy$) if and only if $\phi$ is  false, it can be
        shown that        (C3) holds if and only if $\phi$ is true.

A complete proof is given in Appendix~\ref{appendix:trans-compl}.
\end{proof}

It is an easy observation that, if we require each valuation of
$\qr$ to be minimal, then condition (C2) yields a better, \polyh{2},
complexity bound. Surprisingly, in this case, we even get a complexity
drop to \NP, as will be shown in Theorem \ref{theo:transfer-strong-complexity} below.  We next introduce the notions needed for this result.
\begin{definition}
  A conjunctive query $\qr$ is \emph{strongly minimal} if all its
  valuations are minimal.
\end{definition}

We give some examples illustrating this definition. In Lemma~\ref{lem:cond_st_minimal}, we present a sufficient condition for CQs to be strongly minimal.

\begin{example}
    For an example of a strongly minimal query, consider query $\qr_1$, 
    \begin{align*}
        T(x_1, x_2, x_2, x_4) \leftarrow R(x_1, x_2), R(x_2, x_3), R(x_3, x_4).
    \end{align*}
    Notice that, by fullness of $\qr_1$, there are no two distinct valuations for $\qr_1$
    that derive the same fact. Hence, every valuation of $\qr_1$ must indeed
    be minimal.

    For another example, consider the query $\qr_2$,
    \begin{align*}
        T() \leftarrow R_1(x_1, x_2), R_2(x_2, x_3), R_3(x_3, x_4).
    \end{align*}
    As each atom in the body of $\qr_2$ has a different relation symbol,
    each valuation of $\qr_2$ yields exactly three different facts and
    therefore, each valuation is minimal.\qed
\end{example}
It is easy to see that every strongly minimal CQ is also a minimal CQ, but the
converse is not true as witnessed by the query of
Example~\ref{ex:minimal_valuations}, which is minimal but not strongly minimal.

The following lemma now provides a characterization of parallel-correctness
transfer for strongly minimal queries.

    \begin{lemma}\label{lem:transfer-strong}
    Let $\qr'$ be a CQ and let $\qr$ be a strongly minimal CQ.
    Parallel-correctness transfers from $\qr$ to $\qr'$
    if and only if the following holds:
    \begin{itemize}
    \item[{(C3)}] there is a simplification $\theta$ for $\qr'$ and a substitution $\rho$ for $\qr$
        such that $\body{\theta(\qr')} \subseteq \body{\rho(\qr)}$.
    \end{itemize}
\end{lemma}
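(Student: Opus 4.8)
The plan is to derive Lemma~\ref{lem:transfer-strong} from the semantic characterisation in Lemma~\ref{lem:trans-sem-char}. Since every valuation of a strongly minimal CQ is \im, condition (C2) of that lemma simplifies, for our \qr, to the statement: for every \im valuation $V'$ of $\qr'$ there is \emph{some} valuation $V$ of \qr with $V'(\body{\qr'})\subseteq V(\body{\qr})$. It therefore suffices to show that this reformulated (C2) is equivalent to (C3), which I would do by proving the two implications separately; (C3)$\Rightarrow$(C2) is routine, while (C2)$\Rightarrow$(C3) is the substantial part.

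\emph{(C3)$\Rightarrow$(C2).} Assume $\theta$ is a simplification of $\qr'$ and $\rho$ a substitution of \qr with $\body{\theta(\qr')}\subseteq\body{\rho(\qr)}$, and let $V'$ be \animv for $\qr'$. First note that $V'\circ\theta$ is a valuation of $\qr'$ with $(V'\circ\theta)(\head{\qr'})=V'(\head{\qr'})$ (as $\theta$ is the identity on head variables) and $(V'\circ\theta)(\body{\qr'})=V'(\body{\theta(\qr')})\subseteq V'(\body{\qr'})$; hence $V'\circ\theta\le_{\qr'}V'$, and \imity of $V'$ forces $V'(\body{\theta(\qr')})=V'(\body{\qr'})$. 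Extend $V'$ to a valuation $\widehat{V'}$ on all variables in the image of $\rho$ and set $V\mydef\widehat{V'}\circ\rho$, a valuation of \qr. Then $V(\body{\qr})=\widehat{V'}(\body{\rho(\qr)})\supseteq\widehat{V'}(\body{\theta(\qr')})=V'(\body{\theta(\qr')})=V'(\body{\qr'})$, and since \qr is strongly minimal, $V$ is \im. Thus (C2) holds; note that this is the one place where strong minimality of \qr is used.

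\emph{(C2)$\Rightarrow$(C3).} The crux is to apply (C2) to a carefully chosen \im valuation of $\qr'$. By the Chandra--Merlin theorem~\cite{DBLP:conf/stoc/ChandraM77}, fix a simplification $\theta$ of $\qr'$ for which $\theta(\qr')$ is a \emph{minimal} CQ (the core of $\qr'$); let $f$ be an injective valuation of $\theta(\qr')$, extend it arbitrarily to a valuation $\widehat f$ of $\qr'$, and put $V'\mydef\widehat f\circ\theta$, so that $V'(\body{\qr'})=f(\body{\theta(\qr')})$ and $V'(\head{\qr'})=f(\head{\theta(\qr')})$. The first step is to check that $V'$ is \im for $\qr'$: if some $W'<_{\qr'}V'$, then the instance $W'(\body{\qr'})$ satisfies $\qr'$ with head fact $V'(\head{\qr'})$, hence by equivalence of $\qr'$ and $\theta(\qr')$ also satisfies $\theta(\qr')$ via a valuation $U$ with $U(\head{\theta(\qr')})=f(\head{\theta(\qr')})$ and $U(\body{\theta(\qr')})\subseteq W'(\body{\qr'})\subsetneq f(\body{\theta(\qr')})$; so $U<_{\theta(\qr')}f$, contradicting \imity of $f$, which holds by Lemma~\ref{lem:minimal} because $\theta(\qr')$ is minimal and $f$ is injective. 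Applying (C2) to $V'$ gives a valuation $V$ of \qr with $f(\body{\theta(\qr')})\subseteq V(\body{\qr})$. Finally, define a substitution $\rho$ on $\vars{\qr}$ by $\rho(w)\mydef f^{-1}(V(w))$ whenever $V(w)$ lies in the image of $f$ restricted to $\vars{\theta(\qr')}$ (which is well defined, that restriction being injective) and arbitrarily otherwise: for each atom $R(\vz)\in\body{\theta(\qr')}$ the fact $f(R(\vz))$ lies in $V(\body{\qr})$, so some atom $R(\vw)\in\body{\qr}$ satisfies $V(\vw)=f(\vz)$ componentwise, whence $\rho(\vw)=\vz$ and $R(\vz)=\rho(R(\vw))\in\body{\rho(\qr)}$. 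Hence $\body{\theta(\qr')}\subseteq\body{\rho(\qr)}$ and (C3) holds.

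I expect the main obstacle to be exactly this choice of $V'$ in the second implication. The naive approach of freezing $\qr'$ by an injective valuation of $\qr'$ itself fails, because $\qr'$ need not be minimal and then such a valuation need not be \im, so (C2) says nothing about it. Routing instead through the core $\theta(\qr')$ and using Lemma~\ref{lem:minimal} to certify \imity of the injective valuation $f$ is what makes the argument go through; once $V'$ is fixed, recovering $\rho$ is the straightforward matter of inverting the injective part of $V'$ along the inclusion $f(\body{\theta(\qr')})\subseteq V(\body{\qr})$.
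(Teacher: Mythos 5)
Your proof is correct and follows essentially the same route as the paper: both directions reduce (C2) to (C3) via strong minimality, and the key step of (C2)$\Rightarrow$(C3) picks the Chandra--Merlin core $\theta(\qr')$, freezes it with an injective valuation to obtain a \im valuation of $\qr'$, and then inverts the injective part to recover $\rho$. The only (harmless) variation is that you certify \imity of this valuation by invoking Lemma~\ref{lem:minimal} on the core, whereas the paper inlines an equivalent homomorphism-theorem argument.
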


 \begin{proof}
 
 We show that, for strongly minimal \qr, (C2) and (C3) are equivalent.

We first show that (C3) implies (C2). It suffices to show that if (C3)
 holds then for every minimal valuation $V'$ for $\qr'$, there is a
 valuation $V$ for \qr such that $V'(\body{\qr'}) \subseteq
 V(\body{\qr})$. By strong minimality of $Q$, we can then conclude that
 $V$ is actually minimal. 

Let $V'$ be a minimal valuation for $\qr'$ and let $\simp$
and $\rho$ be as in (C3). As $\simp$ is a simplification,
$\head{\simp(\qr')}=\head{\qr'}$ and
$\body{\simp(\qr')}\subseteq\body{\qr'}$. Therefore $(V'\circ\simp)$ is
also a valuation for $\qr'$ with $(V'\circ\simp)(\body{\qr'})=
V'(\body{\simp(\qr')})\subseteq V'(\body{\qr'})$ and by minimality of
$V'$ the latter inclusion is actually an equality. 

By (C3), $\body{\simp(\qr')} \subseteq \body{\rho(\qr)}$, therefore
$V'$ is a partial valuation for $\rho(\qr)$. Let $V''$ be some
arbitrarily chosen extension of $V'$ that is a (total) valuation for
$\rho(\qr)$.
Then,\\
$V'(\body{\qr'}) = V'(\body{\simp(\qr')}) =
V''(\body{\theta(\qr')})$\mynewr[2mm] 
$\subseteq
V''(\body{\rho(\qr)}) =(V''\circ\rho)(\body\qr).$\\

Thus, $V\mydef V''\circ\rho$ is the desired valuation for \qr.\\

We next show that (C2) implies (C3). Actually, this implication even
holds without the assumption that \qr is strongly minimal.
Let us therefore assume that (C2) holds. 
We choose $\theta$ as an arbitrary simplification that minimizes
$\qr'$.  Such a simplification can be found thanks to
\cite{DBLP:conf/stoc/ChandraM77}. In particular, $\theta(\qr')$ is a minimal CQ that is
equivalent to $\qr'$. 

Let $V'$ be an injective valuation for $\qr'$. We claim that $V'\circ
\theta$ is  a minimal valuation for $\qr'$. Towards a contradiction, let us assume
that there is a valuation $V''$ such that $V''<_{\qr'} V'\circ
\theta$. Since $\theta$ is the identity on the head variables, $V'$
is injective, and $V'$ and $V''$ agree on $\head{\qr}$, we can
conclude that $((V')^{-1}\circ
V'')(\head{\qr'})=\head{\qr'}$, thus $(V')^{-1}\circ
V''$ is a homomorphism from $\qr$ to $((V')^{-1}\circ
V'')(\qr)$. Furthermore, $((V')^{-1}\circ
V'')(\body{\qr})\subseteq \body{\theta(\qr)}\subseteq\body{\qr}$,
therefore the identity is a homomorphism from $((V')^{-1}\circ
V'')(\qr)$ to $\qr$. Together,
$((V')^{-1}\circ
V'')(\qr)$ is equivalent to $\qr$. Furthermore, $((V')^{-1}\circ
V'')(\body{\qr})\subsetneq
(V')^{-1}(V'(\body{\qr}))=\body{\theta(\qr)}$, contradicting the
minimality of $\theta$. We thus conclude that $V'\circ
\theta$ is indeed a minimal valuation for $\qr'$.

By (C2), there exists a minimal valuation
$V$ for $\qr$ such that $ V'(\body{\theta(\qr')}) = (V'\circ \theta)(\body{\qr'})  \subseteq
V(\body{\qr})$. 
Now, let $f$ be an extension of $(V')^{-1}$, which maps values that
occur in $V(\body{\qr})$ but not in $V'(\body{\qr'})$ in an arbitrary
fashion and let $\rho\mydef (f\circ V)$. 
Then, 
\begin{multline*}
    \body{\theta(\qr')} =  (V')^{-1}(V'(\body{\theta(\qr')}))\\
    =f(V'(\body{\theta(\qr')})) 
    = f((V' \circ \theta)(\body{\qr'})) \\
     \subseteq f(V(\body{\qr})) 
    = \rho(\body{\qr})
  = \body{\rho(\qr)}.
\end{multline*}
Thus, $\theta$ and $\rho$ witness condition (C3).
\end{proof}

\begin{theorem}\label{theo:transfer-strong-complexity}
    \PCTRANS restricted to inputs with strongly minimal  \qr is
    \NP-complete.
\end{theorem}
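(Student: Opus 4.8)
The plan is to read off both bounds from Lemma~\ref{lem:transfer-strong}: since \qr is strongly minimal, parallel-correctness transfers from \qr to \qr' exactly when condition (C3) holds, so it suffices to pin down the complexity of (C3). For \emph{membership in \NP} I would simply guess a witness: a simplification $\theta\colon\vars{\qr'}\to\vars{\qr'}$ for \qr' and a substitution $\rho$ for \qr, where for $\rho$ one may assume the range lies in $\vars{\qr'}\cup\{\ast\}$ for a single fresh symbol $\ast$ (every variable of \qr not forced by the inclusion $\body{\theta(\qr')}\subseteq\body{\rho(\qr)}$ may be sent to $\ast$). The witness then has polynomial size, and in polynomial time one checks that $\theta$ is a simplification ($\head{\theta(\qr')}=\head{\qr'}$ and $\body{\theta(\qr')}\subseteq\body{\qr'}$) and that $\body{\theta(\qr')}\subseteq\body{\rho(\qr)}$. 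On the promise that \qr is strongly minimal this already yields an \NP\ procedure; if one wants a total problem, strong minimality of \qr can itself be verified in \NP.

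For \emph{\NP-hardness} I would reduce graph $3$-colourability, producing a \emph{fixed} strongly minimal \qr, namely
\[
  \qr:\quad T(a,b,c)\;\gets\;\{E(x,y)\mid x,y\in\{a,b,c\},\,x\neq y\},\ \mathtt Z(w).
\]
This \qr is strongly minimal: under any valuation the six $E$-facts are determined by the images of the head variables $a,b,c$, while the single atom $\mathtt Z(w)$ fixes the image of $w$, so no valuation admits a strictly smaller one with the same head. Given a graph $H$ (w.l.o.g.\ without isolated vertices), set $H'\mydef H\uplus\Delta$ for a fresh disjoint triangle $\Delta$ on vertices $\delta_1,\delta_2,\delta_3$, and let
\[
  \qr':\quad T(z,z,z)\;\gets\;\{E(y_u,y_v),E(y_v,y_u)\mid\{u,v\}\in E(H')\},\ \mathtt Z(z),
\]
with a fresh variable $y_u$ for every vertex $u$ of $H'$ and a fresh $z$. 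Both queries are over input schema $\{E,\mathtt Z\}$ and output schema $\{T\}$ of arity $3$. The claim is that (C3) holds for $(\qr,\qr')$ iff $H$ is $3$-colourable, which by Lemma~\ref{lem:transfer-strong} proves \NP-hardness.

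To verify the claim: a simplification $\theta$ of \qr' must fix $z$ and send the $y$-variables among themselves, so it is induced by an endomorphism of $H'$; a substitution $\rho$ with $\body{\theta(\qr')}\subseteq\body{\rho(\qr)}$ must map $w$ to $z$ (to cover $\mathtt Z(z)$) and must cover the $E$-atoms of $\theta(\qr')$ — a symmetric, loop-free set of edges on $H'$-vertices — using only the atoms $E(\rho a,\rho b),E(\rho b,\rho a),\dots$ of $\rho(\qr)$. If two of $\rho a,\rho b,\rho c$ coincide, those atoms contain no symmetric non-loop edge, forcing the subgraph $\theta(H')$ of $H'$ to be bipartite — impossible, since $H'$ contains the triangle $\Delta$. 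Hence $\rho a,\rho b,\rho c$ are pairwise distinct, so $\theta$ maps $H'$ homomorphically into a triangle, i.e.\ $H'$ (equivalently $H$) is $3$-colourable. Conversely, from a proper $3$-colouring of $H'$ one builds a simplification $\theta$ sending each $y_u$ to $y_\delta$, where $\delta\in\{\delta_1,\delta_2,\delta_3\}$ is the colour vertex of $u$; this is a legitimate simplification precisely because $\Delta$ is a triangle (so every pair of colour vertices is an edge of $H'$), and then $\rho\colon a,b,c\mapsto y_{\delta_1},y_{\delta_2},y_{\delta_3}$, $w\mapsto z$, witnesses (C3).

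The only real obstacle is the lower bound, specifically arranging \qr to be strongly minimal without losing control over the witnessing $\rho$. Naive candidates fail: a Boolean triangle query, or a triangle query with a loose extra body variable, are \emph{not} strongly minimal, because a valuation collapsing two triangle vertices produces a self-loop and hence a strictly smaller valuation of equal head. Putting all triangle variables in the head and anchoring the auxiliary variable $w$ by a single atom fixes this. Dually, one must stop the simplification $\theta$ from trivialising; the disjoint triangle $\Delta$ simultaneously supplies the concrete colour vertices a simplification-based colouring needs and blocks the bipartite degeneracy used in the analysis above. Everything else is routine bookkeeping about simplifications and substitutions.
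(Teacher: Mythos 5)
Your proof is correct and follows essentially the same route as the paper: the upper bound is a guess-and-check of condition (C3) justified by Lemma~\ref{lem:transfer-strong}, and the lower bound is a reduction from graph $3$-colourability to (C3) with a strongly minimal colour-template query \qr, which is exactly the strategy behind Proposition~\ref{prop:cthree} (the paper pins $\rho$ to three distinct colours via a $\Fix(r,g,b)$ anchor and makes \qr full, whereas you pin it via the disjoint triangle $\Delta$ in $\qr'$ and a bipartiteness argument --- a difference of gadget, not of method). One aside is wrong, though harmless for the promise version of the problem the theorem states: strong minimality of \qr cannot ``be verified in \NP''; by Lemma~\ref{lem:comp_st_minimal} that problem is \coNP-complete, so drop that remark rather than rely on it.
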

\begin{proof}[Proof (sketch)]
  The upper bound follows from Lemma \ref{lem:transfer-strong} by the
  observation that condition (C3) can be checked by a straighforward
  \NP-algorithm. The lower bound follows from
  Proposition~\ref{prop:cthree} below.
\end{proof}

Theorem \ref{theo:transfer-strong-complexity} assumes that it is known
that $Q$ is strongly minimal. We complete the picture by investigating
the complexity of the problem to decide whether a CQ is strongly minimal.

We first give a lemma that generalizes the above examples into a sufficient (but not
necessary) condition for strong minimality.

 In particular, Lemma~\ref{lem:cond_st_minimal} implies that every
    full CQ and every CQ without self-joins is strongly minimal.
    We say that an atom in a CQ is a {\em self-join atom} when the relation name of that atom occurs more than once in $\qr$. For instance, in the query  $T() \leftarrow R(x_1,x_2), R(x_2, x_1)$ both $R(x_1,x_2)$ and $R(x_2,x_1)$ are self-join atoms.

\begin{lemma}\label{lem:cond_st_minimal} 
Let $\qr$ be a CQ. Then $\qr$ is
strongly minimal when the following condition holds: 
if a variable $x$ occurs at a position $i$ in some self-join atom
and not in the head of $\qr$, then all self-join atoms have $x$ at position $i$.
\end{lemma}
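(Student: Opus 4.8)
The plan is to show directly that under the stated condition, every valuation $V$ for $\qr$ is minimal. So suppose, toward a contradiction, that there is a valuation $V'$ with $V' <_\qr V$; that is, $V'(\head\qr) = V(\head\qr)$ and $V'(\body\qr) \subsetneq V(\body\qr)$. Since $V'(\body\qr)$ is a strict subset of $V(\body\qr)$, there must be some body atom $A = R(\tup y)$ such that $V'(A) \notin V(\body\qr)$; in particular $V'(A) \neq V(A)$, and moreover $V'(A) \neq V(A')$ for every body atom $A'$.

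First I would argue that the relation name $R$ of this atom $A$ occurs more than once in $\qr$, i.e.\ $A$ is a self-join atom. If $R$ occurred only once, then $V'(A)$ would be the unique $R$-fact in $V'(\body\qr) \subseteq V(\body\qr)$, hence equal to the unique $R$-fact $V(A)$ in $V(\body\qr)$, contradicting $V'(A) \neq V(A)$. So $A$ is a self-join atom. Next I would use the head condition to pin down the behaviour of $V$ and $V'$ on the variables of $A$. Write $\tup y = (y_1,\dots,y_\ell)$. For each position $i$: if $y_i$ occurs in $\head\qr$, then from $V'(\head\qr) = V(\head\qr)$ we get $V'(y_i) = V(y_i)$. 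If $y_i$ does \emph{not} occur in $\head\qr$, then since $y_i$ occurs at position $i$ of the self-join atom $A$, the hypothesis of the lemma says every self-join atom has $y_i$ at position $i$. The key point is then that, because $V'(A) \in V'(\body\qr) \subseteq V(\body\qr)$, the fact $V'(A)$ equals $V(A'')$ for some body atom $A''$ of $\qr$; and since $V'(A)$ is an $R$-fact, $A''$ is an $R$-atom, hence a self-join atom, hence has $y_i$ at position $i$. Comparing the $i$-th coordinates of the equal facts $V'(A) = V(A'')$ gives $V'(y_i) = V(y_i)$.

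Combining the two cases, $V'(y_i) = V(y_i)$ for \emph{every} position $i$, so $V'(A) = V(A)$ — contradicting our choice of $A$. This contradiction shows no such $V'$ exists, so $V$ is minimal; since $V$ was arbitrary, $\qr$ is strongly minimal. I would also note, to make the argument airtight, that the step ``$V'(A) = V(A'')$ for some $R$-atom $A''$'' already forces $A''$ to be a self-join atom as soon as $R$ occurs more than once in $\qr$, which it does since $A$ itself is a self-join atom with relation name $R$; if $R$ occurred only once then $A'' = A$ and we would be done immediately anyway.

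The main obstacle, and the place to be careful, is the bookkeeping in the non-head case: one must correctly observe that the atom $A''$ witnessing $V'(A) \in V(\body\qr)$ need not be $A$ itself, but whatever $R$-atom it is, it is still a self-join atom of $\qr$ and therefore still carries $y_i$ at position $i$ by hypothesis — this is exactly the reason the condition is phrased as ``\emph{all} self-join atoms have $x$ at position $i$'' rather than just the one atom under consideration. Everything else is routine. Finally, to get the ``in particular'' consequences: a full CQ has no non-head variables at all, so the condition holds vacuously; and a CQ without self-joins has no self-join atoms, so again the condition holds vacuously.
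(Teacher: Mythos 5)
Your overall strategy is sound and, once repaired, gives a correct proof by a genuinely different route than the paper's, but as written there is a real logical slip at the very first step. From $V'(\body{\qr}) \subsetneq V(\body{\qr})$ you claim there is a body atom $A$ with $V'(A) \notin V(\body{\qr})$. This is impossible: the inclusion $V'(\body{\qr}) \subseteq V(\body{\qr})$ forces $V'(A) \in V(\body{\qr})$ for \emph{every} body atom $A$. What strictness actually yields is an atom $A$ with $V(A) \notin V'(\body{\qr})$ --- you have the two valuations the wrong way around. The proof as literally written is therefore incoherent: you select $A$ by a property that cannot hold, and two sentences later you explicitly invoke its negation (``because $V'(A) \in V'(\body{\qr}) \subseteq V(\body{\qr})$ \dots'') in order to produce the atom $A''$; the closing ``contradicting our choice of $A$'' thus contradicts a premise that was vacuous from the start.

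The good news is that the substantive core of your argument --- the head/non-head case split over the positions of $A$, and in particular the observation that the atom $A''$ witnessing $V'(A) \in V(\body{\qr})$ is itself an $R$-atom, hence a self-join atom, hence carries $y_i$ at position $i$ --- is correct and does not depend on the bogus choice of $A$ at all. It shows $V'(A) = V(A)$ for \emph{every} body atom $A$, which gives $V'(\body{\qr}) = V(\body{\qr})$ and contradicts strictness directly; alternatively, run the same argument on an atom $A$ with $V(A) \notin V'(\body{\qr})$, whose existence strictness does guarantee, and contradict $V(A)\neq V'(A)$. Either repair is one line. For comparison, the paper argues by contraposition: from a non-minimal valuation it extracts two atoms $A_1, A_2$ over the same relation name that collapse under $V'$ but not under $V$, locates a position $j$ with $V(x_j) \neq V(y_j)$, and observes that at least one of $x_j, y_j$ must then be a non-head variable, violating the condition. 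Your direct, position-by-position argument is equivalent in substance but makes more transparent why the condition must quantify over \emph{all} self-join atoms.
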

\begin{proof}[Proof (sketch)]
The proof is by contraposition, i.e., we show that if there is a valuation for $\qr$ which is not minimal then the condition is not satisfied.
To this end, let $V$ and $V'$ be valuations for $\qr$ which agree on the head-variables and where $V'(\body{\qr}) \subsetneq V(\body{\qr})$. 

Then, there are at least two atoms $A_1=R(x_1,\dots,x_k)$
and $A_2=R(y_1,\dots,y_k)$ in the body of $\qr$ that collapse under $V'$, but not under $V$. 
That is, $V'(A_1)=V'(A_2)$ and $V(A_1)\neq V(A_2)$.
So, under $V'$ all the variables in $A_1$ and $A_2$ on matching positions must be mapped on the same constant, $V'(x_i)=V'(y_i)$ for each $i \in \{1,\dots,k\}$, while for $V$ there is a position~$j \in \{1,\dots,k\}$ where this is not the case, $V(x_{j}) \neq V(y_{j})$. Obviously, at
least one of these variables must then be a non-head variable. So, either only $x_{j}$ is a head variable, or only $y_{j}$ is a head
variable, or both are distinct non-head variables. In both cases the condition is not
satisfied.
\end{proof}

\begin{example} For an example of a strongly minimal CQ that does not satisfy the
    condition in Lemma~\ref{lem:cond_st_minimal}, consider
    query $\qr_3$, 
    \begin{align*}
        T() \leftarrow R(x_1,x_2), R(x_2, x_1).
    \end{align*}
    Notice that $\qr_3$ is indeed strongly minimal, because every valuation for $\qr_3$
    either maps $x_1$ and $x_2$ on the same value, and thus requires only one
    {fact} 
    where both
    values are equal, or maps $x_1$ and $x_2$ onto two distinct values, and thus requires
    exactly two {facts} 
    where both
    values are distinct. 
\end{example}

Finally, we establish the complexity of deciding strong minimality.

\begin{lemma}
	\label{lem:comp_st_minimal}
    Deciding whether a CQ is strongly minimal is \coNP-complete.
\end{lemma}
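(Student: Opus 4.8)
The plan is to prove the two directions separately. For the upper bound (membership in \coNP), I would use the definition directly: a CQ $\qr$ is \emph{not} strongly minimal iff there exist two valuations $V, V'$ for $\qr$ with $V' <_\qr V$. A nondeterministic machine can guess two such valuations (whose ranges may be taken within a set of data values of size at most $|\vars{\qr}|$, so they have polynomial size), and then verify in polynomial time that $V'(\head\qr) = V(\head\qr)$ and $V'(\body\qr) \subsetneq V(\body\qr)$; this last check is polynomial by the remark following Proposition~\ref{prop:minimal-complexity}. Hence non-strong-minimality is in \NP, so strong minimality is in \coNP.

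For the lower bound (\coNP-hardness) I would reduce from the complement of \coNP-hardness of CQ minimality, i.e., I would show non-strong-minimality is \NP-hard, or equivalently reduce CQ minimality to strong minimality. The cleanest route is to reduce from CQ \emph{minimality}, which is \coNP-complete (by \cite{HellN92}, as already invoked in the proof sketch of Proposition~\ref{prop:minimal-complexity}). Given a CQ $\qr$, I want to build in polynomial time a CQ $\qr^\dagger$ that is strongly minimal iff $\qr$ is minimal. The natural idea, in the spirit of Lemma~\ref{lem:minimal}, is to make the ``relevant'' valuations of $\qr^\dagger$ behave like injective valuations of $\qr$. One concrete attempt: take $\qr^\dagger$ to have the same body as $\qr$ but with \emph{all} body variables placed in the head (so $\qr^\dagger$ is full) --- but that makes $\qr^\dagger$ trivially strongly minimal, which is wrong. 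So instead I would keep the head of $\qr$ as is but add machinery forcing distinctness only where it matters. A workable construction: let $\qr^\dagger$ have head $T(\tup x)$ (same as $\qr$), body equal to $\body\qr$ together with, for the non-head variables $z_1,\dots,z_\ell$ of $\qr$, atoms over a fresh relation $\mathpredicate{D}$ of arity $\ell$, e.g.\ $D(z_1,\dots,z_\ell)$ and permuted copies $D(z_{\sigma(1)},\dots,z_{\sigma(\ell)})$ chosen so that the only simplifications of this gadget are the identity --- forcing the $z_i$ to be pairwise distinct and distinct from head values in any minimal valuation. Then a valuation of $\qr^\dagger$ is minimal essentially iff the corresponding injective-on-non-head-part valuation of $\qr$ is minimal, and by (the argument behind) Lemma~\ref{lem:minimal} this happens for all such valuations iff $\qr$ is minimal. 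I would verify that non-injective or non-distinct-respecting valuations of $\qr^\dagger$ are automatically minimal (because collapsing any $z_i=z_j$ already kills the $\mathpredicate{D}$-gadget and produces an incomparable, not smaller, fact set), so they do not interfere.

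The main obstacle I expect is engineering the $\mathpredicate{D}$-gadget so that (i) it is rigid (its only self-simplification is the identity), (ii) it does not create new minimality-collapses of its own, and (iii) forcing the non-head variables to be distinct is exactly what is needed so that valuation minimality of $\qr^\dagger$ matches CQ minimality of $\qr$ via Lemma~\ref{lem:minimal}. Point (iii) is delicate because Lemma~\ref{lem:minimal} is stated only for \emph{injective} valuations, and here the head variables of $\qr$ may already be forced (by the head) to take specified values while only the non-head variables range freely; I would need a small extension of the Lemma~\ref{lem:minimal} argument covering valuations that are injective on non-head variables and map them outside the head's value set. If this extension proves awkward, an alternative is to reduce instead from the known \coNP-completeness of testing whether a given valuation is minimal (Proposition~\ref{prop:minimal-complexity}): pad $\qr$ with a rigid gadget that ``pins down'' one distinguished valuation $V$ so that $\qr^\dagger$ has, up to the rigid part, essentially that one valuation as its only non-trivial one, making $\qr^\dagger$ strongly minimal iff $V$ is minimal for $\qr$. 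Either way the hardness core is the same classical collapsing phenomenon; the bookkeeping to isolate the right family of valuations is where the work lies.
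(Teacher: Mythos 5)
Your upper bound is exactly the paper's: guess two valuations $V,V'$ and verify $V'<_{\qr}V$ in polynomial time, so the complement is in \NP.

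The lower bound, however, has a genuine gap, and the paper does something quite different (a direct reduction from \ThreeSat to the complement of strong minimality, in which the constructed query has only two non-head variables $r_0,r_1$, atoms $\Dom(r_0,r_1),\Dom(r_1,r_0)$ permit a competing valuation only to \emph{swap} them, and the swap strictly shrinks the required facts exactly when the input formula is satisfiable). The problem with your reduction from CQ minimality is that the distinctness gadget destroys the very witnesses it is meant to expose. A folding that witnesses non-minimality of $\qr$ must fix the head atom, so it necessarily identifies some non-head variable with another variable; but your rigid $D$-gadget over the non-head variables forbids exactly such identifications at the level of valuations. Concretely, if the gadget is the single atom $D(z_1,\dots,z_\ell)$, then for \emph{any} valuation $V$ of $\qr^\dagger$ and any $V'$ with $V'(\body{\qr^\dagger})\subseteq V(\body{\qr^\dagger})$ and the same head, the unique $D$-fact forces $V'(z_i)=V(z_i)$ for all $i$, hence $V'=V$; so $\qr^\dagger$ is strongly minimal for \emph{every} input $\qr$, and the reduction is vacuous. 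With several permuted copies chosen to make the gadget rigid, the same conclusion holds for every valuation that is injective on $z_1,\dots,z_\ell$, which already breaks the needed direction: for $\qr\colon T()\gets R(x,y),R(y,y)$ (not minimal, fold $x\mapsto y$), the query $\qr^\dagger\colon T()\gets R(x,y),R(y,y),D(x,y)$ is strongly minimal, because the collapsing valuation requires the fact $D(b,b)$, which is not among the facts of an injective valuation. The obstruction is structural rather than bookkeeping: making the non-head variables behave injectively (so that Lemma~\ref{lem:minimal} applies) and preserving the collapsing valuations that witness non-minimality are mutually exclusive; and dropping the gadget does not help either, since minimality and strong minimality are genuinely inequivalent (the query of Example~\ref{ex:minimal_valuations} is minimal but not strongly minimal). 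Your fallback of pinning down a single distinguished valuation runs into the same wall. A correct hardness proof needs a gadget that leaves a competing valuation some controlled, non-injective freedom, which is precisely what the paper's $r_0/r_1$-swap construction provides.
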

\begin{proof}[Proof (sketch)]
	The complement problem is easily seen to be in $\NP$: for two guessed
	valuations $\minVal,V$ (encoded in length polynomial of the query~$\qr$) it can
	be checked in polynomial time whether $\minVal<_\qr V$.

    The lower bound is proved in Appendix~\ref{ssec:Lemma_comp_st_minimal}.\
\end{proof}



\section{Families of Distribution Policies}
\label{sec:families}

Parallel-correctness transfer can be seen as a generalization
of parallel-correctness. In both cases, the goal is to decide whether a query
can be correctly evaluated by evaluating it locally at each node. However,
for parallel-correctness \emph{transfer}, the question whether $\qr'$ is
parallel-correct is not asked for a
particular distribution policy but for the \emph{family} of those distribution
policies, for which $\qr$ is parallel-correct.\footnote{A {family} of distribution policies is just a set of distribution
policies. }

In this section, we study the parallel-correctness problem for
other kinds of families of distribution policies that can be
associated with a given query $\qr$. In Section~\ref{sec:pc:sec5}, we will identify classes of families of policies, for which (C3) characterizes parallel-correctness.
For these classes we conclude that it is \NP-complete to decide, whether for the family
$\cal F$ of policies associated with some given CQ \qr, a
CQ $\qr'$ is parallel-correct for all distributions from $\cal F$.  
In Section~\ref{sec:declarative}, we
will see that this, in particular, holds for the
families of distribution policies related to the practical Hypercube algorithm, that was previously investigated in several works~\cite{AfratiUllman10,DBLP:conf/pods/BeameKS13,DBLP:conf/pods/BeameKS14,DBLP:conf/sigmod/GangulyST90,DBLP:journals/jlp/GangulyST92}. In fact, we even show that this holds for a more general class of distribution policies specified in a declarative formalism.

\subsection{Parallel-correctness}
\label{sec:pc:sec5}

We start with the following definition:  
\begin{definition} \label{def:family}
A query \qr is {\em parallel-correct for a family\/}
$\cal F$ of distribution policies if it is parallel-correct {under} every
{distribution} policy from $\cal F$.
\end{definition}

We call a distribution policy \distp \emph{$\qr$-generous} for a CQ $\qr$, if,
for every valuation $V$ for $Q$, there is a node $\node$ that contains
all facts from $V(\body{\qr})$. A family of distribution policies $\cal F$
is $\qr$-generous if every policy in $\cal F$ is.
For an instance $I$, a distribution policy \distp
is called \emph{$(\qr,I)$-scattered} if for each node $\node$ there is a valuation $V$ for $\qr$, such that $\dist{I}(\kappa)\subseteq V(\body{\qr})$.
We then say that a family $\cal F$ of distribution policies
is $\qr$-scattered if $\cal F$ contains a $(\qr,I)$-scattered policy
for every $I$. A $(\qr,I)$-scattered policy that is also
$\qr$-generous yields the finest possible partition of the facts of
$I$ and thus, intuitively, scatters them as much as possible.

    \begin{lemma}\label{lem:transfer-families}
Let \qr be a CQ and let
 $\cal F$ be a family of distribution policies that is
 \qr-generous and \qr-scattered.
Then for every  CQ $\qr'$, $\qr'$ is parallel correct
for  $\cal F$      if and only if:
    \begin{itemize}
    \item[{(C3)}]there is a simplification $\theta$ for $\qr'$ and a substitution $\rho$ for $\qr$
        such that $\body{\theta(\qr')} \subseteq \body{\rho(\qr)}$.
    \end{itemize}
\end{lemma}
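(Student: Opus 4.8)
The plan is to prove Lemma~\ref{lem:transfer-families} by reusing the equivalence between (C2) and (C3) already established (for strongly minimal \qr, but the relevant direction is stated to hold unconditionally), and then showing directly that, for a \qr-generous and \qr-scattered family $\cal F$, parallel-correctness of $\qr'$ for $\cal F$ is equivalent to condition~(C2) of Lemma~\ref{lem:trans-sem-char}. Combining these two equivalences yields the claim. So the two halves of the argument are: (i) $\qr'$ is parallel-correct for $\cal F$ $\iff$ (C2); and (ii) (C2) $\iff$ (C3).

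For half (ii), I would simply invoke the proof of Lemma~\ref{lem:transfer-strong}: the implication (C2)$\Rightarrow$(C3) is shown there \emph{without} the strong-minimality assumption, and (C3)$\Rightarrow$(C2) needs only that every valuation $V$ witnessing the inclusion $V'(\body{\qr'})\subseteq V(\body{\qr})$ can be taken minimal. Here I cannot appeal to strong minimality, but I do not need to: the weaker statement ``for every minimal $V'$ there is \emph{some} valuation $V$ (not necessarily minimal) for $\qr$ with $V'(\body{\qr'})\subseteq V(\body{\qr})$'' is exactly what half (i) will connect to, because \qr-generosity makes the facts of \emph{every} valuation of $\qr$ meet at a node. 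So in fact it is cleanest to restate the target as an equivalence with the ``non-minimal'' variant of (C2), call it (C2$'$): for every \im valuation $V'$ for $\qr'$ there is a valuation $V$ for $\qr$ with $V'(\body{\qr'})\subseteq V(\body{\qr})$. The proof of Lemma~\ref{lem:transfer-strong} already shows (C2$'$)$\iff$(C3) — the forward direction is the unconditional argument given there, and the backward direction produces $V$ that need not be minimal. So it remains only to prove half (i) with (C2$'$) in place of (C2).

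For half (i), the ``if'' direction: assume (C2$'$). Let $\distp\in\cal F$ and let $I$ be an instance, $\fc\in\qr'(I)$. Pick a minimal satisfying valuation $V'$ for $\qr'$ on $I$ that derives $\fc$; by (C2$'$) there is a valuation $V$ for $\qr$ with $V'(\body{\qr'})\subseteq V(\body{\qr})$. Since $\distp$ is \qr-generous, there is a node $\node$ holding all of $V(\body{\qr})$, hence all of $V'(\body{\qr'})$; but $V'(\body{\qr'})\subseteq I$, so these facts are actually in $\dist{I}(\node)$, and $\fc\in\qr'(\dist{I}(\node))$. Monotonicity gives the reverse inclusion, so $\qr'$ is parallel-correct under $\distp$; as $\distp$ was arbitrary, $\qr'$ is parallel-correct for $\cal F$. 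For the ``only if'' direction I argue by contraposition: suppose (C2$'$) fails, witnessed by a \im valuation $V'$ for $\qr'$ such that no valuation $V$ for $\qr$ has $V'(\body{\qr'})\subseteq V(\body{\qr})$. Put $I\mydef V'(\body{\qr'})$ and take a $(\qr,I)$-scattered policy $\distp\in\cal F$ (which exists by \qr-scatteredness). For each node $\node$, $\dist{I}(\node)\subseteq V_\node(\body{\qr})$ for some valuation $V_\node$ of $\qr$; since $I\not\subseteq V_\node(\body{\qr})$ (else $V_\node$ would witness (C2$'$)), the chunk $\dist{I}(\node)$ is a \emph{strict} subset of $I$. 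By \im ity of $V'$ no valuation of $\qr'$ deriving $V'(\head{\qr'})$ is satisfied by a strict subset of $I$, so $V'(\head{\qr'})\notin\bigcup_\node\qr'(\dist{I}(\node))$, while $V'(\head{\qr'})\in\qr'(I)$. Hence $\qr'$ is not parallel-correct under $\distp\in\cal F$, so not parallel-correct for $\cal F$.

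I expect the main subtlety — more a bookkeeping point than a genuine obstacle — to be the interplay between ``minimal'' and ``arbitrary'' valuations across the two equivalences: the cleanest route is to route everything through (C2$'$) and only invoke minimality of $V'$ (on the $\qr'$ side, which is unproblematic) and minimality of $V'$ on the $\qr$ side not at all, letting \qr-generosity absorb the need for the $\qr$-witness to be well-behaved. One should also double-check the edge case $|V'(\body{\qr'})|=1$: if $I$ is a single fact then ``strict subset of $I$'' means the empty set, and a scattered policy may route that fact to some node, in which case that node \emph{does} hold all of $I$; but then $\{x\mapsto \text{that fact's values pattern}\}$ would give a valuation of $\qr$ containing $I$ — unless $\qr$ has no such valuation at all, in which case no node can legally be assigned the fact under a \qr-generous reading, or $\distp$ maps it nowhere. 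This case is handled correctly by the argument above because scatteredness is stated per node via ``$\dist{I}(\node)\subseteq V(\body{\qr})$ for some $V$,'' so either the fact sits at a node together with a containing valuation (contradicting the failure of (C2$'$)) or it sits at no node, and in the latter situation $\qr'(\dist{I}(\node))$ is empty at every node while $\qr'(I)\ni V'(\head{\qr'})$. A full write-up should spell this out, but it requires no new idea.
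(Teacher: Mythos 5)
Your proof is correct, but it takes a different route from the paper's. You factor the lemma into two equivalences: (a) parallel-correctness for a \qr-generous, \qr-scattered family is equivalent to the weakened condition (C2$'$) (every \im valuation $V'$ of $\qr'$ has \emph{some}, not necessarily \im, containing valuation $V$ of $\qr$), and (b) (C2$'$) is equivalent to (C3), extracted from the proof of Lemma~\ref{lem:transfer-strong} after observing that strong minimality enters that proof only to upgrade the witness $V$ to a \im one. Your reading of that proof is accurate on both directions, and your direct arguments for (a) are sound, including the treatment of the one-fact edge case. The paper instead proves both directions against (C3) directly: in the (if) direction it applies the simplification $\theta$ from (C3) to an arbitrary deriving valuation (so it never needs to pass to a \im valuation, whereas you do — harmlessly, via the standard existence argument from Lemma~\ref{lem:pc}); in the (only-if) direction it does not argue by contradiction but constructively reads off $\theta\mydef (V')^{-1}\circ W'$ from the valuation $W'$ that produces the head fact at some node, and $\rho\mydef g\circ V$ from the scattered valuation $V$ covering that node's chunk. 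Your route re-derives $\theta$ instead from a Chandra--Merlin minimizing simplification of $\qr'$ inside the reused (C2$'$)$\Rightarrow$(C3) argument. What your modular approach buys is the reusable equivalence (C2$'$)$\iff$(C3) and a clean separation of the roles of generosity (absorbing non-minimality of the $\qr$-witness) and scatteredness (forcing strict chunks); what the paper's direct approach buys is a self-contained proof that exhibits $\theta$ and $\rho$ explicitly from the policy's behavior and avoids invoking query minimization.
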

We emphasize  that Lemma \ref{lem:transfer-families} uses the same condition (C3) as Lemma \ref{lem:transfer-strong}.

\begin{proof}[Proof (sketch)]
  (if) Let $I$ be a database
  for $\qr'$, \distp a distribution policy from $\cal F$, and let
  $\theta$ and $\rho$ be as guaranteed by (C3). We show that each fact
  from $\qr'(I)$ is produced at some node.
Let
  $V'$ be a valuation that yields some fact $\fhead\mydef
  V'(\head{\qr'})$ and let $V''$ be an arbitrary extension of $V'$
  for $\rho(\qr)$. As $\theta$ is a simplification, $(V'\circ\theta)$
  also yields the fact $\fhead$.
By (C3) we get $(V'\circ
  \theta)(\body{\qr'})=V'(\body{\theta(\qr')})\subseteq
  V''(\body{\rho(\qr)})=(V''\circ\rho)(\body{\qr})$. As \distp is $\qr$-generous, there is some node
  \node that has all facts from $(V''\circ\rho)(\body{\qr})$ and
  therefore all facts from $(V'\circ\theta)(\body{\qr'})$, and thus \fhead is
  produced at \node.

\smallskip
(only-if)
   Suppose $\qr'$ is parallel-correct under all distribution policies
   in $\cal F$. Let $V'$ be some injective valuation for
   $\qr'$. Denote $I\mydef V'(\body{\qr'})$ and $\fhead\mydef V'(\head{\qr'})$.
   Let \distp be some $(\qr,I)$-scattered distribution
   policy from $\cal F$. 
    Because $\qr'$ is parallel-correct under $\distp$, there must be a
    node \node that outputs $\fhead$ when $I$ is distributed according
    to $\distp$. Therefore, there is a valuation $W'$ for $\qr'$ such
    that \node contains all facts from
    $W'(\body{\qr'})$ and $W'(\head{\qr'})=\fhead$. 
 We claim that $\theta\mydef (V')^{-1}\circ W'$ is a simplification of
 $\qr'$. Indeed, this substitution is well-defined thanks to 
the injectivity of $V'$ and furtermore
$((V')^{-1}\circ W')(\head{\qr'})=\head{\qr'}$ and $((V')^{-1}\circ
W')(\body{\qr'})\subseteq\body{\qr'}$, as $W'(\body{\qr'})\subseteq
I=V'(\body{\qr'})$ and $(V')^{-1}$ maps $I$ back to $\body{\qr'}$.

    {As $\distp$ is $(\qr,I)$-scattered, there is a valuation $V$
      such that $\dist{I}(\kappa)\subseteq V(\body{\qr})$.}
    {Then, let $g$ be some mapping from $\img{
    V}$ to \uvar such that for all $d\in \img{W'}$,
    $g(d)=g'(d)$.} 
We define the renaming $\rho\mydef g \circ V$ and show that with these
choices, $\body{\theta(\qr')}\subseteq\body{\rho(\qr)}$, and thus (C3)
holds.

Let $R(x_1,\ldots,x_k)\in
\body{\theta(\qr')}$. Then, there is an atom $R(y_1,\ldots,y_k)\in \body{\qr'}$
with $W'(R(\bar y))\in \dist{I}(\kappa)$ and, for each $i$,  $x_i=(V')^{-1}(W'(y_i))$.
So, as $\dist{I}(\kappa)\subseteq V(\body{\qr})$,
$W'(R(\bar y))\in V(\body{\qr})$ and there is some atom $R(z_1,\ldots,z_k)\in \body{\qr}$ such that $W'(R(\bar y)) = V(R(\bar z))$. 
Clearly, $W'(y_i)=V(z_i)$ for all $i$. By definition of $g$,
it then follows that $x_i=(V')^{-1} (W'(y_i))=g(V(z_i))$ for all $i$.
Therefore, $R(x_1,\ldots,x_k)\in
\body{\rho(\qr)}$, as desired.  
\end{proof}

\begin{theorem}
It is \NP-complete to decide, for given CQs $\qr$ and $\qr'$, whether $\qr'$ is
parallel-correct for \qr-generous 
and $\qr$-scattered  families of distribution policies.
    \label{theo:families-complexity}
\end{theorem}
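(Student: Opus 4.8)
The plan is to prove Theorem~\ref{theo:families-complexity} by combining the semantic characterization in Lemma~\ref{lem:transfer-families} with complexity bounds for condition (C3). The key observation is that, for \qr-generous and \qr-scattered families, Lemma~\ref{lem:transfer-families} tells us that ``$\qr'$ is parallel-correct for $\cal F$'' is equivalent to condition (C3): there is a simplification $\theta$ for $\qr'$ and a substitution $\rho$ for $\qr$ such that $\body{\theta(\qr')}\subseteq\body{\rho(\qr)}$. Since (C3) is a purely syntactic statement about the two queries (it does not mention $\cal F$ at all), the decision problem reduces to checking (C3), and the task splits cleanly into an upper bound and a lower bound for (C3).

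For the upper bound, I would argue that (C3) is in \NP. An \NP-algorithm guesses the two substitutions $\theta:\vars{\qr'}\to\vars{\qr'}$ and $\rho:\vars{\qr}\to\uvar$ (more precisely, $\rho$ need only map $\vars{\qr}$ into the finitely many variables appearing in $\qr'$, since its image only needs to cover $\body{\theta(\qr')}$; any variable of $\qr$ not so constrained can be sent to a fixed fresh variable). Both substitutions have polynomial-size descriptions. Then one verifies in polynomial time that $\head{\theta(\qr')}=\head{\qr'}$, that $\body{\theta(\qr')}\subseteq\body{\qr'}$ (so $\theta$ is indeed a simplification), and that $\body{\theta(\qr')}\subseteq\body{\rho(\qr)}$ — each of these is a containment check on explicitly computable sets of atoms. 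Hence the whole problem is in \NP.

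For the lower bound, I would reduce from a known \NP-hard problem about conjunctive queries — the natural candidate is \textsc{CQ-containment} (or equivalently checking existence of a homomorphism between two CQs), which is \NP-complete by Chandra–Merlin~\cite{DBLP:conf/stoc/ChandraM77}. The statement of the theorem says ``The lower bound follows from Proposition~\ref{prop:cthree} below'', so I would cite that proposition, which presumably establishes \NP-hardness of deciding (C3); an alternative self-contained route is to take $\qr'$ with a body already in a ``rigid'' form admitting only the identity simplification (so that (C3) collapses to: there is a substitution $\rho$ with $\body{\qr'}\subseteq\body{\rho(\qr)}$, i.e., a homomorphism from $\qr$ into $\qr'$ whose image contains $\body{\qr'}$), and encode an arbitrary homomorphism-existence instance into this shape. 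Genericity constraints on the head (both queries must share input and output schema) are easily met by giving both the same head atom over all shared variables.

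The main obstacle I anticipate is the lower bound bookkeeping rather than any conceptual difficulty: one must arrange $\qr'$ so that its only simplification is the identity (otherwise the guessed $\theta$ gives extra slack that could make (C3) hold spuriously), and simultaneously ensure that the ``extra'' head variables $\vart,\varf$ and any self-join structure do not interfere — this is exactly the kind of careful gadgeteering the paper defers to an appendix via Proposition~\ref{prop:cthree}. Given that Lemma~\ref{lem:transfer-families} is already proved, the upper bound is essentially immediate, so the proof sketch I would write is: ``By Lemma~\ref{lem:transfer-families}, the problem is equivalent to deciding (C3). The upper bound is the straightforward \NP-algorithm that guesses $\theta$ and $\rho$ and checks the two containments in polynomial time. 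The lower bound follows from Proposition~\ref{prop:cthree}.''
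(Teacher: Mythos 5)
Your proposal is correct and follows essentially the same route as the paper: both invoke Lemma~\ref{lem:transfer-families} to reduce the problem to deciding condition (C3), obtain the upper bound by guessing $\theta$ and $\rho$ and checking the containments in polynomial time, and delegate the lower bound to Proposition~\ref{prop:cthree} (which the paper proves via reductions from graph $3$-colorability, in the spirit of the Chandra--Merlin hardness argument you anticipate). Your added remarks on bounding the codomain of $\rho$ and on the either-all-or-none consequence of (C3) being $\cal F$-independent are accurate and consistent with the paper's own observations.
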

The proof of this theorem shows in particular, that $\qr'$ is either
parallel-correct for \emph{all} \qr-generous 
and $\qr$-scattered  families of distribution policies or for \emph{none} of them.

\begin{proof}[Proof (sketch)]
    The upper bound follows from Lemma~\ref{lem:transfer-families}
    and the fact that (C3) can be checked by an \NP-algorithm. Indeed
    such an algorithm only needs to guess $\theta$ and $\rho$ and to
    verify (in polynomial time) that $\body{\theta(\qr')} \subseteq \body{\rho(\qr)}$.

The
    lower bound follows by Lemma~\ref{lem:transfer-families} and the
    following Proposition \ref{prop:cthree}. 
\end{proof}
\begin{proposition}\label{prop:cthree}
  It is \NP-hard
    to decide, whether  for CQs \qr and $\qr'$
    condition (C3) holds. This statement remains true if either \qr
    or $\qr'$ is restricted to acyclic queries. It also remains true if both CQs are
    Boolean and if \qr is full.
\end{proposition}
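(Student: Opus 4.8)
The plan is to reduce from a standard \NP-complete problem about conjunctive queries, namely \emph{query containment} for Boolean CQs (equivalently, the existence of a homomorphism between two CQs), which is \NP-complete even when one side is acyclic and when both queries are Boolean. The key observation is that condition (C3) --- the existence of a simplification $\theta$ for $\qr'$ and a substitution $\rho$ for $\qr$ with $\body{\theta(\qr')}\subseteq\body{\rho(\qr)}$ --- is, modulo the extra freedom of the simplification $\theta$, essentially asking whether $\qr'$ homomorphically maps into $\qr$ via some variable renaming. So the first step is to engineer $\qr$ and $\qr'$ in such a way that the simplification component $\theta$ is forced to be (or can be assumed to be) the identity, reducing (C3) to: ``is there a substitution $\rho$ with $\body{\qr'}\subseteq\body{\rho(\qr)}$?'', which is exactly a homomorphism question from $\qr'$ to $\qr$ (note $\rho$ maps the variables of $\qr$, so $\body{\rho(\qr)}$ is a homomorphic image of $\qr$, and asking $\body{\qr'}\subseteq\body{\rho(\qr)}$ is asking for a homomorphism from $\qr'$ into $\qr$). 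Since the head atoms must match ($\head{\theta(\qr')}=\head{\qr'}$ by definition of simplification), making the queries Boolean removes head constraints from the $\theta$ side, and I would add enough rigidity to $\qr'$ --- for instance by attaching a distinctly-named ``gadget'' atom or by using relation symbols in a way that pins down variables --- so that no non-trivial simplification of $\qr'$ exists; Example~\ref{ex:subs} already shows queries like $T(x)\gets R(x,y),R(y,z)$ admit only the identity simplification, and the same idea (a ``path'' with no repeated variables or self-collapsible structure) generalizes.

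The concrete construction I would carry out: given an instance $(\qr_1,\qr_2)$ of Boolean CQ containment (deciding whether $\qr_1\subseteq\qr_2$, i.e.\ whether there is a homomorphism $\qr_2\to\qr_1$), set $\qr'$ to be (a rigid version of) $\qr_2$ and $\qr$ to be $\qr_1$, both made Boolean. Then (C3) with $\theta=\mathrm{id}$ becomes ``$\body{\qr_2}\subseteq\body{\rho(\qr_1)}$ for some $\rho$'', which holds iff there is a homomorphism from $\qr_2$ to $\qr_1$, iff $\qr_1\subseteq\qr_2$. For the refinements: to get the statement with $\qr'$ acyclic, I would start from the known \NP-hardness of containment $\qr_1\subseteq\qr_2$ with $\qr_2$ acyclic (the containment target can be taken acyclic, e.g.\ a disjoint union of edges / a tree pattern, while hardness is retained), and to get it with $\qr$ acyclic I would instead take the \NP-hard case where the \emph{source} side is acyclic; both variants of CQ-containment hardness are available in the literature. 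For the last clause --- both Boolean and $\qr$ full --- note that a Boolean CQ has an empty head, so ``full'' is vacuous for the head but I should interpret it as: all body variables occur in the head, which for a Boolean query forces the body to have no variables at all, i.e.\ $\qr$ is a set of ground-like atoms; more likely the intended reading keeps $\qr$ full in the sense that makes $\qr$ \emph{strongly minimal} (by Lemma~\ref{lem:cond_st_minimal}), so I would instead arrange $\qr$ to be full by putting all its variables in the head (non-Boolean $\qr$) while keeping $\qr'$ Boolean --- rereading the statement, it says ``both CQs are Boolean \emph{and} \qr is full'', so with Boolean $\qr$ the fullness condition degenerates and I would handle it by using a $\qr$ whose body uses only a fixed finite set of variables forced to be distinct, making the homomorphism question into a bounded pattern-matching problem that is still \NP-hard (essentially a graph homomorphism / $k$-coloring style encoding).

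The main obstacle I anticipate is precisely controlling the simplification $\theta$ of $\qr'$: unlike $\rho$ (which is an arbitrary substitution on $\qr$'s variables and gives all the homomorphism freedom I want), $\theta$ must be a genuine \emph{self}-simplification of $\qr'$, and if $\qr'$ has repeated relation symbols it may admit non-trivial simplifications that could "cheat'' by shrinking $\qr'$ before mapping into $\qr$ --- potentially making (C3) \emph{easier} to satisfy than plain containment. I would neutralize this either by choosing the rigid gadget carefully so $\qr'$ has a unique (identity) simplification, or, more robustly, by observing that for \emph{any} simplification $\theta$ the query $\theta(\qr')$ is equivalent to $\qr'$ (as noted right after the definition of simplification), so $\body{\theta(\qr')}\subseteq\body{\rho(\qr)}$ for some $\theta,\rho$ holds iff there is a homomorphism from \emph{some} CQ equivalent to $\qr'$ into $\qr$, iff there is a homomorphism from the \emph{core} of $\qr'$ into $\qr$, iff there is a homomorphism from $\qr'$ into $\qr$ (cores are retracts) --- so in fact (C3) is equivalent to plain homomorphism-existence $\qr'\to\qr$ regardless, and the whole proposition reduces cleanly to the well-known \NP-hardness of CQ homomorphism/containment with the stated acyclicity and Boolean/full restrictions. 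Writing up the equivalence of (C3) with $\qr'\to\qr$ carefully, and then invoking the right sharpened version of Chandra--Merlin hardness for each refinement, is the bulk of the work; the hardness sources themselves are standard and I would cite \cite{DBLP:conf/stoc/ChandraM77} together with the acyclic-case refinements.
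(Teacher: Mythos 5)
Your proposal hinges on the claim that (C3) is equivalent to the existence of a homomorphism from $\qr'$ to $\qr$, but that equivalence is false, and this breaks the entire reduction. The containment in (C3) goes the ``wrong way'' for a homomorphism reading: $\body{\theta(\qr')}\subseteq\body{\rho(\qr)}$ demands that the atoms of $\theta(\qr')$ occur \emph{verbatim} among the atoms of a homomorphic image $\rho(\qr)$ of $\qr$; no map is ever applied to $\qr'$'s variables other than a self-simplification. Concretely, take $\qr: H()\gets R(x,x)$ and $\qr': H()\gets R(u,v)$. There is a homomorphism from $\qr'$ to $\qr$ (send $u,v\mapsto x$), yet (C3) fails: the only simplification of $\qr'$ is the identity, and $R(u,v)$ with $u\neq v$ can never equal $\rho(R(x,x))=R(\rho(x),\rho(x))$. (This is also semantically forced: $\qr$ is strongly minimal, and parallel-correctness does not transfer here, since the minimal valuation requiring $R(a,b)$ with $a\neq b$ for $\qr'$ is not covered by any valuation of $\qr$.) In the other direction, take $\qr: H()\gets R(x,y)$ and $\qr': H()\gets R(u,u)$: now (C3) holds via $\rho(x)=\rho(y)=u$, although there is no homomorphism from $\qr'$ to $\qr$. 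So (C3) is neither homomorphism existence $\qr'\to\qr$ nor $\qr\to\qr'$, your core/retract argument does not apply, and the reduction from CQ containment does not go through. Your subsidiary worry about controlling $\theta$ was the lesser problem; the real issue is that $\rho$ does not give you ``all the homomorphism freedom'' you want, because identifying variables of $\qr$ can only \emph{merge} positions of $\qr$'s atoms, never realize an atom of $\qr'$ whose variables are more distinct than those of every atom of $\qr$.

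The paper instead gives two bespoke reductions from graph $3$-colorability. In the first, the graph is encoded as $E$-atoms in $\qr'$ and the legal colorings as $E$-atoms over three color variables in $\qr$; a $\Fix(r,g,b)$ atom pins the colors, forces $\body{\rho(\qr)}=\body{\qr}$, and turns the simplification $\theta$ itself into the color assignment (this yields acyclic, Boolean, full $\qr$). In the second, the roles are swapped and a chain of $\Fix$-atoms over edge labels rigidifies $\qr'$ so that $\theta$ must be the identity, after which $\rho$ encodes the coloring (this yields acyclic $\qr'$). If you want to salvage your plan, you would have to characterize (C3) correctly first --- roughly, ``some simplification of $\qr'$ is a sub-conjunction of some homomorphic image of $\qr$'' --- and then design gadgets that make this condition, not homomorphism existence, encode your chosen \NP-hard problem; at that point you are essentially rebuilding the paper's construction.
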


\begin{remark}
The proof of Proposition \ref{prop:cthree} is given in
        Appendix~\ref{app:proof-families-complexity}. In both cases
        (\qr acyclic or $\qr'$ acyclic) it is by a reduction
    from graph $3$-colorability.  The first reduction, in which the input graph is encoded in $\qr'$ and the valid
color-assignments in $\qr$ is straightforward. As it only uses a fixed number of colors, $\qr$ can be made acyclic by adding an atom to $\qr$ that contains all allowed colors.

The second reduction, in which the graph is encoded in $\qr$ and the
valid color-assignments in $\qr'$,  is a bit more involved.

The reader may now wonder whether 
        \NP-hardness remains when both $\qr$ and $\qr'$ are required to be acyclic. 
    When relations of arbitrary arity are allowed, this is indeed the case:  acyclicity is then easily achieved by using one atom containing all variables of the query. 
        Under bounded-arity database schemas, however, the complexity of
    parallel-correctness transfer for acylic queries remains open. \qed
\end{remark}

\subsection{Hypercube Distribution Policies}
\label{sec:declarative}


In the following, we give a short definition of Hypercube
distributions and settle the complexity of the
parallel-correctness transfer problem for families $\cal H(\qr)$ of Hypercube
distributions for some CQ \qr with the help of the results of
Section \ref{sec:pc:sec5}. We highlight how Hypercube
distributions can be specified in a rule-based fashion, which we consider useful also for more general distributions.

Let \qr be a conjunctive query with (body) variables
$x_1,\ldots,x_k$. A collection $H=(h_1,\ldots,h_k)$ of hash
functions\footnote{A hash function is a partial mapping from $\dom$
to a finite set whose elements are sometimes referred to as buckets.}  (called a \emph{hypercube} in the following) determines a \emph{hypercube distribution} $\distp_H$ for $\qr$ in
the following way. For each $i\in\{1,\ldots,k\}$, we let $A_i\mydef
\img{h_i}$ and define the  \emph{address space} $\as$ of $\distp_H$ as
the cartesian product $A_1\times \cdots\times A_k$. 

In a nutshell, $\distp_H$ has one node per address in $\as$ and
distributes, for every valuation $V$ of $\qr$, every fact $\fc=V(A)$, where $A$ is an atom of
$\qr$, to all nodes whose address $(a_1,\ldots,a_k)$
satisfies $a_i=h_i(V(x_i))$, for all variables $x_i$ occurring in
$A$.

For the declarative specification of $\distp_H$ we make use of 
predicates\footnote{For the purpose of specification it is irrelevant
  whether these predicates are materialized in the database.} $\sbucket i$ and $\abucket i$, where
$\sbucket i(a,b)$ holds, if $h_i(a)=b$, and  $\abucket i(b)$
holds, if $b\in \img{h_i}$.

With these predicates, $\distp_H$ can be specified by stating, for each
atom $R(y_1,\ldots,y_m)$ of \qr, one rule
\begin{align*}
    T_R(z_1,\ldots,z_k; y_1,\ldots,y_m) \gets &
        R(y_1,\ldots,y_m), \\
        & \atomB_1, \ldots, \atomB_k.
 \end{align*}
Here, for each  $i\in\{1,\ldots,k\}$, $\atomB_i$ is $\sbucket i(x_i,z_i)$,
if $x_i$ occurs in $y_1,\ldots,y_m$, and $\atomB_i$ is $\abucket i(z_i)$, otherwise.

The semantics of such a rule is straightforward. For each valuation
$V$ of the variables $z_1,\ldots,z_k,x_1,\ldots,x_k$, that makes the body of the rule true, the fact
$R(V(y_1),\ldots,V(y_m))$ is distributed to the node with address
$(V(z_1),\ldots,V(z_k))$. We emphasize that the variables
$y_1,\ldots,y_m$ need not be pairwise distinct and that  $\{y_1,\ldots,y_m\}\subseteq\{x_1,\ldots,x_k\}$.

\begin{remark}
  It is evident that one could use more general rules to specify
  distribution policies. More than one atom with a database relation
  could be in the body, and there could be other additional predicates
  than those derived from hashing functions. Furthermore, the address
  space could be defined differently. \qed
\end{remark}

For a CQ $\qr$, we denote by ${\cal H}_\qr$ the family 
of distribution policies $\{\distp_H\mid H\text{ is a hypercube for } \qr\}$.

\begin{lemma} \label{lem:hc:generous-scattered}
Let $\qr$ be a CQ. Then ${\cal H}_\qr$ is $\qr$-generous and 
$\qr$-scattered.
\end{lemma}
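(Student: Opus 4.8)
The plan is to prove both halves by directly unwinding the rule-based description of the hypercube distribution $\distp_H$ given above; no new idea is required beyond careful bookkeeping. For \emph{generosity}, I would fix an arbitrary $\distp_H\in{\cal H}_\qr$ with $H=(h_1,\dots,h_k)$ and an arbitrary valuation $V$ for $\qr$ (so $V$ maps $\vars{\qr}=\{x_1,\dots,x_k\}$ into $\dom$), and single out the node $\node$ with address $(h_1(V(x_1)),\dots,h_k(V(x_k)))$, which is an element of the address space $\as$ since each $h_i$ is total on $\dom$. For every atom $A=R(y_1,\dots,y_m)$ of $\qr$ I would then verify that the rule associated with $A$ fires so as to send $V(A)$ to $\node$: the extension of $V$ by $z_i\mapsto h_i(V(x_i))$ satisfies $\sbucket{i}(x_i,z_i)$ for each variable $x_i$ occurring in $A$, and $\abucket{i}(z_i)$ for each $x_i$ not occurring in $A$ (since $h_i(V(x_i))\in\img{h_i}$). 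Hence $\node$ receives every fact of $V(\body{\qr})$; as $\distp_H$ was arbitrary, ${\cal H}_\qr$ is $\qr$-generous.

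For \emph{scatteredness}, given an instance $I$ I would exhibit a single hypercube that separates the facts of $I$ as finely as possible. Put $D\mydef\adom{I}\cup\{d_0\}$ for some fixed $d_0\in\dom$, a nonempty finite set, and let each $h_i$ be the identity on $D$ and constantly $d_0$ outside $D$; then $\img{h_i}=D$, so the nodes of $\distp_H$ are exactly the addresses in $D^k$. For a node $\node$ with address $(a_1,\dots,a_k)$ I would take the valuation $V$ with $V(x_i)=a_i$ and argue $\dist{I}(\node)\subseteq V(\body{\qr})$. Indeed, any $\fc\in\dist{I}(\node)$ is, by the rule semantics, of the form $W(A)$ for some atom $A$ of $\qr$ and some valuation $W$ of the rule variables with $W(z_i)=a_i$ for all $i$ and $h_i(W(x_i))=a_i$ for every variable $x_i$ occurring in $A$; since $\fc\in I$, each such $W(x_i)$ lies in $\adom{I}\subseteq D$, where $h_i$ acts as the identity, so $a_i=W(x_i)=V(x_i)$ on the variables of $A$, whence $V(A)=W(A)=\fc$. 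Thus $\distp_H$ is $(\qr,I)$-scattered, and so ${\cal H}_\qr$ is $\qr$-scattered (the case $I=\emptyset$ being trivial, since then $\dist{I}(\node)=\emptyset$ for every node).

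I do not expect a genuine obstacle here. The only points that need care are the bookkeeping involved in reading off the rule semantics — in particular the role of the $\abucket{i}(z_i)$ atoms for variables absent from a given atom, the convention that the hash functions of a hypercube are total (used for generosity, since otherwise a fact carried by a variable on which some $h_i$ is undefined would be distributed to no node), and the trivial empty-instance case. The one idea underlying both directions is that a hypercube whose hash functions are injective on $\adom{I}$ forces, at each node, every delivered fact to agree with a common valuation, namely the one assigning to each $x_i$ the $i$-th component of that node's address.
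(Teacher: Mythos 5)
Your proof is correct and follows essentially the same route as the paper's: for generosity you pick the node addressed by $(h_1(V(x_1)),\dots,h_k(V(x_k)))$, and for scatteredness you take hash functions that act as the identity on the active domain and read off the unique compatible valuation from each node's address. The only differences are bits of extra care the paper glosses over (totality of the hash functions, and padding the address space with $d_0$ to cover $I=\emptyset$), which do not change the argument.
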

\begin{proof}
Let $Q$ be a CQ with $\vars{\qr}=\{u_1,\ldots,u_k\}$.

We first show that every policy $\distp_H\in {\cal H}_\qr$ is
$\qr$-generous. To this end, let $H$ be a hypercube and
let $V$ be a valuation for $\qr$. Then, by definition, for the node $\kappa$ with address $(h_1(V(u_1)),\ldots,h_k(V(u_k)))$,  $\kappa \in {\distp_H(\fc)}$ 
for every $\fc\in V(\body{\qr})$.

We now show that ${\cal H}_\qr$ is $\qr$-scattered. Thereto, let $I$ be an instance. For every $i\leq k$, we choose $A_i\mydef\adom
I$
and let $h_i(a)\mydef a$, for every $a\in A_i$. 
Let $\kappa$ be an arbitrary node and let $(a_1,\ldots,a_k)$  be its
address. Let $V$ be
the valuation mapping $u_i$ to $a_i$, for each $i$. 
 Let $R(d_1,\ldots,d_m)\in
\mathit{dist}_{\distp_H}(I)(\kappa)$ thanks to some rule 
\begin{align*}
    T_R(z_1,\ldots,z_k; y_1,\ldots,y_m) \gets &
        R(y_1,\ldots,y_m), \\
        & \atomB_1, \ldots, \atomB_k.
 \end{align*}
By definition of the hash functions, every valuation that satisfies the body of this rule, maps $x_i$ to
$a_i$, for every $x_i$ that appears in $R(y_1,\ldots,y_m)$. However,
as this valuation coincides with $V$ on  $y_1,\ldots,y_m$, it maps
$R(y_1,\ldots,y_m)$ to an element of $V(\body{\qr})$. 
Therefore,  $\mathit{dist}_{\distp_H}(I)(\kappa) \subseteq V(\body{\qr})$. 
\end{proof}

\begin{corollary}\label{coro:hyper}
It is \NP-complete to decide, for  given conjunctive queries $\qr,\qr'$,
whether $\qr'$ is parallel-correct for ${\cal H}_\qr$.
\end{corollary}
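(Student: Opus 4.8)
The plan is to derive Corollary~\ref{coro:hyper} directly by combining two facts established earlier: Lemma~\ref{lem:hc:generous-scattered}, which says that ${\cal H}_\qr$ is $\qr$-generous and $\qr$-scattered, and Theorem~\ref{theo:families-complexity}, which states that deciding parallel-correctness of $\qr'$ for $\qr$-generous and $\qr$-scattered families of distribution policies is \NP-complete. The only subtlety is that Theorem~\ref{theo:families-complexity} is phrased for \emph{all} such families at once, so I first need to pin down what ``parallel-correct for ${\cal H}_\qr$'' means in terms of that theorem and then invoke the sharper statement accompanying it.

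First I would record the membership direction. By Lemma~\ref{lem:hc:generous-scattered}, ${\cal H}_\qr$ is $\qr$-generous and $\qr$-scattered, so Lemma~\ref{lem:transfer-families} applies: $\qr'$ is parallel-correct for ${\cal H}_\qr$ if and only if condition (C3) holds for $\qr$ and $\qr'$. Since (C3) can be checked by an \NP-algorithm (guess $\theta$ and $\rho$, then verify $\body{\theta(\qr')}\subseteq\body{\rho(\qr)}$ in polynomial time, exactly as in the proof of Theorem~\ref{theo:families-complexity}), the problem of deciding parallel-correctness for ${\cal H}_\qr$ is in \NP.

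Next I would record the hardness direction. Proposition~\ref{prop:cthree} shows that deciding (C3) for given CQs $\qr,\qr'$ is \NP-hard. By Lemma~\ref{lem:transfer-families} (again using Lemma~\ref{lem:hc:generous-scattered} to verify its hypotheses), deciding (C3) is exactly the same problem as deciding whether $\qr'$ is parallel-correct for ${\cal H}_\qr$, so this latter problem is \NP-hard as well. Combining the two directions yields \NP-completeness, which is the claim. I do not anticipate a genuine obstacle here; the work has all been done in Lemmas~\ref{lem:transfer-families} and~\ref{lem:hc:generous-scattered} and Proposition~\ref{prop:cthree}, and the only thing to be careful about is making sure the equivalence with (C3) is quoted with ${\cal H}_\qr$ in place of a generic family, which is legitimate precisely because Lemma~\ref{lem:hc:generous-scattered} certifies that ${\cal H}_\qr$ is one of the families covered by Lemma~\ref{lem:transfer-families}.
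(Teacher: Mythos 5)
Your proposal is correct and follows exactly the route the paper intends: Lemma~\ref{lem:hc:generous-scattered} certifies that ${\cal H}_\qr$ is $\qr$-generous and $\qr$-scattered, so Lemma~\ref{lem:transfer-families} reduces parallel-correctness for ${\cal H}_\qr$ to condition (C3), whose \NP\ upper bound is the guess-and-check argument from Theorem~\ref{theo:families-complexity} and whose \NP-hardness is Proposition~\ref{prop:cthree}. Nothing is missing.
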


\begin{remark}
  It is easy to see that Lemma \ref{lem:hc:generous-scattered} and
  then the upper bound of Corollary \ref{coro:hyper} holds for more
  general families of distribution policies. As an example, one could
  add further atoms of \qr as ``filters'' to the bodies of the above
  rules.\qed
\end{remark}




\section{Conclusions}
\label{sec:conclusion}

We have introduced parallel-correctness as a framework for studying one-round evaluation algorithms for the evaluation of queries under arbitrary distribution policies. We have obtained tight bounds on the complexity of deciding parallel-correctness and the transferability problem for conjunctive queries. For general conjunctive queries, these complexities reside in different levels of the polynomial hierarchy (even when considering {Hypercube} distributions). Since  
the considered problems are static analysis problems that relate to queries and not to instances (at least in the case of transferability), such complexities do not necessarily put a burden on practical applicability. Still, it would be interesting to identify fragments of conjunctive queries or particular classes of distribution policies that could render these problems tractable. In addition, it would be interesting to explore more expressive classes of queries and other families of distribution policies. 

The notion of parallel-correctness is directly inspired by {Hypercube}
where the result of the query is obtained by aggregating (through union) the evaluation of the original query over the distributed instance. Other possibilities are to consider more complex aggregator functions than union and to allow for a different query than the original one to be executed at computing nodes.


\subsubsection*{Acknowledgments}
We thank Serge Abiteboul, Luc Segoufin, Cristina Sirangelo, and
Thomas Zeume for helpful remarks.

{\small
\bibliography{references}
\bibliographystyle{abbrv}}

\appendix

\newpage
\onecolumn
\section*{Appendix}
\label{sec:appendix}

\section{Preliminaries}
We recapitulate some classical definitions and facts regarding the polynomial
hierarchy, which will be used in the following proofs for lower and upper
bounds.
We are only concerned with classes $\coNP=\polyh{1}$, $\phtwo$, and $\phthree$
of the first, second, and third level of the hierarchy as well as with their
complement classes $\NP=\copolyh{1}$, $\cophtwo$, and $\cophthree$.

\begin{definition}
	\label{def:phthree}
	Class $\phthree$ consists of those problems~$L$ over an
	alphabet~$\Sigma$, for which there is an algorithm~$\alg$ and a
	polynomial~$p$ such that a string $w \in \Sigma^\ast$ is in $L$ if and only if for
	every $x \in \Sigma^\ast$ there exists $y \in \Sigma^\ast$ such that for
	every $z \in \Sigma^\ast$ algorithm~$\alg$ accepts $(w,x,y,z)$ and its running
	time is bounded by $p(|w|)$. 	

Similarly, $\phtwo$ consists of those $L$, for which there is an algorithm~$\alg$ and a
	polynomial~$p$ such that a string $w \in \Sigma^\ast$ is in $L$ if and only if for
	every $x\in \Sigma^\ast$ there exists $y \in \Sigma^\ast$ such that algorithm~$\alg$ accepts $(w,x,y)$ and its running
	time is bounded by $p(|w|)$.
\end{definition}

\begin{remark}
	When we show that an algorithmic problem belongs to one of
        these classes, we do not apply the definition
        literally. Instead, as usual, we deal with 
        mathematical objects like relations and valuations directly
        and assume that their encoding as strings is done in a
        straightforward way.
\end{remark}

The following problem for quantified boolean \formulae (QBF) is complete
class~$\phthree$. The quantifier structure of the input resembles that for
inputs $x,y,z$ in the acceptance condition for the algorithm in
Definition~\ref{def:phthree}. 

\medskip
\noindent
\problemdefinition
	{\pithreeTQBF}
	{Formula $\phi=\forall\vx\exists\vy\forall\vz\; \psi(\vx,\vy,\vz)$ where $\psi$
	is a propositional formula} {
		Does for every truth assignment $\beta_\vx$ on $\vx$
		exist a truth assignment $\beta_\vy$ on $\vy$ such that
		for every truth assignment $\beta_\vz$ on $\vz$, truth
		assignment $\beta\mydef(\beta_\vx \cup \beta_\vy \cup \beta_\vz)$ satisfies
		$\psi$, $\beta\models\psi$?
	}
	
\medskip
\noindent
Problem~$\twoTQBF$ is defined analogously with formulas of the form $\phi=\forall\vx\exists\vy\; \psi(\vx,\vy)$.
In general, the number~$i$ of quantifier blocks in an input formula for $\pivarTQBF$ corresponds with level~$i$ of class $\polyh{i}$ in the polynomial hierarchy.
Also, the complement classes
$\cophtwo=\coClass{\phtwo}$ and $\cophthree=\coClass{\phthree}$ are defined like
$\phtwo$ and $\phthree$, respectively, but every quantifier is replaced by its
dual.

\begin{remark}
	\label{rem:pik-completeness-nf}
	Problems $\twoTQBF$ and $\threeTQBF$ are well-known to be complete for $\phtwo$
	and $\phthree$, respectively. They remain hard if $\psi$ is
	assumed to be in 3-CNF (for $\twoTQBF$) or in 3-DNF (for $\threeTQBF$)
	\cite[Theorem~4.1.2]{DBLP:journals/tcs/Stockmeyer76}.
\end{remark}

Below we often use the notation $\langle o \rangle$ for an object $o$ to denote some reasonable string encoding of $o$ over some finite alphabet.


\section{Proofs for Section \ref{sec:parallel-correctness}: Parallel-Correctness}


\subsection{Proof of Lemma~\ref{lem:minimal}}

In the following let $\qr$ be a CQ.  

We show that there is a non-minimal injective valuation
     $V$ for $\qr$ if and only if $\qr$ is not minimal.

    (if) Suppose that $\qr$ is not minimal.  Then, by
    \cite{DBLP:conf/stoc/ChandraM77} there is a folding $h$ for $\qr$, where
    $\body{h(\qr)} \subsetneq \body{\qr}$ and $\head{h(\qr)} = \head{\qr}$. 
        Let $V$ be an arbitrary injective valuation for $\qr$.
    Injectivity implies that $|V(\body{\qr})|=|\body{\qr}|$, that is
    the number of facts in $V(\body{\qr})$ equals the number of atoms
    in $|\body{\qr}|$.

Since $h(\qr)$ only has variables that also appear in \qr, $V$
    is a valuation for $h(\qr)$ as well. However, thanks to
    $\body{h(\qr)} \subsetneq \body{\qr}$, $h(\body{\qr})$ has fewer
    atoms than $\body{\qr}$, therefore
    valuation $|(V\circ h)(\body{\qr})|$ has fewer facts than
    $V(\body{\qr})$. Thus, $(V\circ h)$ is a counterexample for the
    minimality of $V$, since $(V\circ h)(\body{\qr}) = V(\body{h(\qr)})\subseteq V(\body{\qr})$ and
    $(V\circ h)(\head{\qr}) = V(\head{h(\qr)}) = V(\head{\qr})$. 

(only-if) Suppose there is an injective valuation $V$ for $\qr$
and a valuation $V'$ for $\qr$, such that $V'<_\qr V$. Then,
$h\mydef(V^{-1}\circ V')$ is a homomorphism from \qr to itself, as $\body{h(\qr)} = V^{-1}(V'(\body{\qr})) \subsetneq
V^{-1}(V(\body{\qr})) = \body{\qr}$, and $\head{h(\qr)} = V^{-1}(V'(\head{\qr})) =
V^{-1}(V(\head{\qr})) = \head{\qr}$. 
Therefore $h(\qr)$ equivalent to \qr, thanks to the homomorphism theorem (see, e.g., \cite{ahv_book}).

\subsection{Proof of Theorem~\ref{theo:subinstances-completeness}}
\label{sec:Proof-SubinstancesCompleteness}

\newmathabb{\addInput}{x}
\newmathabb{\addInputk}{x_1 \circ\dots\circ x_k}
\newmathabb{\addInputi}{x_i}
\newmathabb{\algNPDist}{\mathcal{A}_{\distp}}
	
\noindent
The upper bounds in Theorem~\ref{theo:subinstances-completeness}(b)
follow from Proposition~\ref{prop:pci-nrel-in-phtwo} for $\PCI(\classpnrel)$ and
Proposition~\ref{prop:pc-nrel-in-phtwo} for $\PC(\classpnrel)$. They also
imply the upper bounds of Theorem~\ref{theo:subinstances-completeness}~(a), as
stated in Corollary~\ref{cor:pci-fin-in-phtwo} for $\PCI(\classpfin)$ and
Corollary~\ref{cor:pc-fin-in-phtwo} for $\PC(\classpfin)$. The matching lower
bound for $\PCI(\classpfin)$ is stated in
Proposition~\ref{prop:pci-fin-phtwo-hard} and the matching lower bound for
$\PC(\classpfin)$ in Proposition~\ref{cor:pc-fin-phtwo-hard}.

\subsubsection{Upper bounds}

Before we prove the upper bounds, we discuss how to use distribution policies
from $\classpnrel$.

\medskip
\noindent
A distribution policy $\distp \in \classpnrel$ is an $\NP$-testable relation.
This means there exists a (deterministic) algorithm~$\algNPDist$ with time bound a polynomial in $\langle \node,\fc \rangle$ that accepts input $(\langle
\node,\fc\rangle, \addInput)$ for some string~$\addInput$ if and only if
$\node \in \distp(\fc)$. 
We use algorithm~$\algNPDist$ as a subroutine in the following algorithms, as
described below.

\begin{remark}[Use of subroutine~$\algNPDist$]
	\label{rem:check-assign}
	Let $V$ be a valuation for a query~$\qr$ with $k$ body atoms and let
	$\node$ be some node. We assume some additional input
	string~$\addInput=\addInputk$, where each substring $\addInputi$ has a
	 length polynomial in $V(\body{\qr})$ and the representation size of $\node$.
		An algorithm can \enquote{test} (w.r.t.\ $\addInput$) whether there is a fact in
	$V(\body{\qr}) = \{\fc_1,\dots,\fc_\ell\}$ that is \emph{not} assigned to
	node~$\node$ under distribution policy~$\distp$, where $\ell \le k$. To this
	end, the algorithm invokes $\algNPDist$ as a subroutine with inputs $(\langle
	\node,\fc_i\rangle, \addInputi)$ for each $i \in \{1,\dots,\ell\}$.
	If any input is rejected, the algorithm accepts, otherwise it rejects.
		The running time is obviously bounded by the size of
	$V(\body{\qr})$ and the represention size of $\node$.
\end{remark}

In the following proofs, the representation size of nodes is explicitly or
implicitly bounded by the length of the input.

\begin{proposition}
	\label{prop:pci-nrel-in-phtwo}
	Problem $\PCI(\classpnrel)$ is in $\phtwo$.
\end{proposition}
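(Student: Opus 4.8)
The plan is to show $\PCI(\classpnrel)$ lies in $\phtwo$ by appealing directly to the semantic characterization of parallel-correctness on a fixed instance and by invoking Proposition~\ref{prop:minimal-complexity}. Recall that, by Lemma~\ref{lem:pc}, a CQ $\qr$ is parallel-correct on an instance $I$ under $\distp_n$ exactly when every \inducedminimal valuation $V$ for $\qr$ that is \emph{satisfying on $I$} has all its required facts meeting at some node; valuations that are not satisfying on $I$ are irrelevant to parallel-correctness on $I$. Actually, for the per-instance problem it is cleanest to argue straight from Definition~\ref{def:pc-inst}: since CQs are monotone, $\qr(I) \supseteq \bigcup_{\node\in\nw}\qr(\dist{I}(\node))$ always holds, so $\qr$ is parallel-correct on $I$ iff every fact in $\qr(I)$ is produced at some node, and it suffices to witness each such fact by a \inducedminimal satisfying valuation whose facts meet.

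The algorithm I would describe is a $\forall\exists$ procedure matching Definition~\ref{def:phthree} for $\phtwo$. The ``$\forall$'' part guesses a valuation $V$ for $\qr$ (of size polynomial in $\qr$, since $V$ maps $\vars{\qr}$ into $\adom{I}$, and non-satisfying valuations can be discarded immediately). The ``$\exists$'' part must certify two things: first, that $V$ is \emph{not} \inducedminimal, or else $V$ \emph{is} witnessed by a node --- and second, for the black-box distribution, the polynomially-bounded advice strings $\addInput = x_1\circ\cdots\circ x_k$ needed to run $\algNPDist$ as in Remark~\ref{rem:check-assign}. Concretely, on input $(\qr, I, n)$ with black box $\distp$, the procedure accepts iff for every valuation $V$ for $\qr$ that is satisfying on $I$, there exists either (i) a valuation $V'$ with $V' <_\qr V$ (a certificate of non-minimality of $V$, checkable in polynomial time by Proposition~\ref{prop:minimal-complexity}), or (ii) a node $\node$ with $|\node| \le n$ together with advice strings $x_i$ such that, using $\algNPDist$, all facts of $V(\body{\qr})$ are confirmed to be assigned to $\node$ under $\distp_n$. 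The bound $|\node|\le n$ ensures the existential witness is polynomially bounded in the input (which includes $n$ in unary); the facts of $V(\body{\qr})$ are over $\dom_n$ automatically since they come from $I$, and one intersects with $\facts{\distp_n}$-membership via the length-$n$ cutoff on addresses. By Lemma~\ref{lem:pc} this accepts iff $\qr$ is parallel-correct on $I$ under $\distp_n$.

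The one subtlety worth spelling out is the quantifier alternation: the certificate in (ii) requires an \emph{existential} choice of node and advice, and the ``all facts meet'' test over those choices is then a deterministic polynomial-time check (it is a conjunction of calls to $\algNPDist$, each accepting or rejecting deterministically given its advice), so the whole thing fits the $\forall\exists\text{P}$ template for $\phtwo$ without a hidden third alternation. I would also note, as the excerpt does, that if instead ``$\node\in\distp(\fc)$?'' is answered in deterministic polynomial time (or $\distp$ is a polynomial-time function), the inner advice strings $x_i$ disappear and the bound is unchanged. The main obstacle --- really the only place care is needed --- is confirming that all existential witnesses (the non-minimality certificate $V'$, the node $\node$, and the advice $x_i$) are simultaneously of size polynomial in $|\langle\qr\rangle| + |\langle I\rangle| + n$, which follows because $V'$ maps into $\adom{I}$, $|\node|\le n$, and each $x_i$ is polynomially bounded by construction of $\classpnrel$; everything else is a routine assembly of Lemma~\ref{lem:pc}, Proposition~\ref{prop:minimal-complexity}, and Remark~\ref{rem:check-assign}.
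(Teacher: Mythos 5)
There is a genuine gap, and it is semantic rather than complexity-theoretic: the condition your algorithm decides is not equivalent to parallel-correctness on the \emph{given} instance $I$. You accept iff every valuation $V$ that is satisfying on $I$ is either non-minimal (witnessed by some $V'<_\qr V$) or has all of $V(\body{\qr})$ meeting at a node. That condition is sufficient for parallel-correctness on $I$, but it is not necessary. Counterexample: take $\qr: T() \gets R(x), S(x)$, instance $I=\{R(a),S(a),R(b),S(b)\}$, and $\distp$ sending $R(a),S(a)$ to node $1$, $R(b)$ to node $2$, and $S(b)$ to node $3$. The valuation $x\mapsto b$ is \inducedminimal (every valuation requires exactly two facts with distinct relation names, so none requires a strict subset) and satisfying on $I$, yet its two facts meet nowhere; your algorithm rejects. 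But $\qr(I)=\{T()\}$ is produced at node $1$, so $\qr$ \emph{is} parallel-correct on $I$ under $\distp$. The minimal-valuation criterion (C1) of Lemma~\ref{lem:pc} characterizes parallel-correctness over \emph{all} instances, and its relativization to valuations satisfying on $I$ characterizes parallel-correctness on all \emph{subinstances} of $I$ (this is what the appendix uses for $\PC$, in Lemma~\ref{lem:pc-subinstances}); neither characterizes parallel-correctness on the single instance $I$, which is what $\PCI$ asks.

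The paper's proof therefore avoids minimality entirely for $\PCI(\classpnrel)$. It works with the complement: $\qr$ is not parallel-correct on $I$ under $\distp_n$ iff there exists a satisfying valuation $V$ such that for \emph{every} satisfying valuation $V'$ with $V'(\head{\qr})=V(\head{\qr})$ and every node $\node$ (and every advice string), some fact of $V'(\body{\qr})$ is not assigned to $\node$. Note that the inner universally quantified object is a possibly completely different valuation deriving the same fact, not a non-minimality witness for $V$. This yields an $\exists\forall$ algorithm for the complement, hence membership in $\phtwo$. Your bookkeeping about witness sizes, the unary parameter $n$, and the placement of the \NP-advice strings for the black box is fine and matches the paper; the fix needed is to replace your acceptance condition by the one above (or, equivalently, to restate your step (ii) so that the existential witness is a pair consisting of a satisfying valuation $W$ with $W(\head{\qr})=V(\head{\qr})$ and a node receiving all of $W(\body{\qr})$, dropping the minimality clause altogether).
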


\begin{proof}
	We show that the complement problem $\overline{\PCI(\classpnrel)}$ is in
	$\cophtwo=\coClass{\phtwo}$.

	By Definition~\ref{def:pc-inst}, a query~$\qr$ is not parallel-correct on
	instance~$I$ under distribution policy~$\distp_n$ if and only if there is a
	fact in $\qr(I)$ that cannot be derived on any node. This condition is equivalent to
	the existence of a valuation $V$ that satisfies $\qr$ on $I$ such that for
	every valuation $V'$ that satisfies $\qr$ on $I$ and derives the same
	fact, $V(\head{\qr}) = V'(\head{\qr})$, no node in $\dom_n$ is assigned all
	facts from $V'(\body{\qr})$.
	
	This condition, in turn, can be translated straightforwardly into a
	$\cophtwo$-algorithm for $\overline{\PCI(\classpnrel)}$. Valuations that
	satisfy $\qr$ on $I$ can be represented by a string of length polynomial in $\qr$ and
	$I$. Further, every node in $\dom_n$ can be represented by a string of
	length~$n$.
	
	By definition of $\cophtwo$, it suffices to give an algorithm with polynomial
	time bound in $\langle \qr,I,n \rangle$ such that there is some
	valuation~$V$ such that for every valuation $V'$, every node~$\node$ and every
	string~$\addInput$ the algorithm accepts input
	\begin{math}
		(
			\langle \qr,I,n\rangle,
			V,
			\langle V',\node,\addInput\rangle
		)
	\end{math}
	if and only if $(\qr,I,n) \notin \PCI(\classpnrel)$. In the following, we
	describe the algorithm and mention the time complexity of its main steps.
	Correctness of the algorithm results from the characterization above.
	
	First, the algorithm tests whether $V$ and $V'$ satisfy~$\qr$ on $I$. If $V$
	does not satisfy on $I$, the algorithm rejects. If $V'$ does not satisfy on
	$I$, the algorithm accepts. 
	Second, the algorithm tests whether $V$ and $V'$ derive the same fact,
	$V(\head{\qr}) = V'(\head{\qr})$. If they do not, the algorithm accepts.
	Otherwise it continues. All these tests can
	obviously be done in time polynomial in $\qr$ and $I$.

	We assume $\addInput$ to be a string of the form $\addInputk$, where each
	$\addInputi$ is a string of length polynomial in $n$ and $k$ is the number of
	atoms in $\qr$.
	
	Lastly, as described in Remark~\ref{rem:check-assign}, the algorithm tests
	whether there is a fact in $V'(\body{\qr})$ that is \emph{not} assigned to node~$\node$ and	passes the result (if a fact is not assigned to $\node$, the algorithm 	accepts, otherwise it rejects). Since the size of $V'(\body{\qr})$ is
	polynomially bounded by $\qr$ and $I$, this step also has a time bound
	polynomial in $\langle \qr,I,n \rangle$ (recall that $n$ is the representation
	size of nodes in the network underlying $\distp_n$).
\end{proof}

\begin{corollary}
	\label{cor:pci-fin-in-phtwo}
	Problem $\PCI(\classpfin)$ is in $\phtwo$.
\end{corollary}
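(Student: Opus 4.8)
The plan is to derive $\PCI(\classpfin)$ as a special case of $\PCI(\classpnrel)$, which Proposition~\ref{prop:pci-nrel-in-phtwo} has just placed in $\phtwo$. First I would observe that if a distribution policy $\distp$ is given explicitly as a finite list of pairs $(\node,\fc)$ with $\node\in\distp(\fc)$, then the relation ``$\node\in\distp(\fc)$?'' is decidable in deterministic polynomial time in the size of that list (just scan it), hence \emph{a fortiori} it is an $\NP$-testable relation, so $\distp\in\classpnrel$.

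The second ingredient is to match up the two input conventions. In $\PCI(\classpfin)$ the policy $\distp$ is part of the ordinary input, whereas in $\PCI(\classpnrel)$ the policy is the black-box part and the ordinary input carries a unary parameter $n$ bounding address length. Given an instance of $\PCI(\classpfin)$ with CQ $\qr$, instance $I$ and explicit $\distp$ (coming with network $\nw$), I would pick $n$ to be the maximum encoding length over all data values appearing in $\adom I$ and all node addresses appearing in $\nw$ (equivalently, the length of $\langle\distp\rangle$ together with $\langle I\rangle$ is an adequate unary bound). With this choice, $\dom_n$ contains every value of $I$ and every node of $\nw$, so $\distp_n$ agrees with $\distp$ on all facts of $I$ and all relevant nodes; consequently $\qr$ is parallel-correct on $I$ under $\distp_n$ if and only if it is parallel-correct on $I$ under $\distp$. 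Since $n$ can be written down in unary in polynomial time from the explicit input, this is a polynomial-time many-one reduction from $\PCI(\classpfin)$ to $\PCI(\classpnrel)$.

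Combining the two observations, $\PCI(\classpfin)$ reduces in polynomial time to a problem in $\phtwo$, and since $\phtwo$ is closed under polynomial-time many-one reductions, $\PCI(\classpfin)\in\phtwo$, as claimed. I do not expect any genuine obstacle here; the only point requiring a little care is the address-length bookkeeping — making sure the chosen $n$ is large enough that truncating $\distp$ to $\distp_n$ does not change the behaviour on the facts of $I$ or on the nodes that could witness (non-)parallel-correctness — but this is routine once one notes that only values in $\adom I$ and nodes in $\nw$ can ever be relevant to parallel-correctness on the fixed instance $I$.
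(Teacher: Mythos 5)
Your proposal is correct and follows essentially the same route as the paper: a polynomial-time reduction from $\PCI(\classpfin)$ to $\PCI(\classpnrel)$ obtained by noting that an explicitly enumerated policy yields a polynomial-time (hence $\NP$-testable) membership relation and by choosing the unary parameter $n$ large enough that $\distp_n$ agrees with $\distp$ on everything relevant. If anything, you are slightly more careful than the paper, which only mentions bounding $n$ by the node addresses of $\distp$, whereas you also cover the data values of $\adom I$ so that clause (1) in the definition of $\distp_n$ cannot discard facts of $I$.
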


\begin{proof}
	Obviously, problem $\PCI(\classpfin)$ can be polynomially reduced to
	$\PCI(\classpnrel)$ by choosing $n$ to be the maximum representation length of
	any node in $\distp$.
	Even though $n$ has to be encoded unary, explicit enumeration of all fact-node assignments in any
	distribution policy from $\classpfin$ guarantees that the length of $n$ is
	polynomial in $\distp$, which is an input parameter for problem
	$\PCI(\classpfin)$.
\end{proof}

\medskip
\noindent

To decide parallel-correctness of a query under a distribution policy for all
subinstances of a given instance, we use a minor variation of
Lemma~\ref{lem:pc}, where parallel-correctness is not required for all possible
instances but rather for all subinstances of a given instance.

\begin{lemma}
	\label{lem:pc-subinstances}
	A query~$\qr$ is parallel-correct on all subinstances~$I'$ of an instance~$I$
	under distribution policy~$\distp$ if and only if for every minimal valuation
	$V$ that satisfies $\qr$ on $I$ it holds
	\begin{math}
		\bigcap_{\fc \in V(\body{\qr})} \distp(\fc) \neq \emptyset.
	\end{math}
\end{lemma}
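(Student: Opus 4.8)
\textbf{Proof plan for Lemma~\ref{lem:pc-subinstances}.}
The plan is to adapt the proof of Lemma~\ref{lem:pc} almost verbatim, taking care only of the relativisation to subinstances of the fixed instance $I$. The key observation is that a valuation $V$ is minimal for $\qr$ in the sense of minimality over \emph{all} instances precisely when there is no valuation $V'$ with $V'<_\qr V$; this notion does not mention any instance, so ``minimal valuation that satisfies $\qr$ on $I$'' simply means a minimal valuation $V$ with $V(\body{\qr})\subseteq I$. Note that if $V'<_\qr V$ and $V(\body{\qr})\subseteq I$, then also $V'(\body{\qr})\subseteq I$, so the set of minimal valuations satisfying $\qr$ on $I$ is downward closed under $<_\qr$ within the instance, which is exactly what makes the argument go through.

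For the \emph{if} direction, assume the intersection condition holds for every minimal valuation satisfying $\qr$ on $I$. Fix a subinstance $I'\subseteq I$. By monotonicity of CQs it suffices to show $\qr(I')\subseteq\bigcup_{\node\in\nw}\qr(\dist{I'}(\node))$. Let $\fc\in\qr(I')$ be derived by some satisfying valuation $V$ on $I'$; since $V(\body{\qr})\subseteq I'\subseteq I$, by picking a $<_\qr$-minimal element below $V$ we obtain a minimal valuation $V'$ with $V'(\body{\qr})\subseteq V(\body{\qr})\subseteq I'$ and $V'(\head{\qr})=V(\head{\qr})=\fc$, so $V'$ is also satisfying on $I'$ and on $I$. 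By hypothesis there is a node $\node$ with $V'(\body{\qr})\subseteq\dist{I'}(\node)$ (the facts of $V'$ all lie in $I'$ and all map to $\node$), hence $\fc\in\qr(\dist{I'}(\node))$.

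For the \emph{only-if} direction, argue by contraposition exactly as in Lemma~\ref{lem:pc}: suppose there is a minimal valuation $V'$ satisfying $\qr$ on $I$ whose required facts do not meet under $\distp$. Take $I'\mydef V'(\body{\qr})$, which is a subinstance of $I$. Then $\fc\mydef V'(\head{\qr})\in\qr(I')$, but for every node $\node$ the chunk $\dist{I'}(\node)$ is a \emph{proper} subset of $I'=V'(\body{\qr})$ (since the facts of $V'$ do not all meet at $\node$), and by minimality of $V'$ no valuation agreeing with $V'$ on the head variables can derive $\fc$ from a proper subset of $V'(\body{\qr})$; hence $\fc$ is produced at no node and $\qr$ is not parallel-correct on the subinstance $I'$. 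The main point to get right—though it is routine—is the bookkeeping that minimality is an instance-independent property and that the witness subinstance $I'=V'(\body{\qr})$ indeed sits inside $I$; no genuinely new idea beyond Lemma~\ref{lem:pc} is needed.
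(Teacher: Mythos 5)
Your proof is correct and follows exactly the route the paper intends: the paper itself only remarks that Lemma~\ref{lem:pc-subinstances} is ``a minor variation of Lemma~\ref{lem:pc}'', and your relativisation to subinstances of $I$ --- including the observation that minimality is instance-independent and downward compatible with $\subseteq I$, and the witness $I' = V'(\body{\qr}) \subseteq I$ in the only-if direction --- is precisely that variation. No gaps.
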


\medskip
\begin{proposition}
	\label{prop:pc-nrel-in-phtwo}
	Problem $\PC(\classpnrel)$ is in $\phtwo$.
\end{proposition}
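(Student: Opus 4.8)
The plan is to mirror the proof of Proposition~\ref{prop:pci-nrel-in-phtwo} and show that the complement $\overline{\PC(\classpnrel)}$ lies in $\cophtwo = \coClass{\phtwo}$, the difference being that the relevant semantic characterization is now Lemma~\ref{lem:pc-subinstances} applied to the \emph{finite} instance $I := \facts{\distp_n}$ rather than Definition~\ref{def:pc-inst}. First I would record that $\facts{\distp_n}$ is indeed finite, since its facts range over the finite value set $\dom_n$ and over the finitely many relation names of $\qr$, and that a valuation $V$ satisfies $\qr$ on $\facts{\distp_n}$ exactly when $V(\body\qr) \subseteq \facts{\distp_n}$. Lemma~\ref{lem:pc-subinstances} then yields the characterization: $(\qr,n) \in \overline{\PC(\classpnrel)}$ if and only if there is a minimal valuation $V$ for $\qr$ with $V(\body\qr) \subseteq \facts{\distp_n}$ and $\bigcap_{\fc \in V(\body\qr)} \distp_n(\fc) = \emptyset$.

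Next I would turn this into a $\cophtwo$-algorithm: a polynomial-time algorithm which, on $\langle \qr, n\rangle$ together with an existential certificate and a universal challenge, accepts precisely when $(\qr,n) \in \overline{\PC(\classpnrel)}$. The existential certificate consists of a valuation $V : \vars\qr \to \dom_n$ and, for each of the $k := |\body\qr|$ body atoms $A_i$ of $\qr$, a node $\node_i$ of length at most $n$ together with a string $y_i$ on which the black-box subroutine $\algNPDist$ accepts $(\langle \node_i, V(A_i)\rangle, y_i)$; these witness $\node_i \in \distp(V(A_i))$, hence $\distp_n(V(A_i)) \neq \emptyset$ and $V(\body\qr) \subseteq \facts{\distp_n}$, and since $\algNPDist$ runs in time polynomial in $\langle \node_i, V(A_i)\rangle$ a short such $y_i$ exists whenever any does, so the certificate has polynomial length. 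The universal challenge consists of a valuation $V'$ for $\qr$ (without loss of generality into $\adom{V(\body\qr)}$), a node $\node$, and, exactly as in Remark~\ref{rem:check-assign}, a bundle $\addInput = \addInputk$ of strings, one per body atom. On these inputs the algorithm accepts iff (i)~every call $\algNPDist(\langle \node_i, V(A_i)\rangle, y_i)$ accepts; (ii)~$V' \not<_\qr V$; and (iii)~$\node \notin \dom_n$, or $\algNPDist(\langle \node, V(A_i)\rangle, \addInputi)$ rejects for some $i$. This is a bounded number of runs of $\algNPDist$ plus polynomial-time comparisons of facts, hence runs in time polynomial in $|\langle \qr, n\rangle|$ (recall $n$ is given in unary), as $\cophtwo$ requires.

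Finally I would verify that some certificate is accepted against every challenge exactly when $(\qr,n) \in \overline{\PC(\classpnrel)}$. Quantified over all challenges, clause~(i) says the certificate genuinely witnesses $V(\body\qr) \subseteq \facts{\distp_n}$, clause~(ii) for all $V'$ says that no $V'$ with $V' <_\qr V$ exists, i.e.\ that $V$ is minimal, and clause~(iii) for all $\node$ and all bundles says $\bigcap_{\fc \in V(\body\qr)} \distp_n(\fc) = \emptyset$. The main obstacle is keeping the whole argument at two quantifier alternations: the emptiness condition naively reads $\forall \node \in \dom_n\; \exists \fc \in V(\body\qr)\; \forall x$: ``$\algNPDist$ rejects $(\langle \node, \fc\rangle, x)$'', which has three alternations, and one must check that it is equivalent to the two-alternation form $\forall \node \in \dom_n\; \forall \addInputk\; \exists i$: ``$\algNPDist$ rejects $(\langle \node, V(A_i)\rangle, \addInputi)$'' actually tested in~(iii). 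This holds because $V(\body\qr)$ has at most $k$ facts: for one direction a fact uniformly rejected at $\node$ is in particular rejected against its slot in any bundle; for the converse, if no fact is uniformly rejected then choosing for each fact an accepting witness produces a bundle defeating the $\exists i$. Absorbing the existential over facts into the deterministic algorithm while pushing its polynomially many witness strings into the universal bundle is the same device already used for $\PCI$ in Remark~\ref{rem:check-assign}; the one genuinely new point is that $V(\body\qr) \subseteq \facts{\distp_n}$ must be certified existentially alongside $V$, so that a valuation with a fact outside $\facts{\distp_n}$ --- which would trivially empty the intersection yet does not violate parallel-correctness on subinstances of $\facts{\distp_n}$ --- is not spuriously flagged. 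Assembling these observations gives $\overline{\PC(\classpnrel)} \in \cophtwo$, hence $\PC(\classpnrel) \in \phtwo$.
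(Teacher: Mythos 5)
Your proposal is correct and follows essentially the same route as the paper's own proof: it places the complement in $\cophtwo$ by existentially guessing a valuation $V$, universally challenging with a valuation $V'$ (to certify minimality of $V$) and with a node plus witness bundle (to certify that the facts of $V(\body{\qr})$ never meet), and using the bounded number of body atoms to collapse the $\forall\node\,\exists\fc\,\forall x$ pattern to two alternations exactly as in Remark~\ref{rem:check-assign}. The one point where you go beyond the paper is in existentially certifying $V(\body{\qr})\subseteq\facts{\distp_n}$ via per-atom node/witness pairs; the paper's proof of this proposition omits that check (an analogous test appears only in the adaptation to $\classpfin$ in Corollary~\ref{cor:pc-fin-in-phtwo}), so your version is, if anything, the more careful one.
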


\begin{proof}
	We show that the complement problem $\overline{\PC(\classpnrel)}$ is
	in $\cophtwo=\coClass{\phtwo}$. Now, instance~$I$ is an arbitrary instance
	whose data values can be represented by a string of length~$n$.
	
	By Lemma~\ref{lem:pc-subinstances}, a query~$\qr$ is not parallel-correct on
	each subinstance~$I'$ of $I$ under distribution policy~$\distp_n$ if there is a
	minimal valuation~$V$ that satisfies $\qr$ on $I$ such that no node in $\dom_n$
	is assigned all facts from $V(\body{\qr})$.
	
	Obviously, valuations can be represented by a string of length polynomial in
	$\qr$ and $n$. Further, every node in $\dom_n$ can be represented by a
	string of length~$n$.

	By definition of $\cophtwo$, it suffices to give an algorithm with polynomial
	time bound in $\langle \qr,n \rangle$ such that there is some
	valuation~$V$ such that for every valuation $V'$, every node~$\node$, and every string~$\addInput$ the algorithm accepts
	input
	\begin{math}
		(
			\langle \qr,n \rangle,
			V,
			\langle V',\node,\addInput \rangle
		)
	\end{math}
	if and only if $(\qr,n) \notin \PC(\classpnrel)$. In the following we
	describe the algorithm and mention the time complexity of its main steps.
	Correctness of the algorithm results from the characterization in Lemma~\ref{lem:pc-subinstances}.

	First, the algorithm tests whether valuation $V'$ contradicts minimality of
	valuation $V$. For this, it checks whether $V'$ derives the same fact,
	$V'(\head{\qr}) = V(\head{\qr})$, and also whether $V'(\body{\qr}) \subsetneq
	V(\body{\qr})$. If both tests succeed, the algorithm rejects. Otherwise it
	continues.
	
	Next, the algorithm tests whether a fact in $V(\body{\qr}) =
	\{\fc_1,\dots,\fc_\ell\}$ is not \emph{not} assigned to node~$\node$, as
	described in Remark~\ref{rem:check-assign}, and passes the result (if a fact
	is not assigned to $\node$, the algorithm accepts, otherwise it rejects). Since
	the size of $V(\body{\qr})$ is polynomially bounded by $\qr$ and $n$, this step
	also has a time bound polynomial in $\langle \qr,n \rangle$.
\end{proof}

\begin{corollary}
	\label{cor:pc-fin-in-phtwo}
	Problem $\PC(\classpfin)$ is in $\phtwo$.
\end{corollary}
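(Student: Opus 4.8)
The plan is to reduce $\PC(\classpfin)$ to $\PC(\classpnrel)$ in polynomial time and then apply Proposition~\ref{prop:pc-nrel-in-phtwo}, exactly mirroring the way Corollary~\ref{cor:pci-fin-in-phtwo} is obtained from Proposition~\ref{prop:pci-nrel-in-phtwo}. Given an input $(\qr,\distp)$ with $\distp\in\classpfin$ presented as an explicit list of all pairs $(\node,\fc)$ for which $\node\in\distp(\fc)$, the first step is to let $n$ be the maximum representation length of any node or any fact occurring in that list. Since every such object is written out verbatim in the input, $n$ is bounded by the size of $\distp$, so $n$ can be produced in unary in polynomial time; this is what keeps the reduction polynomial despite the unary encoding required by $\PC(\classpnrel)$.

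The second step is to check that, for this choice of $n$, the restricted policy $\distp_n$ coincides with $\distp$. Every fact $\fc$ with $\distp(\fc)\neq\emptyset$ appears in the list, hence is over $\dom_n$, and every node in the image of $\distp$ has an address of length at most $n$; therefore neither of the two restrictions in the definition of $\distp_n$ has any effect. In particular $\facts{\distp_n}=\facts{\distp}$, so the families of subinstances quantified over in the two problems agree, and $(\qr,\distp)$ is a yes-instance of $\PC(\classpfin)$ if and only if $(\qr,n)$, with $\distp$ as the black box, is a yes-instance of $\PC(\classpnrel)$.

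The third step is to observe that, as a relation on pairs $(\node,\fc)$, the policy $\distp$ is decidable by a lookup in the given list and is thus $\NP$-testable, so $\distp\in\classpnrel$ and the black-box interface demanded by $\PC(\classpnrel)$ is indeed available. Composing the reduction with Proposition~\ref{prop:pc-nrel-in-phtwo} then yields $\PC(\classpfin)\in\phtwo$. The only point requiring care --- and really the only conceivable obstacle --- is the bookkeeping around $\distp_n$: one must make sure the domain- and network-restrictions are vacuous and that the auxiliary parameter $n$ stays polynomially bounded under its unary encoding, both of which are settled precisely by taking $n$ to be the largest length actually appearing in the enumeration.
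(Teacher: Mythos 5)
Your reduction is correct, but it is not the route the paper takes for this corollary. The paper does \emph{not} reduce $\PC(\classpfin)$ to $\PC(\classpnrel)$; instead it re-opens the $\phtwo$ algorithm from the proof of Proposition~\ref{prop:pc-nrel-in-phtwo} and inserts two extra polynomial-time tests, namely whether the guessed valuations $V$ and $V'$ satisfy $\qr$ on $\facts{\distp}$ (rejecting if $V$ fails, accepting if $V'$ fails). You instead mirror the reduction used for Corollary~\ref{cor:pci-fin-in-phtwo}: take $n$ to be the largest representation length occurring in the explicit enumeration, observe that both restrictions defining $\distp_n$ are then vacuous so that $\distp_n=\distp$ and $\facts{\distp_n}=\facts{\distp}$, note that list lookup makes $\distp$ an $\NP$-testable (indeed polynomial-time) relation, and invoke Proposition~\ref{prop:pc-nrel-in-phtwo} as a black box. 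What your approach buys is uniformity and brevity --- the two finite-policy corollaries are handled identically, and you correctly identify the unary encoding of $n$ as the only point needing care. What the paper's approach buys is that it makes the restriction to subinstances of $\facts{\distp}$ completely explicit inside the algorithm: in the explicit setting, membership of a fact in $\facts{\distp}$ is a trivial lookup, whereas in the black-box setting it is not, and the algorithm of Proposition~\ref{prop:pc-nrel-in-phtwo} quantifies over valuations without visibly enforcing $V(\body{\qr})\subseteq\facts{\distp_n}$. Your argument delegates exactly this point to the correctness of Proposition~\ref{prop:pc-nrel-in-phtwo} as stated (i.e., that it decides the problem with the quantification over $I\subseteq\facts{\distp_n}$), which is a legitimate citation but is precisely the bookkeeping the paper chose to redo by hand rather than inherit.
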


\begin{proof}
	The algorithm for $\PC(\classpnrel)$ in the proof of
	Proposition~\ref{prop:pc-nrel-in-phtwo} can be adapted to a decision algorithm
	for $\PC(\classpfin)$. Before the other tests, it is checked whether valuations
	$V,V'$ satisfy $\qr$ on $\facts{\distp}$. If $V$ does not satisfy $\qr$ on
	$\facts{\distp}$, then the algorithm rejects. If $V'$ does not satisfy $\qr$
	on $\facts{\distp}$, then the algorithm accepts. In any other case, the
	algorithm continues.
	
	Obviously, both tests can be accomplished in time polynomial in $\langle \qr,
	\distp \rangle$.
\end{proof}

\subsubsection{Lower bounds}

We first prove the lower bound for $\PCI(\classpfin)$ in
Proposition~\ref{prop:pci-fin-phtwo-hard} by a reduction from $\twoTQBF$. This
reduction can also be adapted to a polynomial reduction to $\PC(\classpfin)$,
Proposition~\ref{cor:pc-fin-phtwo-hard}.

\begin{proposition}
	\label{prop:pci-fin-phtwo-hard}
	 $\PCI(\classpfin)$ is $\phtwo$-hard, even for distribution policies with
     only two nodes.
\end{proposition}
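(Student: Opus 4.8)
The plan is to reduce from $\twoTQBF$ with a $3$-CNF matrix (justified by Remark~\ref{rem:pik-completeness-nf}). So let $\phi = \forall \vx\, \exists \vy\; \psi(\vx,\vy)$ with $\psi = \bigwedge_{j=1}^{\lc} \disClause{j}$ a $3$-CNF over variables $\vx = (\seqx)$ and $\vy = (\seqy)$. From $\phi$ I will construct, in polynomial time, a CQ $\qr$, an instance $I$, and a distribution policy $\distp \in \classpfin$ with exactly two nodes $\{\varf,\vart\}$ (read as the truth values $0$ and $1$), such that $\qr$ is parallel-correct on $I$ under $\distp$ if and only if $\phi$ is true. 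The idea is that a valuation of $\qr$ on $I$ picks a truth assignment $\beta_\vx$ to the universally quantified variables (and, via auxiliary variables, a putative assignment $\beta_\vy$ to the existential ones); minimality of that valuation will force $\beta_\vy$ to be chosen so as to satisfy $\psi$; and condition (C1) of Lemma~\ref{lem:pc} (in its subinstance form, Lemma~\ref{lem:pc-subinstances}, although here we can just work with the fixed $I$) will hold exactly when every such $\beta_\vx$ can be extended — i.e.\ when $\phi$ holds.

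Concretely, I would introduce per-$\vx$-variable atoms of the form $\Dom(x_\itx)$ ranging over $\{\varf,\vart\}$ in $I$, so that a valuation chooses $\beta_\vx$ freely. For the existential side I would not add truth-value atoms directly; instead I would design the body so that the \emph{only} freedom a valuation has in the $\vy$-coordinates is forced by a minimality requirement. The standard trick (used, e.g., in the companion lower bounds in this paper) is: include a relation encoding, for each clause $C_j$, the triples of literal-values that satisfy $C_j$, together with atoms that tie the clause-variables to the chosen $\beta_\vx, \beta_\vy$; then have $I$ also contain a single "collapsed" fact that a non-minimal valuation would use instead, so that a valuation is minimal precisely when it does \emph{not} fall back on that collapsed fact, which in turn forces a genuine satisfying assignment. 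The head of $\qr$ records $\beta_\vx$ (and whatever is needed so that distinct $\beta_\vx$'s give distinct derived facts), and the distribution policy $\distp$ over $\{\varf,\vart\}$ sends each domain-like fact to the node corresponding to its truth value and sends all "structural" facts to both nodes, so that the facts needed by a (minimal) valuation meet at a single node iff the induced $\beta_\vy$ is consistent node-wise — which, because the two nodes are just $0$ and $1$, is automatic once $\beta_\vy$ is a well-defined assignment. One then checks: a derived fact $\fc \in \qr(I)$ witnessing $\beta_\vx$ fails to be produced at any node iff every minimal valuation deriving $\fc$ requires facts that do not all land on one node, iff no satisfying $\beta_\vy$ exists for that $\beta_\vx$ — giving non-parallel-correctness exactly when $\phi$ is false.

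The main obstacle, and the part that needs the most care, is the encoding of the $\vy$-quantifier via minimality: one must make sure (i) that for a fixed $\beta_\vx$ there \emph{is} a minimal valuation of $\qr$ on $I$, and that its existence is equivalent to $\exists \beta_\vy\, (\beta_\vx\cup\beta_\vy)\models\psi$; (ii) that the "collapsing" fact used to witness non-minimality is available in $I$ for every $\beta_\vx$ but only shrinks the required-facts set when $\psi$ is \emph{not} satisfied; and (iii) that the two-node distribution policy faithfully turns "facts of the minimal valuation meet at a node" into the desired Boolean condition without introducing spurious meetings or spurious separations. After these are set up correctly, the equivalence "$\qr$ parallel-correct on $I$ under $\distp$ $\iff$ $\phi$ true" follows directly from Lemma~\ref{lem:pc-subinstances} (or Lemma~\ref{lem:pc}), and all constructions are visibly polynomial, so $\phtwo$-hardness with two nodes follows; the adaptation to $\PC(\classpfin)$ (Proposition~\ref{cor:pc-fin-phtwo-hard}) is then obtained by arranging $\facts{\distp}$ so that the relevant subinstances are exactly the ones encoding choices of $\beta_\vx$.
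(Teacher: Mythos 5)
Your plan targets the right reduction (from \twoTQBF with a 3-CNF matrix, a two-node policy, $\beta_\vx$ recorded in the head, clause relations listing the satisfying triples, and extra ``fallback'' all-zero clause facts in $I$ so that every $\beta_\vx$ yields a derivable head fact), and this is indeed the shape of the paper's construction. However, the one concrete piece of the construction you do commit to --- the distribution policy --- does not work, and it is precisely the piece you yourself flag as the part ``that needs the most care.'' Sending each domain-like fact to the node named by its truth value and all structural facts to both nodes introduces exactly the spurious separations you worry about: any valuation whose $\beta_\vx$ is mixed (some variables true, some false) already requires a fact living only at node $0$ and a fact living only at node $1$, so its required facts never meet, and the query would fail to be parallel-correct regardless of $\phi$. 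The paper's policy does not split by truth value at all: it puts all of $\necFacts$ (the $\True$-, $\False$-, $\Neg$-facts and every $\Clause_j(\vb)$ with $\vb\neq(0,0,0)$) on one node $\nodeA$ and only the facts $\Clause_j(0,0,0)$ on the other node $\nodeB$. Then a head fact $H(\beta_\vx)$ is derivable at $\nodeA$ iff some $\beta_\vy$ satisfies all clauses (since the all-zero clause facts are absent there), and nothing is derivable at $\nodeB$ (no $\Neg$-facts), which gives the equivalence with $\phi$ directly.

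A second, related issue is that routing the $\exists\vy$ quantifier through \emph{minimality} is both unnecessary and left unresolved in your sketch. You assert that ``minimality forces $\beta_\vy$ to satisfy $\psi$,'' but when a given $\beta_\vx$ admits \emph{no} satisfying $\beta_\vy$, minimal valuations with that head still exist (minimality is well-founded) and are necessarily unsatisfying; your argument then needs the policy to separate the facts of \emph{those} valuations, which your sketched policy does not do. The paper sidesteps this entirely: it argues from Definition~\ref{def:pc-inst} that the existential over $\beta_\vy$ is realized by ``some valuation derives $H(\beta_\vx)$ at node $\nodeA$,'' and the absence of the all-zero clause facts at $\nodeA$ is what forces that valuation to encode a satisfying assignment. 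As it stands, items (i)--(iii) of your list of obstacles are the actual content of the proof, and they are acknowledged rather than discharged, so the proposal has a genuine gap.
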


\begin{proof}
	We give a polynomial reduction from the $\phtwo$-complete problem
	$\twoTQBF$.

 	Let $\phi$ be an input for $\twoTQBF$, i.e., a formula of the form
	\begin{math}
		\forall \vx \exists \vy\, \psi(\vx,\vy).
	\end{math}
	We assume $\psi$ to be a propositional formula in 3-CNF with variables
	$\vx=(\seqx)$ and $\vy=(\seqy)$. Let $C_1,\dots,C_\lc$ denote their
	(disjunctive) clauses, where, for each $j$, $C_\itc=(\lit{\itc}{1} \lor \lit{\itc}{2} \lor
	  			\lit{\itc}{3})$. 

	We describe next how the corresponding input instance for
        $\PCI(\classpfin)$, consisting of a query~$\dquery$, a database
        instance~$\dinstance$, and a distribution policy~$\dpolicy$ is defined. 

	The query~$\dquery$ is formulated over variables $\vart$, $\varf$, and
	$x_\itx,\nx_\itx, y_\ity,\ny_\ity$, for $\itx\in\{\indx\}$ and
	$\ity\in\{\indy\}$. Intuitively, these variables are intended to represent the
	Boolean values true and false and the (negated) values of the variables
	$x_\itx,y_\ity$ in $\psi$, respectively.  We overload the notation $\ell_{j,i}$ as follows: if $\ell_{j,i}$
    is a negated literal $\neg x$ in $C_j$, then
    $\ell_{j,i}$ also denotes the variable $\nx$.
	
	Let $\threePosBool\mydef\threeBool \setminus \{(0,0,0)\}$
	be the set of non-zero Boolean triples and $\threePosVar \mydef \Var
	\setminus \{(\varf,\varf,\varf)\}$ the set of triples over
	$\{w_0,w_1\}$ that contain at least one $w_1$.

	We define $\dquery$ to be the query with $\head{\dquery}=H(\seqx)$ and
	\begin{math}
	  	\body{\dquery} = \consistency{3-CNF} \cup \structure{3-CNF}{\psi},
	\end{math}
	where
	\begin{displaymath}
		\consistency{3-CNF}
		\mydef
		\big\{\True(\vart), \False(\varf),\Neg(\vart,\varf), \Neg(\varf,\vart)\big\}
	  	\cup
	  	\big\{
	  		\Clause_\itc(\vw) \mid \itc \in \{\indc\}, \vw \in \posVar
	  	\big\}
	\end{displaymath}
	is a set of consistency atoms, representing valid combinations of values for
	$\Neg$-facts and satisfying combinations of values for $\Clause_\itc$-facts,
	and
	\begin{displaymath}
		\begin{array}{lll}
	  		\structure{3-CNF}{\psi} & \mydef & \big\{\Neg(x,\nx) \mid x \in
	  		\{\seqx,\seqy\}\big\}, \\
	  			& \cup & \big\{\Clause_\itc(\seqlit{\itc}) \mid
	  			\text{for each clause $C_\itc=(\lit{\itc}{1} \lor \lit{\itc}{2} \lor
	  			\lit{\itc}{3})$}\big\}
	  			\\
	  	\end{array}
	\end{displaymath}
	is a set of atoms representing the logical structure of $\psi$: it relates
	variable $x_\itx$ to $\nx_\itx$ and also variable $y_\ity$ to $\ny_\ity$ for
	each $\itx \in \{\indx\}$ and $\ity \in \{\indy\}$, respectively. 
	Additionally,
	it relates all variables that represent literals occurring in the same clause
	to each other.,
The database instance  $\dinstance$ is defined as
	\begin{displaymath}
	  	\big\{\True(1), \False(0), \Neg(1,0), \Neg(0,1)\big\} \cup
	  	\big\{\Clause_j(\vb) \mid j \in \{1,\dots,\lc\}, \vb \in
	  	\threeBool
	  	\big\}.
	\end{displaymath}
	 We partition it into $\negFacts \mydef \{\Clause_j(0,0,0)
         \mid j \in \{1,\dots,\lc\}\}$ and $\necFacts \mydef
         \dinstance \setminus \negFacts$. 

We	define $\dpolicy$ to be the finite distribution policy for $\dinstance$ over
	  	a network $\nw=\{\nodeA,\nodeB\}$ as
	\begin{displaymath}
	  	\dpolicy(\fc) = \left\{
	  		\begin{array}{ll}
	  			\{\nodeA\} & \text{if } \fc \in \necFacts, \\
	  			\{\nodeB\} & \text{if } \fc \in \negFacts. \\
	  		\end{array}
	  	\right.
	\end{displaymath}

	\paragraph*{Correctness}
	Obviously, query $\dquery$, instance $\dinstance$ and distribution policy
	$\dpolicy$ can be computed in polynomial time from
	$\phi$.
	We will now prove that this mapping is indeed a reduction,
        that is, that
	$(\dquery,\dinstance,\dpolicy) \in \PCI(\classpfin)$ if and only if
	$\phi\in\twoTQBF$.
	
	\subparagraph*{(if)}
	Let $\phi\in\twoTQBF$. We need to show  $(\dquery,\dinstance,\dpolicy) \in
	\PCI(\classpfin)$. To this end, let $\fc=H(a_1,\dots,a_\lx)$ be an arbitrary
	fact in $\dquery(\dinstance)$. We show that $\fc\in
        \dquery(\necFacts)$ and thus that $\fc$ is derived at node $\nodeA$.

	Since $\adom{I}=\{0,1\}$, it holds $a_1,\dots,a_\lx\in\{0,1\}$. Therefore, the
	mapping $\beta_\vx$ defined by $\beta_\vx(x_\itx)=a_\itx$ for each $\itx \in
	\{\indx\}$ is a well-defined \ta. By assumption, there is a \ta $\beta_\vy$ for
	$\seqy$ such that $\beta \models \psi$, where $\beta\mydef\beta_\vx \cup \beta_\vy$.
    
    We define valuation $V$ for $\dquery$ by $\vart
    \mapsto 1$, $\varf \mapsto 0$ and $u \mapsto \beta(u)$, $\negu \mapsto
    \overline{\beta(u)}$ for $u \in \{\seqx,\seqy\}$, respectively.
    Since $\beta\models\psi$, in particular, $\beta\models C_\itc$ for
    every $\itc \in
    \{\indc\}$. Therefore, for every  $\itc \in
    \{\indc\}$ there is some $\vb\in \threePosBool$ such that $V(\Clause_\itc(\seqlit{\itc})) =
    \Clause_\itc(\vb)$. Hence, all required $\Clause_\itc$-facts and $\True$-,
    $\False$-, $\Neg$-facts in $V(\body{\dquery})$ are contained in
    $\necFacts$ and therefore $H(a_1,\dots,a_\lx)=V(\head{\dquery})$
    can be derived on node~$\nodeA$.
	  
	\subparagraph*{(only-if)}
	Now, let $\phi\notin\twoTQBF$. We show that $(\dquery,\dinstance,\dpolicy)
	\notin \PCI(\classpfin)$. By the assumption on $\phi$ there is a valuation
	$\beta_\vx$ for $\seqx$ such that  for every valuation
	$\beta_\vy$ for $\seqy$ it holds $(\beta_\vx \cup \beta_\vy) \not\models \psi$.
	
	Let $\misFact=H(\beta_\vx(x_1),\dots,\beta_\vx(x_\lx))$ and $\beta$ be the
	extension of $\beta_\vx$ on the variables $\seqy$ defined by $\beta(y_\ity)=0$
	for every $\ity \in \{\indy\}$. This truth assignment induces a valuation $V$
	that satisfies $\dquery$ on $\dinstance$ because $\Clause_\itc(\vb)$-facts for all
	tuples $\vb \in \threeBool$ are in $\dinstance$. Thus, $\misFact \in \dquery(\dinstance)$.
	
	Clearly, $\misFact$ can not be derived at node
        $\nodeB$ as, e.g., $\negFacts$ does not contain any
        $\Neg$-facts. 
	However, $\misFact$ can neither be derived at node $\nodeA$:
        Towards a contradiction, let us assume that some
        valuation $V$ yields $\misFact$ at node $\nodeA$. In
        particular, $V$ needs to map all $C_j$-atoms of
        $\structure{3-CNF}{\psi}$ to facts from $\necFacts$. But then
        $V$ induces a truth assignment $\beta_\vy$ on variables $\seqy$ with
	$(\beta_\vx\cup\beta_\vy)\models\psi$, contradicting the choice of $\beta_\vx$.

\end{proof}

\begin{proposition}
	\label{cor:pc-fin-phtwo-hard}
	Problem $\PC(\classpfin)$ is $\phtwo$-hard, even for distribution policies with
     only two nodes.
\end{proposition}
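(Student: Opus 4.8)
The plan is to reuse the reduction from the proof of Proposition~\ref{prop:pci-fin-phtwo-hard} essentially unchanged: given an input $\phi=\forall\vx\exists\vy\,\psi(\vx,\vy)$ for $\twoTQBF$, with $\psi$ in 3-CNF, we output the pair $(\dquery,\dpolicy)$, i.e.\ we simply drop the explicit instance $\dinstance$ from the output, observing that by construction $\facts{\dpolicy}=\dinstance$. This map is clearly polynomial-time computable, and the claim to establish is $(\dquery,\dpolicy)\in\PC(\classpfin)$ if and only if $\phi\in\twoTQBF$. Since $\PC(\classpfin)$ quantifies over all subinstances $I'\subseteq\facts{\dpolicy}=\dinstance$, the natural tool is Lemma~\ref{lem:pc-subinstances}: $(\dquery,\dpolicy)\in\PC(\classpfin)$ iff for every \inducedminimal valuation $V$ that satisfies $\dquery$ on $\dinstance$ it holds that $\bigcap_{\fc\in V(\body{\dquery})}\dpolicy(\fc)\neq\emptyset$.

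First I would describe the satisfying valuations of $\dquery$ on $\dinstance$. Because $\dinstance$ contains exactly one $\True$-fact, one $\False$-fact, and only the two $\Neg$-facts $\Neg(1,0),\Neg(0,1)$, any $V$ satisfying $\dquery$ on $\dinstance$ must map $\vart\mapsto 1$, $\varf\mapsto 0$, and each pair $(x,\nx)$ consistently, hence it induces a truth assignment $\beta=\beta_\vx\cup\beta_\vy$ on $\vx,\vy$. The crucial observation — which is exactly what makes the reduction survive the change from $\PCI$ to $\PC$ — is that the consistency atoms $\Clause_\itc(\vw)$ for all $\vw\in\posVar$, together with the $\True/\False/\Neg$ consistency atoms, force $V(\body{\dquery})$ to contain every fact of $\necFacts$, no matter what $\beta$ is; while the clause atoms $\Clause_\itc(\ell_{\itc,1},\ell_{\itc,2},\ell_{\itc,3})$ of $\structure{3-CNF}{\psi}$ contribute the fact $\Clause_\itc(0,0,0)\in\negFacts$ precisely when $\beta\not\models C_\itc$. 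Consequently $V(\body{\dquery})\subseteq\necFacts$ iff $\beta\models\psi$, in which case $\bigcap_{\fc\in V(\body{\dquery})}\dpolicy(\fc)=\{\nodeA\}$; and otherwise $V(\body{\dquery})$ meets both $\necFacts$ and $\negFacts$, so that intersection is $\{\nodeA\}\cap\{\nodeB\}=\emptyset$. Thus the Lemma~\ref{lem:pc-subinstances} condition is equivalent to: every \inducedminimal valuation satisfying $\dquery$ on $\dinstance$ induces an assignment that satisfies $\psi$.

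It then remains to connect this with $\phi$. For the ``$\phi$ false'' direction I would take a $\beta_\vx$ witnessing $\phi\notin\twoTQBF$ (so $\beta_\vx\cup\beta_\vy\not\models\psi$ for every $\beta_\vy$), extend it by $\vy\mapsto 0$ to a valuation $V$ satisfying $\dquery$ on $\dinstance$, and pass to a minimal valuation $V'$ with $V'\le_{\dquery}V$ (it still satisfies $\dquery$ on $\dinstance$, since $V'(\body{\dquery})\subseteq V(\body{\dquery})\subseteq\dinstance$); because $V'$ agrees with $V$ on the head, hence on the $\vx$-variables, $V'$ still induces $\beta_\vx$ on $\vx$ and therefore an assignment $\beta$ with $\beta\not\models\psi$, so the condition fails for $V'$ and $(\dquery,\dpolicy)\notin\PC(\classpfin)$. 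For the ``$\phi$ true'' direction the main obstacle is that a subinstance could still admit a satisfying valuation $V$ whose induced $\vy$-part is ``bad'' ($\beta\not\models\psi$); the point to establish is that every such $V$ is non-minimal. Here I would use that, since $\phi$ is true, there is some $\beta^\ast_\vy$ with $\beta_\vx\cup\beta^\ast_\vy\models\psi$, and build $V'$ from $\beta_\vx\cup\beta^\ast_\vy$ (with $\vart\mapsto 1$, $\varf\mapsto 0$ and the matching $\nx$-values): $V'$ derives the same head fact as $V$, and $V'(\body{\dquery})\subseteq\necFacts\subseteq V(\body{\dquery})$, the inclusion being strict because $V(\body{\dquery})$ contains some $\Clause_\itc(0,0,0)$ that $V'(\body{\dquery})$ does not; hence $V'<_{\dquery}V$ and $V$ is not \inducedminimal. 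So every \inducedminimal satisfying valuation is ``good'', the condition holds, and $(\dquery,\dpolicy)\in\PC(\classpfin)$. The single delicate point throughout is the verification that the consistency atoms really do pin all of $\necFacts$ inside $V(\body{\dquery})$ for every satisfying $V$ — this is what lets the ``repair'' argument go through — but that fact is already implicit in the correctness proof of Proposition~\ref{prop:pci-fin-phtwo-hard}; everything else is routine bookkeeping.
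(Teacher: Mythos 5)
Your reduction is exactly the paper's --- you output the same pair $(\dquery,\dpolicy)$ from the proof of Proposition~\ref{prop:pci-fin-phtwo-hard} --- and your correctness argument is sound, but you verify the ``$\phi$ true'' direction by a genuinely different route. The paper avoids minimality entirely: it notes that every satisfying valuation for $\dquery$ requires all of $\necFacts$ (your ``crucial observation''), so for any $I\subseteq\facts{\dpolicy}$ either $\necFacts\not\subseteq I$ and $\dquery(I)=\emptyset$, or $\necFacts\subseteq I\subseteq\dinstance$ and monotonicity together with $\dquery(\necFacts)=\dquery(\dinstance)$ yields $\dquery(I)=\dquery(\necFacts)$, all of which is produced at node $\nodeA$. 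You instead invoke Lemma~\ref{lem:pc-subinstances} and show that every satisfying valuation inducing a non-satisfying assignment is non-minimal, via the repair $V'$ built from $\beta_\vx\cup\beta^\ast_\vy$; this goes through because $V'(\body{\dquery})\subseteq\necFacts\subseteq V(\body{\dquery})$ with strict inclusion witnessed by some $\Clause_\itc(0,0,0)$. Both arguments hinge on the same key fact that $V(\consistency{3-CNF})=\necFacts$ for every satisfying $V$; the paper's version is a bit more economical (no minimality bookkeeping), while yours makes explicit which valuations are minimal and thus lines up more directly with the general characterization used in the upper-bound proofs. Your ``$\phi$ false'' direction is in substance the paper's (failure of parallel-correctness on the subinstance $\dinstance$ itself), merely rephrased through minimal valuations, and the passage to a minimal $V'\le_\dquery V$ is legitimate since $<_\dquery$ strictly decreases the finite set of required facts.
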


\begin{proof}[Proof]
Again, we give a polynomial reduction from the $\phtwo$-complete problem
	$\twoTQBF$. 	Let $\phi$ be an input formula for $\twoTQBF$. The (intended) reduction maps
        $\phi$ to the query $\dquery$ and the distribution policy
        \dpolicy, as defined in the proof of
        Proposition~\ref{prop:pci-fin-phtwo-hard}. 

This mapping is clearly computable in polynomial time and we show next that
it is a reduction, that is, 	$(\dquery,\dpolicy) \in \PC(\classpfin)$ if and only if
	$\phi\in\twoTQBF$.

	\smallskip
	\noindent
	(if)
Let $\phi\in\twoTQBF$. We have to show that $\dquery$ is
parallel-correct on all input instances
        under $\dpolicy$.

	Let $I$ be such an instance. As $\True(1)$ and $\False(0)$ are
        the only facts under \dpolicy for relation symbols $\True$ and $\False$,
        respectively, every satisfying valuation for
        $\dquery$ needs to map $w_0\mapsto 0$ and $w_1\mapsto 1$.

If $\necFacts \not\subseteq I$, then
	$\dquery(I)=\emptyset$ because every satisfying valuation for $\dquery$ requires all facts from $\necFacts$ to satisfy the
	set $\consistency{3-CNF}$. Thus, in this case, $\dquery$  is parallel-correct on $I$ under
	distribution policy~$\dpolicy$.
	
If $\necFacts \subseteq I$, we have $\necFacts \subseteq I \subseteq
\dinstance$. In the proof of Proposition~\ref{prop:pci-fin-phtwo-hard}
it was shown that
$\dquery(\necFacts)=\dquery(\dinstance)$. Monotonicity yields
$\dquery(\necFacts)=\dquery(I)$ and therefore all facts in
$\dquery(I)$ are produced at node \nodeA.
	
	\smallskip
	\noindent
	(only-if)
	Let $\phi\notin\twoTQBF$. It was shown in  the proof of
        Proposition~\ref{prop:pci-fin-phtwo-hard} that in this case
        $\dquery$ is not parallel-correct on instance $\dinstance$
        under $\dpolicy$ and, in particular, it is not parallel-correct 
        under $\dpolicy$.
\end{proof}


\section{Proofs for Section \ref{sec:trans}: Transferability}


\subsection{Proof of Lemma~\ref{lem:trans-sem-char}.}
\label{appendix:sem:trans}

The two implications of Lemma~\ref{lem:trans-sem-char} are shown in
Propositions~\ref{prop:trans-char-d1} and \ref{prop:trans-char-d2} below.

\begin{proposition} Let $\qr$ and $\qr'$ be CQs. If 
    condition (C2) holds, 
    then, parallel-correctness transfers from
    $\qr$ to $\qr'$.\label{prop:trans-char-d1} \end{proposition}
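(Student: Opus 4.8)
The plan is to lean on the semantic characterization of parallel-correctness from Lemma~\ref{lem:pc}. So fix a distribution policy $\distp$ over some network $\nw$ for which $\qr$ is parallel-correct; by Lemma~\ref{lem:pc} this is the same as saying that condition (C1) holds for $\qr$ and $\distp$. I then want to show that $\qr'$ is parallel-correct under $\distp$, i.e.\ that $\qr'(I)=\bigcup_{\node\in\nw}\qr'(\dist{I}(\node))$ for every instance $I$. The inclusion $\supseteq$ is immediate from monotonicity of conjunctive queries, since $\dist{I}(\node)\subseteq I$ for every node, so only $\subseteq$ requires work.

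Next I would take an arbitrary fact $\fc\in\qr'(I)$ and extract from a derivation of it a \emph{minimal} satisfying derivation. Concretely, among all valuations for $\qr'$ that are satisfying on $I$ and derive $\fc$ (there is at least one, by definition of $\qr'(I)$), I choose one, say $V'$, for which $|V'(\body{\qr'})|$ is smallest. I would then observe that $V'$ is in fact \inducedminimal for $\qr'$: if some valuation $V''$ for $\qr'$ satisfied $V''<_{\qr'}V'$, then $V''$ would agree with $V'$ on the head (hence still derive $\fc$) and satisfy $V''(\body{\qr'})\subsetneq V'(\body{\qr'})\subseteq I$, hence still be satisfying on $I$, contradicting the minimality of $|V'(\body{\qr'})|$.

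Now I would invoke (C2): there is a \inducedminimal valuation $V$ for $\qr$ with $V'(\body{\qr'})\subseteq V(\body{\qr})$. Since $\qr$ is parallel-correct under $\distp$, condition (C1) yields a node $\node\in\nw$ with $\node\in\distp(\fcB)$ for every $\fcB\in V(\body{\qr})$, and hence in particular for every $\fcB\in V'(\body{\qr'})$. Because $V'$ is satisfying on $I$, we also have $V'(\body{\qr'})\subseteq I$, so every such $\fcB$ lies in $I$ and is assigned to $\node$; that is, $V'(\body{\qr'})\subseteq\dist{I}(\node)$. Thus $V'$ is satisfying on $\dist{I}(\node)$ and derives $\fc$, so $\fc\in\qr'(\dist{I}(\node))\subseteq\bigcup_{\node\in\nw}\qr'(\dist{I}(\node))$, which completes the inclusion and the proof.

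I expect the only genuine subtlety to be the passage from an arbitrary derivation of $\fc$ to a \inducedminimal one: one must be careful that minimality is required against \emph{all} valuations of $\qr'$, not merely against the ones satisfying on $I$, which is precisely why the argument selects the derivation minimizing the number of required facts rather than just picking any $<_{\qr'}$-minimal satisfying valuation. Everything else is routine bookkeeping with the definitions and with the characterizations (C1) and (C2).
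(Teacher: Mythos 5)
Your proof is correct and follows essentially the same route as the paper's: reduce to the inclusion $\qr'(I)\subseteq\bigcup_{\node}\qr'(\dist I(\node))$ by monotonicity, pass from an arbitrary derivation of $\fc$ to a \inducedminimal one, apply (C2) to obtain a \inducedminimal valuation $V$ for $\qr$, and use parallel-correctness of $\qr$ (via condition (C1) of Lemma~\ref{lem:pc}) to find a node collecting all of $V(\body{\qr})\supseteq V'(\body{\qr'})$. The only cosmetic difference is that the appendix re-derives the ``facts meet at a node'' step from the definition applied to the instance $V(\body{\qr})$ rather than citing Lemma~\ref{lem:pc}, and your explicit justification that a cardinality-minimizing satisfying valuation is \inducedminimal is a detail the paper leaves implicit.
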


\begin{proof} 
Let $\distp$ be a distribution policy under which $\qr$ is parallel-correct
and let $I$ be an instance. Then we show that $\qr'$ is parallel-correct as well on $I$ under \distp. By monotonicity of CQs, $\bigcup_{x\in\nw}\qr'(\dist I (x)) \subseteq \qr'(I)$. Thus it suffices to show that for every fact $\fc \in \qr'(I)$, there is some valuation for $\qr'$
that allows to derive \fc on one of the chunks of $I$ under \distp.
For $\fc \in \qr'(I)$, there is a \inducedminimal valuation $V'$ for $\qr'$ which satisfies on $I$ for $\qr'$ and derives \fc. That is, $V'(\body{\qr'}) \subseteq I$ and
$V'(\head{\qr'}) = \fc$.  Next, we show that the facts required by $V'$ for $\qr'$ meet at
some node under \distp, which implies that the chunks of $I$ under \distp indeed allow
deriving \fc. 

For this, we rely on the assumption that there is a \inducedminimal valuation
$V$ for $\qr$, where $V'(\body{\qr'}) \subseteq V(\body{\qr})$. 
Let $J = V(\body{\qr})$. Then, by parallel-correctness of $\qr$ under \distp, there is a
valuation $W$ and node $\node \in \nw$, such that $W(\body{\qr}) \subseteq \dist
J (\node)$ and $W(\head{\qr}) = V(\head{\qr})$.  Because $V$ is \inducedminimal
and $\dist J (\node) \subseteq V(\body{\qr})$, it must be that $V(\body{\qr}) =
W(\body{\qr})$.
So, \distp maps all the facts in $J$ onto node $\node$, implying that all the
facts in $V'(\body{\qr'})$ are mapped onto node $\node$ under \distp (because
$V'(\body{\qr'}) \subseteq V(\body{\qr}) = J$).

Hence, $\qr'$ is indeed parallel-correct under the distribution policies for which $\qr$ is parallel-correct.
\end{proof}

\begin{proposition} Let $\qr$ and $\qr'$ be CQs. If parallel-correctness transfers from
    $\qr$ to $\qr'$, then, 
    condition (C2) holds.
    \label{prop:trans-char-d2} \end{proposition}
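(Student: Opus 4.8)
The plan is to argue by contraposition, realizing the only-if sketch of Lemma~\ref{lem:trans-sem-char}: assuming that (C2) fails, I would construct a single distribution policy \distp under which \qr is parallel-correct but $\qr'$ is not, so that parallel-correctness does not transfer from \qr to $\qr'$. Failure of (C2) yields a \im valuation $V'$ for $\qr'$ such that for \emph{every} \im valuation $V$ for \qr we have $V'(\body{\qr'})\not\subseteq V(\body{\qr})$. If $V'(\body{\qr'})=\emptyset$ -- which can happen only when $\qr'$ has an empty body -- then (C2) holds trivially, so I may assume $m\mydef|V'(\body{\qr'})|\ge 1$.

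Write $I\mydef V'(\body{\qr'})=\{\fc_1,\dots,\fc_m\}$, set $\nw\mydef\{\kappa_1,\dots,\kappa_m\}$, and define \distp by $\distp(\fc_i)\mydef\nw\setminus\{\kappa_i\}$ for each $i$ and $\distp(\fcB)\mydef\nw$ for every fact $\fcB\in\facts{\sch}\setminus I$ (for $m=1$ this makes $\distp(\fc_1)=\emptyset$, which is allowed). I would first verify, via the characterization of Lemma~\ref{lem:pc}, that \qr is parallel-correct under \distp: given any \im valuation $V$ for \qr, the failure of (C2) supplies some $\fc_i\notin V(\body{\qr})$, and then $\kappa_i\in\distp(\fc)$ for every $\fc\in V(\body{\qr})$ (if $\fc=\fc_j$ then $j\ne i$, so $\kappa_i\in\nw\setminus\{\kappa_j\}$; if $\fc\notin I$ then $\distp(\fc)=\nw$). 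Hence $\kappa_i\in\bigcap_{\fc\in V(\body{\qr})}\distp(\fc)$, so (C1) holds and \qr is parallel-correct under \distp.

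Next I would show that $\qr'$ is not parallel-correct under \distp by testing it on the instance $I$. Since $V'(\body{\qr'})=I\subseteq I$, the fact $\fhead\mydef V'(\head{\qr'})$ lies in $\qr'(I)$. A direct computation gives $\dist{I}(\kappa_i)=I\setminus\{\fc_i\}\subsetneq I$ for each $i$. If $\fhead$ were derived at some node $\kappa_i$, there would be a valuation $W$ for $\qr'$ with $W(\body{\qr'})\subseteq I\setminus\{\fc_i\}\subsetneq I=V'(\body{\qr'})$ and $W(\head{\qr'})=\fhead=V'(\head{\qr'})$, i.e.\ $W<_{\qr'}V'$, contradicting \imity of $V'$. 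Thus $\fhead\notin\bigcup_{\kappa\in\nw}\qr'(\dist{I}(\kappa))$, so $\qr'$ is not parallel-correct under \distp, which completes the contrapositive.

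The argument is mostly bookkeeping, and I do not expect a genuine obstacle. The one point that needs care is the exact form of the negation of (C2): what is available is only that \emph{no \im valuation} of \qr covers $V'(\body{\qr'})$, and this is precisely what is needed to invoke Lemma~\ref{lem:pc} (rather than the stronger, unavailable condition (C0)) when establishing parallel-correctness of \qr under \distp. The degenerate cases $m\in\{0,1\}$ should be flagged but cause no trouble, as noted above.
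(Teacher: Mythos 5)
Your proof is correct and follows essentially the same route as the paper's: the same contrapositive construction with $\nw=\{\kappa_1,\dots,\kappa_m\}$ and $\distp(\fc_i)=\nw\setminus\{\kappa_i\}$, verified via Lemma~\ref{lem:pc} on one side and \imity of $V'$ on the other. The only (cosmetic) difference is that you fold the paper's separate case $m=1$ (a single-node policy that skips $\fc_1$) into the general formula by allowing $\distp(\fc_1)=\emptyset$, and you are slightly more careful than the paper in stating the exact negation of (C2) that is actually used.
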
 \begin{proof} The proof
    is by contraposition.  So, we assume that there is a \inducedminimal valuation $V'$
    for $\qr'$ for which there is no valuation $V$ for $\qr$, where
    $V'(\body{\qr'})\subseteq V(\body{\qr})$. Let $m = |V'(\body{\qr'})|$.

We distinguish two cases, depending on whether $V'$ requires only one fact or at
least two facts.  For both cases we construct a network $\nw$ and distribution policy \distp over $\nw$ for which $\qr$ is parallel-correct but $\qr'$ is not, implying that
parallel-correctness does \emph{not} transfer from $\qr$ to $\qr'$. 

\begin{enumerate}

    \item \underline{Case ($m = 1$):} Let $V'(\body{\qr'}) = \{\fc\}$.  Let $\nw$ be a
        single-node network, i.e., $\nw \mydef \{\node\}$.  For $\distp$ we consider the
        distribution policy thats skips $\fc$, that is, maps $\distp(\fc)$ to the empty set, and maps every other fact in
        $\facts \sch$ onto node $\node$.
        By assumption on $V'$, none of the \inducedminimal valuations for $\qr$ requires $\fc$. So it
immediately follows by Lemma~\ref{lem:pc} that $\qr$ is parallel-correct under \distp.
However, because $V'$ is minimal for $\qr'$,  $\qr'$ needs $\fc$ to derive
$V(\head{\qr'})$ when only $\fc$ is given as input instance. Thus $\qr'$ is not
parallel-correct under $\distp$ which leads to the desired contradiction.

    \item \underline{Case ($m \ge 2$):}
Let $I\mydef V'(\body{\qr'}) = \{\fc_1, \ldots, \fc_m\}$, $\nw \mydef \{\node_1,\ldots,
\node_m\}$, and let \distp be the mapping defined as follows:
\begin{itemize}
\item $\distp(\fcB) = \nw$, for every $\fcB\in\facts{\sch}\setminus I$; and
\item $\distp(\fc_i) = \nw \setminus \{\kappa_i\}$, for every $i$.
\end{itemize}
Intuitively, on every instance $J$, either the facts in $J$ meet on some node under $\distp$, or $I \subseteq J$. By assumption, none of the \inducedminimal valuations for \qr requires all the facts in $I$, implying that \qr is parallel-correct under \distp.
Nevertheless, on instance $I$ under $\distp$, none of the nodes receives all the facts in $I$, and there is no valuation that can derive $V'(\head{\qr'})$ for a strict subset of the facts in $I$ (by \inducedminimal{}ity of $V'$). So, $\qr'$ is not parallel-correct under \distp which leads to the desired contradiction.
\end{enumerate}
\end{proof}

\begin{remark}
The reader may wonder what the effect is on Lemma~\ref{lem:trans-sem-char} when
distribution policies are not allowed to skip facts.
In fact, not much changes, except that we now need to deal explicitly with the trivial
case where a valuation for $\qr'$ requires only one fact.
More specifically, in this case, $\qr'$ is parallel-correct under all the distribution policies where $\qr$ is parallel-correct for if and only if 
\begin{itemize}
    \item[(C2')] for every minimal valuation $V'$ for $\qr'$, either $V'$ requires only
        one fact, or there is a minimal valuation $V$ for $\qr$ such that $V'(\body{\qr'})
        \subseteq V(\body{\qr})$. 
\end{itemize}

The proof is similar to the proof for Proposition~\ref{prop:trans-char-d1} and
Proposition~\ref{prop:trans-char-d2}, except that when condition (C2') fails, a
counterexample must require at least two facts, and thus case~1 in the proof of
Proposition~\ref{prop:trans-char-d2} drops. Notice that the
distribution policy given in case~
2
does not skip any facts, and is thus well-defined in the adapted setting as well.
\end{remark}




\newpredicate{\Con}{And}
\newpredicate{\Dis}{Or}
\newpredicate{\Res}{Res}
\newpredicate{\Val}{Val}
\newpredicate{\XVal}{XVal}
\newpredicate{\YVal}{YVal}
\newcommand{\XValVar}[1]{\EM{\XVal_{#1}}}
\newcommand{\YValVar}[1]{\EM{\YVal_{#1}}}

\newmathabb{\xdomain}{\mathit{Fix}}
\newmathabb{\gates}{\mathit{Gates}}
\newmathabb{\circuit}{\mathit{Circuit}}

\newmathabb{\lasti}{u}

\newmathabb{\enc}{\textit{enc}}
\newmathabb{\dec}{\textit{dec}}
\newmathabb{\cons}{a}
\newmathabb{\consTrue}{c_1}
\newmathabb{\consFalse}{c_0}

\newcommand{\querytransfrom}{\EM{\qr_\phi}}
\newcommand{\querytransto}{\EM{\qr'_\phi}}
\newcommand{\valtransfrom}{\EM{V}}
\newcommand{\valtransto}{\EM{V'}}
\newcommand{\valtransfromce}{\EM{V^\ast}}
\renewcommand{\uniV}{\EM{V'}}
\renewcommand{\exisV}{\EM{V}}
\newmathabb{\imuniV}{W'}
\newmathabb{\imexisV}{W}
\newmathabb{\imuniS}{s_W}
\newmathabb{\imexisS}{s_{W'}}

\newcommand{\bodyfrom}{\EM{\body{\querytransfrom}}}
\newcommand{\bodyto}{\EM{\body{\querytransto}}}
\newcommand{\headfrom}{\EM{\head{\querytransfrom}}}
\newcommand{\headto}{\EM{\head{\querytransto}}}
\newcommand{\evalheadfrom}{\EM{\valtransfrom(\headfrom)}}
\newcommand{\evalheadfromce}{\EM{\valtransfromce(\headfrom)}}
\newcommand{\evalheadto}{\EM{\valtransto(\headto)}}
\newcommand{\evalbodyfrom}{\EM{\valtransfrom(\bodyfrom)}}
\newcommand{\evalbodyfromce}{\EM{\valtransfromce(\bodyfrom)}}
\newcommand{\evalbodyto}{\EM{\valtransto(\bodyto)}}

\newcommand{\conttofrom}{\EM{\evalbodyto \subseteq \evalbodyfrom}}
\newcommand{\conttofromce}{\EM{\evalbodyto \subseteq \evalbodyfromce}}


\subsection{Proof of Theorem~\ref{theo:trans-completeness}}
\label{appendix:trans-compl}

In principle, Lemma~\ref{lem:trans-sem-char}, on which the following proofs are
based, talks about an infinite number of valuations over the infinite domain
$\dom$. However, since our queries are generic, the only observable property of
the constants used by some valuation is equality/inequality.
It therefore suffices to check valuations over an arbitrary finite domain with
at least as much constants as valuations for both queries can use. This is
stated more explicitly in the following claim.

\begin{claim}
	\label{lem:im-val-fin-dom}
	Let $\qr$ and $\qr'$ be CQs with variables $\seqx$ and $\seqy$, respectively.
	Moreover, for $k=\lx+\ly$ let $\domk=\{1,\dots,k\}$ be a subset of the
	(countably) infinite set $\dom$.

	The following two conditions are equivalent.
	\begin{enumerate}
	  \item For every \imv $V'$ for $\qr'$ over $\dom$ there is \animv
	  $V$ for $\qr$ over $\dom$ such that $V'(\body{\qr'}) \subseteq
	  V(\body{\qr})$.
	  \item For every \imv $V'$ for $\qr'$ over $\domk$ there is \animv
	  $V$ for $\qr$ over $\domk$ such that $V'(\body{\qr'}) \subseteq
	  V(\body{\qr})$.
	\end{enumerate}
\end{claim}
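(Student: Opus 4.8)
The plan is to prove the equivalence of the two conditions in Claim~\ref{lem:im-val-fin-dom} by showing that every valuation relevant to condition~(1) can be ``replayed'' on the small domain $\domk$ without changing which facts coincide, and conversely that valuations over $\domk$ are already valuations over $\dom$. The direction $(1)\Rightarrow(2)$ is essentially immediate: a \imv $V'$ for $\qr'$ over $\domk$ is in particular a valuation over $\dom$, so by~(1) there is a \imv $V$ for $\qr$ over $\dom$ with $V'(\body{\qr'})\subseteq V(\body{\qr})$. The only subtlety is that $V$ might use values outside $\domk$, so I would push $V$ back onto $\domk$ by composing with a suitable value-renaming; I will have to check that this preserves both the inclusion and the \imity of $V$.

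For the harder direction $(2)\Rightarrow(1)$, I would start with an arbitrary \imv $V'$ for $\qr'$ over $\dom$. The key observation is that $V'(\body{\qr'})$ uses at most $\ly\le k$ distinct data values, so there is an injective map $\sigma$ from $\img{V'}$ into $\domk$; then $\sigma\circ V'$ is a valuation for $\qr'$ over $\domk$ whose facts are in bijection with those of $V'$ via $\sigma$. I need two things about $\sigma\circ V'$: first, that it is still \im for $\qr'$ (a valuation $W'<_{\qr'}\sigma\circ V'$ would, after applying $\sigma^{-1}$ on $\img{\sigma\circ V'}$ and extending arbitrarily, yield a valuation contradicting \imity of $V'$ — here genericity/the fact that only equality type matters is what makes the renaming harmless); and second, by~(2) there is a \imv $V$ for $\qr$ over $\domk$ with $(\sigma\circ V')(\body{\qr'})\subseteq V(\body{\qr})$. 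Applying $\sigma^{-1}$ (extended injectively on the remaining values of $\img{V}$ to fresh elements of $\dom$, which is possible since $\dom$ is infinite) gives a valuation $\sigma^{-1}\circ V$ for $\qr$ over $\dom$; I then check it is still \im and that $V'(\body{\qr'})\subseteq(\sigma^{-1}\circ V)(\body{\qr})$, using that $\sigma^{-1}$ agrees with the genuine inverse of $\sigma$ on $\img{\sigma\circ V'}\supseteq$ the values appearing in $(\sigma\circ V')(\body{\qr'})$.

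The main obstacle I anticipate is the bookkeeping around \imity under value-renamings: when I rename values by an injective $\sigma$ (or its inverse), I must argue that a witness to non-\imity on one side transfers to a witness on the other side, and this requires observing that for a counterexample valuation $W<_{\qr}V$ the value $W$ assigns to a variable is only constrained up to equality with the values $V$ uses (plus possibly fresh values), so one can always conjugate $W$ by the same renaming (extended appropriately on fresh values). Concretely I would state and use a small sublemma: if $\tau$ is an injection defined on all data values occurring in $V(\body{\qr})$, then $V$ is \im for $\qr$ iff $\tau\circ V$ (suitably totalized) is \im for $\qr$. Everything else — the cardinality bound $|\img{V'}|\le\ly\le k$ giving room inside $\domk$, and the infinitude of $\dom$ giving room to extend $\sigma^{-1}$ — is routine and I would only mention it, not belabor it.
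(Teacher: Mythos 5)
Your proposal is correct and follows exactly the route the paper intends: the paper states this Claim without a detailed proof, justifying it only by the remark that genericity makes equality/inequality of values the only observable property, and your argument is precisely the formalization of that remark (injectively renaming values into and out of $\domk$, using $|\img{V'}|\le\ly$ and $|\img{V}|\le\lx$ to fit both images into $\{1,\dots,\lx+\ly\}$, and conjugating any witness of non-\imity by the same injection). The one step worth writing out explicitly when you elaborate the sketch is that in the direction $(1)\Rightarrow(2)$ a valuation that is \im over $\domk$ is automatically \im over $\dom$, because any $V''<_{\qr'}V'$ satisfies $V''(\body{\qr'})\subseteq V'(\body{\qr'})$ and hence, since every variable of $\qr'$ occurs in the body, takes all its values inside $\adom{V'(\body{\qr'})}\subseteq\domk$ already.
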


\bigskip
\noindent
Completeness of parallel-correctness transfer for general distribution
policies follows from the upper bound stated in
Proposition~\ref{prop:trans-upper-bound} and the lower bound stated in
Proposition~\ref{prop:trans-lower-bound}.

\begin{proposition}
	\label{prop:trans-upper-bound}
	$\PCTRANS \in \phthree$.
\end{proposition}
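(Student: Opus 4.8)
The plan is to prove $\PCTRANS \in \phthree$ by exhibiting a $\phthree$-algorithm that follows the quantifier structure of Definition~\ref{def:phthree} directly, using the semantic characterization of Lemma~\ref{lem:trans-sem-char} together with Claim~\ref{lem:im-val-fin-dom} to bound the search space. By Lemma~\ref{lem:trans-sem-char}, parallel-correctness transfers from $\qr$ to $\qr'$ if and only if condition (C2) holds: for every \im valuation $V'$ for $\qr'$ there is a \im valuation $V$ for $\qr$ with $V'(\body{\qr'}) \subseteq V(\body{\qr})$. By Claim~\ref{lem:im-val-fin-dom}, it suffices to quantify over valuations into the finite domain $\domk$ with $k = \lx + \ly$; such valuations have description length polynomial in $|\qr| + |\qr'|$, hence all existential and universal witnesses below are polynomially bounded.

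The key observation is that (C2), once unfolded, has exactly the shape $\forall \exists \forall$. Spelling it out: for \emph{all} valuations $V'$ for $\qr'$ over $\domk$, \emph{if} $V'$ is \im, then \emph{there exists} a valuation $V$ for $\qr$ over $\domk$ such that $V'(\body{\qr'}) \subseteq V(\body{\qr})$ and $V$ is \im. The first quantifier is universal (over $V'$); the second is existential (over $V$); but both minimality conditions hide an extra universal quantifier, since by Proposition~\ref{prop:minimal-complexity} testing minimality of a valuation is \coNP, i.e.\ of the form ``for all valuations $U$, $\neg(U <_\qr \cdot)$''. So I would reorganize (C2) as: for all $V'$ and all $U'$ (a potential witness to non-minimality of $V'$), \emph{either} $U' <_{\qr'} V'$ fails, \emph{or} there exists $V$ such that $V'(\body{\qr'}) \subseteq V(\body{\qr})$ and for all $U$, $U <_\qr V$ fails. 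The two leading universal quantifiers ($V'$, $U'$) merge into one universal block, the single existential $V$ forms the middle block, and the trailing ``for all $U$'' forms the last block — giving a clean $\forall\exists\forall$ pattern with a polynomial-time acceptance check (testing $U' <_{\qr'} V'$, the inclusion $V'(\body{\qr'}) \subseteq V(\body{\qr})$, and $U <_\qr V$ are all polynomial-time by the remark in the proof of Proposition~\ref{prop:minimal-complexity}).

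Concretely I would argue as follows. Negating for convenience and matching Definition~\ref{def:phthree}: on input $(\langle \qr\rangle, \langle\qr'\rangle)$, for every string $x$ encoding a pair $(V', U')$ of functions $\vars{\qr'} \to \domk$ and $\vars{\qr} \to \domk$ respectively, there exists a string $y$ encoding a function $V : \vars{\qr} \to \domk$, such that for every string $z$ encoding a function $U : \vars{\qr} \to \domk$, the polynomial-time algorithm $\alg$ accepts $(w,x,y,z)$, where $\alg$ verifies: (i) if $U'$ does not satisfy $U' <_{\qr'} V'$ then accept (so $V'$ is vacuously fine); (ii) otherwise, check that $V'(\body{\qr'}) \subseteq V(\body{\qr})$ and that $V(\head{\qr}) = V'(\head{\qr'})$ and that $U$ does not satisfy $U <_\qr V$; accept iff all three hold. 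Ill-formed strings $x,y,z$ are handled by the usual convention (reject or trivially accept as appropriate). The running time of $\alg$ is polynomial in $|w|$ since every valuation involved is of size polynomial in $|\qr| + |\qr'|$. Correctness follows because the $\forall x \exists y \forall z$ acceptance condition is, after unwinding, precisely (C2) restricted to $\domk$, which by Claim~\ref{lem:im-val-fin-dom} is equivalent to (C2) over $\dom$, which by Lemma~\ref{lem:trans-sem-char} is equivalent to parallel-correctness transfer.

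The main obstacle, and the part requiring care rather than cleverness, is the bookkeeping that merges the ``for all $V'$'' with the ``for all $U'$ witnessing non-minimality of $V'$'' into a single universal block without disturbing the subsequent $\exists\forall$ structure — in particular making sure that when $U'$ is \emph{not} a valid non-minimality witness the algorithm still accepts (so that minimal and non-minimal $V'$ are handled uniformly), and that the choice of $V$ is allowed to depend on $V'$ but not on $U'$ or $U$, which is automatic from the quantifier order but must be stated. A secondary point is invoking Claim~\ref{lem:im-val-fin-dom} to justify restricting to $\domk$ so that all quantified objects have polynomially bounded encodings; this is where genericity of CQs does the real work, and I would simply cite the claim. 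With these in place the upper bound is immediate from Definition~\ref{def:phthree}.
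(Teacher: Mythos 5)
Your overall strategy is the paper's: reduce to condition (C2) via Lemma~\ref{lem:trans-sem-char}, restrict all valuations to $\domk$ via Claim~\ref{lem:im-val-fin-dom}, and package the result as a $\forall\exists\forall$ predicate with a polynomial-time verifier. However, the quantifier bookkeeping --- precisely the step you flag as ``the part requiring care'' --- is done with the wrong polarity, and the condition your algorithm decides is not equivalent to (C2). Condition (C2) reads: for all $V'$, \emph{if} $V'$ is \im{} \emph{then} there is a suitable $V$. Minimality of $V'$ is a universal statement (``for all $U'$, not $U'<_{\qr'}V'$''), but it sits in the \emph{antecedent} of an implication, so unfolding gives
\[
\forall V'\colon\ \bigl(\exists U'\colon U'<_{\qr'}V'\bigr)\ \lor\ \bigl(\exists V\colon\dots\bigr).
\]
The witness $U'$ to non-minimality of $V'$ must therefore be placed in the \emph{existential} block alongside $V$, and the verifier must accept when $U'<_{\qr'}V'$ \emph{holds} (exempting the non-minimal $V'$); this is exactly what the paper's algorithm does with its valuation $W'$. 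You instead put $U'$ in the universal block and accept when $U'<_{\qr'}V'$ \emph{fails}. Unwinding your $\forall(V',U')\,\exists V\,\forall U$ condition yields
\[
\forall V'\colon\ \bigl(\forall U'\colon\neg(U'<_{\qr'}V')\bigr)\ \lor\ \bigl(\exists V\colon\dots\bigr),
\]
which asserts that every \emph{non-minimal} $V'$ is covered by some minimal $V$, and imposes nothing on minimal $V'$: for those, step~(i) of your verifier fires for every $U'$ and the run accepts without ever testing the containment. Concretely, for $\qr'\colon H()\gets R(x,y),R(y,x)$ and $\qr\colon H()\gets R(x,x)$ every valuation of $\qr'$ is minimal, so your condition holds vacuously, yet (C2) fails (the minimal valuation $x\mapsto a$, $y\mapsto b$ with $a\neq b$ requires two facts that no valuation of $\qr$ covers); your algorithm accepts a negative instance of \PCTRANS. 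The fix is simply to move $U'$ into the $\exists$ block and flip the acceptance in step~(i), which recovers the paper's proof.

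A secondary error: your step~(ii) additionally checks $V(\head{\qr})=V'(\head{\qr'})$. Condition (C2) imposes no relation between the heads of $\qr$ and $\qr'$ under $V$ and $V'$ --- in the paper's own lower-bound construction the two head atoms even have different arities --- so this extra test would make the verifier reject positive instances as well. It should be dropped.
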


\begin{proof}
	By Lemma~\ref{lem:trans-sem-char}, deciding parallel-correctness transfer is
	equivalent to verifying that for each \imv $V'$ for $\qr'$
	there is \animv $V$ for $\qr$ such that $V'(\body{\qr'})
	\subseteq V(\body{\qr})$. This, in turn, is equivalent to checking for each
	valuation~$V'$ for $\qr'$ that it is not minimal, which can be witnessed by
	another valuation $W'$ that derives the same fact and requires strictly less
	facts, \emph{or} that there is \animv $V$ for $\qr$ such that $\conttofrom$.
	Non-minimality of valuation~$\exisV$ can be witnessed by a valuation~$\imexisV$.
	Due to Claim~\ref{lem:im-val-fin-dom}, all
	valuations can be restricted to $\domk=\{1,\dots,k\}$, where $k=\lx+\ly$ and
	$\qr,\qr'$ are queries over variables $\seqx$ and $\seqy$, respectively.

	To prove membership in class $\phthree$, it suffices to show that there is an
	algorithm with a time bound polynomial in $|\qr|+|\qr'|$ such that for every
	pair $(\qr,\qr')$ of queries it holds $(\qr,\qr') \in \PCTRANS$ if and only if
	for every $\qr'$-valuation $\uniV$ there is a $\qr$-valuation $\exisV$ and a
	$\qr'$-valuation~$\imuniV$ such that for every $\qr$-valuation $\imexisV$, the
	algorithm accepts
	\begin{math}
		\big(
			\langle\qr,\qr' \rangle,
			\uniV,
			\langle \exisV,\imuniV \rangle,
			\imexisV
		\big).
	\end{math}

	For input
	\begin{math}
		\big(
			\langle\qr,\qr' \rangle,
			\uniV,
			\langle \exisV,\imuniV \rangle,
			\imexisV
		\big)
	\end{math}
	the algorithm proceeds as follows. First, it is checked whether $\imuniV$
	contradicts the assumed \imity of $\uniV$, that is, whether
	$\imuniV(\head{\qr'})=\uniV(\head{\qr'})$ as well as
	$\imuniV(\body{\qr'})\subsetneq\uniV(\body{\qr'})$. If this test succeeds, the
	algorithm accepts because there is no requirement on a non-minimal
	$\qr'$-valuation.
	Second, it is checked in an analogous fashion whether $\imexisV$ contradicts
	the assumed \imity of $\exisV$. If this test succeeds, the algorithm rejects.
	
	Lastly, the algorithm continues with testing
	$\uniV(\body{\qr'}) \subseteq \exisV(\body{\qr})$ and accepts in case of satisfaction, and rejects otherwise.
	All containment tests can be done in polynomial time.
\end{proof}

\begin{proposition}
	\label{prop:trans-lower-bound}
	$\PCTRANS$ is $\phthree$-hard.
\end{proposition}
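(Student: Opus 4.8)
The plan is to reduce from $\threeTQBF$, which by Remark~\ref{rem:pik-completeness-nf} stays $\phthree$-complete when the matrix $\psi$ of the input $\phi=\forall\vx\exists\vy\forall\vz\,\psi(\vx,\vy,\vz)$ is in $3$-DNF, say $\psi=D_1\lor\dots\lor D_\lc$ with each $D_r$ a conjunction of three literals. From $\phi$ I would build, in polynomial time, two CQs $\querytransfrom$ (in the role of $\qr$) and $\querytransto$ (in the role of $\qr'$) over a common input and output schema, reusing and extending the gadget from the proof of Proposition~\ref{prop:pci-fin-phtwo-hard}. Besides a head relation $H$, the schema has relation symbols $\True,\False,\Neg$ and one symbol $\Clause_r$ per term $D_r$; for each propositional variable $v$ in $\vx\cup\vy\cup\vz$ the two queries use a pair of variables linked by a $\Neg$-atom, and both contain the ``constant'' atoms $\True(\vart),\False(\varf),\Neg(\vart,\varf),\Neg(\varf,\vart)$ over two distinguished variables $\vart,\varf$. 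The head of $\querytransto$ exposes the $\vx$-variables (together with $\vart,\varf$, with the $\vy$-slots padded by $\varf$ so that both queries share one output schema), so that the minimal valuations of $\querytransto$ correspond exactly to truth assignments $\beta_\vx$; the head of $\querytransfrom$ additionally exposes the $\vy$-variables, so its valuations induce pairs $(\beta_\vx,\beta_\vy)$, while its $\vz$-variables are deliberately left out of the head. The clause atoms $\Clause_r(\lit{r}{1},\lit{r}{2},\lit{r}{3})$ over the literal-variables, together with suitable auxiliary atoms, encode the $3$-DNF so that a remapping of the free $\vz$-variables strictly shrinks the body of $\querytransfrom$ precisely when it amounts to a Boolean assignment $\beta_\vz$ that falsifies every term $D_r$.

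The intended correspondence, to be proved along the way, is then: a valuation $V$ for $\querytransfrom$ inducing $(\beta_\vx,\beta_\vy)$ is minimal iff $(\beta_\vx\cup\beta_\vy\cup\beta_\vz)\models\psi$ for every truth assignment $\beta_\vz$; and for minimal $V',V$ one has $V'(\bodyto)\subseteq V(\bodyfrom)$ exactly when $V'$ and $V$ induce the same $\beta_\vx$. Feeding this into condition~(C2) of Lemma~\ref{lem:trans-sem-char} --- for every minimal $V'$ of $\querytransto$ there is a minimal $V$ of $\querytransfrom$ with $V'(\bodyto)\subseteq V(\bodyfrom)$ --- the statement becomes ``for every $\beta_\vx$ there is $\beta_\vy$ such that for every $\beta_\vz$, $\beta\models\psi$'', that is, $\phi$ is true. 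Hence parallel-correctness transfers from $\querytransfrom$ to $\querytransto$ iff $\phi$ is a yes-instance of $\threeTQBF$, which establishes $\phthree$-hardness. Since the constructed queries are generic, Claim~\ref{lem:im-val-fin-dom} lets me restrict all valuations to a fixed finite domain throughout the verification.

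I would execute this in the following order: (1) write down $\querytransfrom$ and $\querytransto$ formally and check polynomial-time computability; (2) characterize the minimal valuations of $\querytransto$, showing the head atom ``protects'' $\vart,\varf$ and the $\vx$-variables so that the only genuine collapses are those already removed by passing to a minimal CQ equivalent to $\querytransto$; (3) prove the minimality characterization for $\querytransfrom$ stated above; (4) establish the body-inclusion correspondence; (5) combine (2)--(4) with Lemma~\ref{lem:trans-sem-char}.

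The main obstacle is step~(3), and within it the direction that a non-minimal valuation of $\querytransfrom$ must originate from a falsifying $\beta_\vz$. The subtlety is twofold. First, unlike in Proposition~\ref{prop:pci-fin-phtwo-hard}, $\PCTRANS$ provides no database instance, so the ``consistency'' information --- which combinations of literal values leave a term unsatisfied --- has to be carried entirely inside the query patterns, by the interplay of the $\Clause_r$- and $\Neg$-atoms with the head relation $H$; getting this encoding right while keeping both queries over one schema is the delicate design step. Second, a competing valuation $V^\ast<_{\querytransfrom}V$ may in principle remap the free $\vz$-variables to arbitrary, non-Boolean values and collapse atoms in unintended ways, so one needs a normalization lemma showing that any body-shrinking remapping can be reduced to a Boolean assignment $\beta_\vz$ that makes every $D_r$ false --- the analogue of the partition into $\necFacts$ and $\negFacts$ in the $\phtwo$ construction. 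Once this normalization is in hand, both directions of the minimality characterization, and then the whole reduction, go through routinely.
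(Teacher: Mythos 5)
Your high-level architecture coincides with the paper's: reduce from $\threeTQBF$ with a 3-DNF matrix, let the head of $\qr'$ expose $\vx$ and the head of $\qr$ additionally expose $\vy$ while leaving $\vz$ free, and feed the resulting quantifier pattern into condition (C2) of Lemma~\ref{lem:trans-sem-char}. But there is a genuine gap exactly at the point you yourself flag as ``the delicate design step'': you never exhibit a mechanism by which a falsifying assignment $\beta_\vz$ makes the body image of a valuation $V$ of $\qr$ \emph{strictly} shrink. Remapping the $\vz$-variables and the clause atoms can only send atoms onto other atoms; to obtain a proper inclusion $V^\ast(\body{\qr})\subsetneq V(\body{\qr})$ you need one designated fact that is forced into $V(\body{\qr})$ by the inclusion $V'(\body{\qr'})\subseteq V(\body{\qr})$ and that disappears from the competitor's image precisely when some $\beta_\vz$ falsifies $\psi$. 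The paper supplies this with a circuit gadget: auxiliary variables $s_j,r_j$ computing $C_1\lor\cdots\lor C_j$ via $\mathpredicate{And}$-/$\mathpredicate{Or}$-atoms, the set of all truth-table-consistent gate facts over $\{\vart,\varf\}$ placed in $\body{\qr}$ so that the minimal $0$--$1$ valuations are exactly the \emph{consistent} ones (Claim~\ref{claim:red}), and the atoms $\mathpredicate{Res}(\varf),\mathpredicate{Res}(r_k)$ in $\qr$ against $\mathpredicate{Res}(\vart)$ in $\qr'$, so that the body inclusion forces $V(r_k)=1$ while a falsifying $\beta_\vz$ yields a consistent competitor whose image drops $\mathpredicate{Res}(1)$. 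Note that your two intermediate claims are false as literally stated for this (and any similar) construction: a consistent valuation with $r_k\mapsto 0$ is minimal even when some $\beta_\vz$ falsifies $\psi$, and body inclusion requires more than agreement on $\beta_\vx$. The correct statement couples minimality and inclusion through the output bit, and without the gadget there is nothing to couple them, so step (5) of your plan does not follow from steps (3) and (4).

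A second, smaller gap: since $\qr'$ is full, \emph{every} valuation of $\qr'$ is minimal, including those that map $\vart,\varf$ or some $x_\itx$ outside $\{0,1\}$ or identify $\vart$ with $\varf$; your assertion that the minimal valuations of $\qr'$ ``correspond exactly to truth assignments $\beta_\vx$'' is therefore wrong, and passing to a minimal CQ equivalent to $\qr'$ does not remove these valuations. For each such degenerate $V'$ one must still construct a minimal $V$ for $\qr$ with $V'(\body{\qr'})\subseteq V(\body{\qr})$; in the paper this is Cases 2 and 3 of the (if)-direction (the ``foul variable'' analysis), a separate and non-routine argument that your plan does not account for.
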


\begin{proof}
	We give a polynomial reduction from $\threeTQBF$ to $\PCTRANS$. The proof is
	based on the characterization of parallel-correctness transfer
        by condition (C2) as stated in
	Lemma~\ref{lem:trans-sem-char}.
	
	\paragraph*{Reduction function}
	Let thus $\phi= 
		\forall \vx
		\exists \vy
		\forall \vz\,
			\psi(\vx,\vy,\vz)$
be a formula
	with a quantifier-free  propositional formula $\psi$ in 3-DNF over
	variables $\vx=(\seqx)$, $\vy=(\seqy)$, and $\vz=(\seqz)$.
        
Let $k$ be the number of clauses of $\psi$ and, for each $\itc \in \{\indc\}$, let
	\begin{math}
		C_\itc=(\lit{\itc}{1} \land \lit{\itc}{2} \land \lit{\itc}{3})
	\end{math}
denote the $\itc$-th (conjunctive) clause of $\psi$.

	The reduction function maps $\phi$ to a pair
        $(\querytransfrom,\querytransto)$ of CQs that will be
        described next. It will be obvious that this mapping can be
        computed in polynomial time. Similarly as in
	the proof of Proposition~\ref{prop:pci-fin-phtwo-hard}, 
        \querytransfrom uses the 
	variables $\vart,\varf$, which are intended to represent truth
        and falseness, respectively, the variables of $\psi$ and variables $\negu$,
        for each variable $u$ of $\psi$, representing the literal
        $\neg u$.\footnote{Again, if $\ell$ is a negated literal $\neg
          u$, we write $\ell$ also for
          $\negu$.} Furthermore, it uses the following variables 
        \begin{itemize}
        \item $s_j$, for every $j\in\{1,\ldots,k\}$, intended to
          represent the truth value of $C_j$, and
        \item $r_j$, for every $j\in\{1,\ldots,k\}$, intended to
          represent the truth value of $C_1\lor\cdots\lor C_j$. 
        \end{itemize}

	We first describe the general construction, give an example explaining its
	intuition afterwards and finally prove correctness of the
        reduction.

	The queries \querytransto and \querytransfrom are defined as follows:
	\medskip
	\noindent
	\begin{displaymath}
		\begin{array}{lll}
                  \headto & \mydef & H(\seqx,\vart,\varf)\\
			\bodyto & \mydef &
			\big\{
				\YValVar{1}(\vart),\YValVar{1}(\varf), \dots,
				\YValVar{\ly}(\vart),\YValVar{\ly}(\varf)
			\big\} \cup \{\Res(\vart)\} \\
			& \cup &
			\xdomain; \\
		 	\\
                  \headfrom & \mydef &H(\seqx,\seqy,\vart,\varf)\\
			\bodyfrom & \mydef &
			\{
				\YValVar{1}(y_1),\YValVar{1}(\ny_1), \dots,
				\YValVar{\ly}(y_\ly),\YValVar{\ly}(\ny_\ly)
			\} \cup \{\Res(\varf), \Res(r_k)\} \\
			& \cup &
			\xdomain
			\cup
			\gates
			\cup
			\circuit, \\
		\end{array}
	\end{displaymath}
	where
	\begin{displaymath}
		\xdomain \mydef \big\{
			\XValVar{1}(x_1), \dots, \XValVar{\lx}(x_\lx),
			\True(\vart), \False(\varf)
		\big\}
	\end{displaymath}
is intended to	\enquote{fix} truth values for $\seqx,\vart,\varf$, the set
	\begin{displaymath}
		\begin{array}{lll}
			\gates & \mydef &
				\{\Neg(\varf,\vart), \Neg(\vart,\varf)\} \\
				& \cup & \{\Con(\vart,\vart,\vart,\vart), \Con(\varf,\vart,\vart,\varf),
				\Con(\vart,\varf,\vart,\varf), \Con(\varf,\varf,\vart,\varf),\\
				& & \hspace{5mm} \Con(\vart,\vart,\varf,\varf), \Con(\varf,\vart,\varf,\varf),
				\Con(\vart,\varf,\varf,\varf), \Con(\varf,\varf,\varf,\varf)\} \\
				& \cup & \{\Dis(\vart,\vart,\vart), \Dis(\varf,\vart,\vart),
				\Dis(\vart,\varf,\vart), \Dis(\varf,\varf,\varf) \}\\
		\end{array}
	\end{displaymath}
contains all atoms that are consistent with respect to the intended
meaning of negation, And- and Or-gates\footnote{The last position in
  a gate-atom represents the output bit of the gate, the others the
  input bits.} on $\vart,\varf$, and
	\begin{displaymath}
		\begin{array}{lll}
			\circuit & \mydef &
				\{\Neg(u,\negu) \mid \text{for each variable $u$ in $\psi$}\} \\
				& \cup & \{\Con(\seqlit{j},s_j) \mid \text{for each
				clause } C_j=(\lit{j}{1} \land \lit{j}{2} \land \lit{j}{3})\} \\
				& \cup & \{\Dis(s_1,s_1,r_1), \Dis(r_1,s_2,r_2),
				\dots,
				\Dis(r_{k-1},s_k,r_k)\}
				\\
		\end{array}
	\end{displaymath}
is intended to represent a Boolean circuit (with output bit $r_k$) that evaluates $\psi$.
		
	\begin{example}
		\label{ex:transfer-reduction}
		For formula
		\begin{math}
			\phi =
			\forall x_1
			\exists y_1
			\exists y_2
			\forall z_1 \;
				\Big(
					(x_1 \land y_1 \land z_1)
					\,\lor\,
					(\neg x_1 \land y_2 \land z_1)
				\Big)
		\end{math}
		we obtain the queries
		\begin{displaymath}
			\begin{array}{llcl}
				\querytransto: & H(x_1,\vart,\varf) & \gets &
					\YValVar{1}(\vart),\YValVar{1}(\varf), \YValVar{2}(\vart),\YValVar{2}(\varf), \Res(\vart), \\
					&&&
					\XValVar{1}(x_1),
					\True(\vart),\False(\varf).\\
			\end{array}
		\end{displaymath}
		and
		\begin{displaymath}
			\begin{array}{llcl}
				\querytransfrom: & H(x_1,x_2,y_1,\vart,\varf) & \gets &
					\YValVar{1}(y_1),\YValVar{1}(\ny_1),\YValVar{2}(y_2),\YValVar{2}(\ny_2),\Res(\varf),\Res(r_2), \\
					&&&
					\XValVar{1}(x_1),
					\True(\vart),\False(\varf),\\
					&&&
					\textit{\dots all atoms from $\gates$ \dots},\\
					&&&
					\Neg(x_1,\nx_1),\Neg(y_1,\ny_1),\Neg(y_2,\ny_2),\Neg(z_1,\nz_1), \\
					&&&
					\Con(x_1,y_1,z_1,s_1),\Con(\nx_1,y_2,z_1,s_2),
					\Dis(s_1,s_1,r_1),\Dis(r_1,s_2,r_2). \\
			\end{array}
		\end{displaymath}
		
		Note that $\phi \notin \threeTQBF$ because no \ta with
             $z_1 \mapsto 0$ is satisfying for $\psi$. In particular,
             for the \ta $\beta_\vx: x_1\mapsto 1$ there is no \ta
             $\beta_\vy$ such that for every  $\beta_\vz$ it holds
             $(\beta_\vx \cup \beta_\vy \cup \beta_\vz) \models
             \psi$. 
We illustrate why
		$(\querytransfrom,\querytransto) \notin \PCTRANS$.
		
	Let $\valtransto$ be the valuation for $\querytransto$ defined
        by
		$\valtransto(x_1)\mydef \beta_\vx(x_1)=1$,
		$\valtransto(\vart)\mydef 1$ and
                $\valtransto(\varf)\mydef 0$. This valuation is \im for
		$\querytransto$ because $\querytransto$ is full.
		We get $\evalbodyto=
			\big\{
				\YValVar{1}(1),\YValVar{1}(0),\Res(1),
				\XValVar{1}(1),\True(1),\False(0)
			\big\}.$

		\smallskip
		\noindent
		We now argue why there is no \im valuation $\valtransfrom$ for
                $\querytransfrom$ such that
		$\conttofrom$. If a valuation $\valtransfrom$ fulfills
                $\conttofrom$ it must map $w_0\mapsto 0$,
                $w_1\mapsto 1$, $x_1\mapsto 1$, $r_2\mapsto
                1$. Furthermore, it must map each of $(y_1,\ny_1)$ and
                $(y_2,\ny_2)$ to some pair in $\{(0,1),(1,0)\}$. Thus, $V$
                induces a \ta $\beta_\vy$ via $\beta_\vy(y_1)\mydef
                V(y_1)$ and $\beta_\vy(y_2)\mydef
                V(y_2)$. Let $\valtransfromce$ be the valuation that coincides
                with $V$ on all variables
                $w_0,w_i,x_1,\nx_1,y_1,\ny_1,y_2,\ny_2$ and maps
                $z_1\mapsto 0$ and maps all other variables to the
                ``correct'' values with respect to the semantics of
                the logicals gates in $\querytransfrom$. In
                particular, since  $(\beta_\vx \cup \beta_\vy \cup \beta_\vz) \not\models
             \psi$ (where $\beta_\vz(z_1)\mydef
                0$), we get $\valtransfromce (r_2)=0$. It is now easy
                to check
                that $\valtransfromce<_\qr \valtransfrom$, and 
                therefore that $V$ is not \im. \qed
 	\end{example}

	\medskip
	\noindent
To complete the proof, we need to show the mapping $\phi\mapsto
(\querytransfrom,\querytransto)$ is indeed a reduction. 

We start by some observations. We call a valuation $V$ of
\querytransfrom 
\emph{0-1-valued}, if its range is $\{0,1\}$ and it maps
$(w_0,w_1)$ to $(0,1)$ and  every pair $(u,\negu)$ of variables from $\psi$ to $(0,1)$ or $(1,0)$. A 0-1-valued
valuation is called
\emph{consistent}, if the values $V(s_j)$ and $V(r_j)$, for $j\in\{1,\ldots,k\}$ are consistent
with the values $V(u)$ for variables of $\psi$, in the obvious
sense. That is, $V(s_j)=1$ if and only clause $C_j$ evaluates to true
for the truth assignment $\beta_V$ obtained from $V$ and $V(r_j)=1$ iff
$C_1\lor\cdots\lor C_j$ evaluates to true. 

It is easy to see that a 0-1-valued valuation $V$ is consistent, if and
only if $V(\circuit)\subseteq V(\gates)$, because inconsistency
requires facts in $V(\circuit)$ that are not in $V(\gates)$ and
likewise the existence of such facts implies inconsistency.

\begin{claim}\label{claim:red}
  For every 0-1-valued valuation $V$ of \querytransfrom the following
  conditions are equivalent.
  \begin{enumerate}[(i)]
  \item $V$ is \im;
  \item $V$ is consistent.
  \end{enumerate}
\end{claim}
To show that (i) implies (ii), let $V$ be a 0-1-valued \im valuation. Let $\valtransfromce$ be the uniquely determined
0-1-valued consistent valuation that agrees with $V$ on all variables of $\psi$. As \valtransfromce is consistent,
$\valtransfromce(\circuit)\subseteq
\valtransfromce(\gates)$. Since $V$ and \valtransfromce  agree
on all variables of $\psi$ and $V$ is \im, we get
$V(\circuit)\subseteq \valtransfromce(\gates)=V(\gates)$ and thus $V$
is consistent. 

To show that (ii) implies (i), let $V$ be a 0-1-valued consistent
valuation. On the other hand,
consistency implies $V(\circuit)\subseteq V(\gates)$. Towards a
contradiction, let us assume, there is a valuation \valtransfromce
with $\valtransfromce<_\qr\valtransfrom$.  
As $V$ and \valtransfromce agree on $w_0$, $w_1$ and all variables
$x_i,y_i$, the only possible fact that could account for the
strictness in $\valtransfromce(\bodyfrom)\subsetneq
                \evalbodyfrom$ is $\Res(1)$, in case $\valtransfrom (r_k)=1$ and  $\valtransfromce(r_k)=0$. However, $\valtransfromce(\circuit)\subseteq
                V(\circuit)\subseteq
                V(\gates)=\valtransfromce(\gates)$, where the middle
                inclusion holds because $V$ is
                consistent. Thus,
$\valtransfromce$
                is also consistent, and therefore
                $\valtransfromce(r_k)=\valtransfrom(r_k)$, the desired contradiction.\\

Now, we are prepared to prove that $\phi$ is in $\threeTQBF$ if and only if
	parallel-correctness transfers from $\querytransfrom$ to
        $\querytransto$.
 
	\subparagraph*{(only-if)}
	Let $\phi= 
		\forall \vx
		\exists \vy
		\forall \vz\,
			\psi(\vx,\vy,\vz)$
be a formula
	with a quantifier-free  propositional formula $\psi$ in 3-DNF  such that $\phi \notin \threeTQBF$. We show that there is \animv
	$\valtransto$ for $\querytransto$ such that each
	valuation $\valtransfrom$ for $\querytransfrom$ which satisfies
	$\conttofrom$ is \emph{not} \im. From that we can conclude by Lemma
        \ref{lem:trans-sem-char} that 	parallel-correctness does not transfer from $\querytransfrom$ to
        $\querytransto$.

Let $\beta_\vx$ be a \ta for $\seqx$ in $\psi$
	such that for all \tas $\beta_\vy$ for $\seqy$ in $\psi$ there
	is a \ta $\beta_\vz$ for $\seqz$ such that $(\beta_\vx \cup \beta_\vy \cup \beta_\vz) \not\models
	\psi$.

	Let $\valtransto$ be the valuation defined by
	$\valtransto(\seqx,\vart,\varf) \mydef
	(\beta_\vx(x_1),\dots,\beta_\vx(x_\lx),1,0)$, which is \im for $\querytransto$
	 because $\querytransto$ is full.

	Let $\valtransfrom$ be any valuation for $\querytransfrom$ such that
	$\conttofrom$. In particular,  $\Res(1)\in \valtransfrom(\bodyfrom)$.
Then, valuations $\valtransfrom$ and $\valtransto$
	agree on variables $\seqx,\vart,\varf$ because each atom in
        \xdomain  is the only  atom of $\querytransfrom$
        with its particular relation symbol. Similarly, the
        $\YValVar{i}$-atoms in $\querytransto$ and $\querytransfrom$
        ensure that $\valtransfrom$ maps each pair $(y_i,\ny_i)$ to
        $(0,1)$ or $(1,0)$. Let $\beta_\vy$ be the \ta defined by
        $\beta_\vy(y_i)\mydef \valtransfrom(y_i)$, for every $i\in
        \{1,\ldots,n\}$. Since $\phi \notin \threeTQBF$, there is a
        \ta $\beta_\vz$ such that $(\beta_\vx \cup \beta_\vy \cup \beta_\vz) \not\models
	\psi$. Let \valtransfromce be the uniquely defined consistent
        0-1-valued valuation induced by $(\beta_\vx \cup \beta_\vy
        \cup \beta_\vz)$. Since  \valtransfromce is consistent,
        $\valtransfromce (\circuit)\subseteq \valtransfromce
        (\gates)$ and therefore $\valtransfromce (\bodyfrom)\subseteq
                \evalbodyfrom$. Furthermore, since $(\beta_\vx \cup \beta_\vy \cup \beta_\vz) \not\models
	\psi$, we get $\valtransfromce(r_k)=0$ and
        therefore $\Res(1)\not\in \valtransfromce(\bodyfrom)$ and, consequently, $\valtransfromce (\bodyfrom)\subsetneq
                \evalbodyfrom$, showing that $V$ is not minimal.

	\subparagraph*{(if)}
Let now $\phi= 
		\forall \vx
		\exists \vy
		\forall \vz\,
			\psi(\vx,\vy,\vz)$ be a formula  in $\threeTQBF$ and let $\valtransto$ be an arbitrary
	valuation for $\querytransto$. We will show that there exists  \animv $\valtransfrom$ for
	$\querytransfrom$ such that $\valtransto(\body{\querytransto}) \subseteq
	\valtransfrom(\body{\querytransfrom})$, thus showing that
        parallel-correctness transfers from $\querytransfrom$  to 
        $\querytransto$, again by Lemma
        \ref{lem:trans-sem-char}.
	
We assume in the following that all quantified variables appear
(possibly negated) in
$\psi$. 
Let $c_0\mydef \valtransto(w_0)$ and $c_1\mydef
\valtransto(w_1)$. Since, neither $\querytransto$ nor $\querytransfrom$
uses any constant symbols, minimality of $\valtransto$ is not
affected, if $\valtransto$ is composed with any bijection of the
domain. The same holds for every valuation $\valtransfrom$ and the
statement 
$\valtransto(\body{\querytransto}) \subseteq
	\valtransfrom(\body{\querytransfrom})$, as long as
        $\valtransto$ and $\valtransfrom$ are composed with the
        \emph{same} bijection. Therefore, we can assume without loss
        of generality that $\valtransto(w_0)=0$ and
        $\valtransto(w_1)\in\{0,1\}$. 

We distinguish three cases depending on whether
$\dom(\valtransto)\subseteq\{0,1\}$ and $\valtransto(w_1)=1$.

	\begin{case}{1}{$\dom(\valtransto)\subseteq\{0,1\}$ and
            $\valtransto(w_1)=1$}
Let 	$\beta_\vx$  be the partial \ta
	for the variables $\seqx$ in $\psi$ defined by $\beta_\vx(x_i) =
		\valtransto(x_i)$, for every $i \in
		\{1,\dots,\lx\}$. 		  
	Since, $\phi \in \threeTQBF$, there exists a partial \ta
		$\beta_\vy$ for the variables $\seqy$ in $\psi$ such that for each partial
		\ta $\beta_\vz$ for the variables $\seqz$ we have
                $(\beta_\vx \cup \beta_\vy \cup \beta_\vz) \models
                \psi$. For concreteness let $\beta_\vz(z_\itz)\mydef 0$, for $i
                \in \{1,\dots,\lz\}$ and $\beta\mydef \beta_\vx \cup \beta_\vy \cup \beta_\vz$.
	
Let $\valtransfrom$ be the uniquely defined 0-1-valued consistent
valuation induced by $\beta$. Since $\valtransfrom$ is consistent it
is also minimal by Claim \ref{claim:red}, and as
$\beta\models \psi$, $\valtransfrom
(r_k)=1$. Thanks to the latter, $\conttofrom$ follows easily and
Case 1 is complete.

	\end{case}

	\begin{case}{2}{$\dom(\valtransto)\subseteq\{0,1\}$ and $\valtransto(w_1)=0$}
	  Let $\valtransfrom$ be defined by
\[
	  	\valtransfrom(u) \mydef
                \begin{cases}
                 \valtransto(u) & \text{if $u\in\{w_0,w_1,x_1,\ldots,x_n\}$},\\
                  0 & \text{otherwise}.
                \end{cases}
\]

It is easy to see that $\valtransto(\body{\querytransto}) \subseteq
	  \valtransfrom(\body{\querytransfrom})$. Furthermore,
          $\valtransto$ is minimal as every fact from 
          $\valtransfrom(\body{\querytransfrom})$ either stems from an
          atom with (only) head variables or is in the unavoidable set
          $\valtransfrom(\gates)$. 
	\end{case}

	\begin{case}{3}{For some $\mathbf{\itx}$,
	$\boldsymbol{\valtransto(x_\itx) \notin \{c_0,c_1\}}$}
We recall that by our assumptions, $c_0=0=\valtransto(w_0)$ and $c_1=
\valtransto(w_1)\in\{0,1\}$. The following argument works for both subcases, $c_1=1$
and $c_1=0$.
We call a variable $x_\itx$ \emph{foul} if $\valtransto(x_\itx) \notin
\{c_0,c_1\}$. Likewise, we call a clause  \emph{foul} if it contains
(positively or negatively) some foul variable. 
	  Let $\Itx$  be the set of all indices $\itx$ for which
          $x_\itx$ is foul and
	  $\Itc$ be the set of all indices $\itc$ of foul clauses. 
	  Furthermore, let\footnote{In fact, any value
	  not in $\{c_0,c_1\}$ would do.} $\cons=\valtransto(x_{\itx})$ for the
          minimal index $\itx\in\Itx$.

	\smallskip
	\noindent
	  We define valuation $\valtransfrom$ by 
\[
	  	\valtransfrom(u) \mydef
                \begin{cases}
                  \valtransto(u) & \text{if $u\in\{w_0,w_1,x_1,\ldots,x_m\}$},\\
                  c_1 & \text{if $u\in\{y_1,\ldots,y_n,z_1,\ldots,z_p\}$},\\
                  c_0 & \text{if
                    $u\in\{\ny_1,\ldots,\ny_n,\nz_1,\ldots,\nz_p\}$},\\
                  \cons & \text{if $x=\nx_\itx$ and $x_\itx$ is foul},\\
                  c_0 & \text{if $x=\nx_\itx$ and
                    $\valtransto(x_\itx) = c_1$},\\
                  c_1 & \text{if $x=\nx_\itx$ and
                    $\valtransto(x_\itx) = c_0$}.\\
                \end{cases}
\]
For variables $s_j$, $\valtransfrom(s_j)\mydef c_1$, if $C_j$ is foul or for all
its literals $\ell$, it holds $\valtransfrom(\ell)=c_1$, otherwise
$\valtransfrom(s_j)\mydef c_0$. For variables $r_j$,
$\valtransfrom(r_j)\mydef c_1$, if $\valtransfrom(s_i)= c_1$, for some
$i\le j$ and $\valtransfrom(r_j)\mydef c_0$, otherwise.

It is clear that \conttofrom holds, but  we can not expect that
\valtransfrom is minimal. There might be some \Con-facts in $\valtransfrom(\circuit)$ resulting
from clauses that can be avoided by changing the valuation for some
variables $z_i$. However, we can show in the following that every
minimal valuation \valtransfromce contained in \valtransfrom fulfills
\conttofromce and thereby yields (C3).

To this end, let \valtransfromce be a \im valuation such that
$\valtransfromce \le_\qr \valtransfrom$.
We show first that \valtransfromce has to produce most facts
from  \evalbodyfrom. This is immediate for all facts from $\valtransfrom(\{
				\YValVar{1}(y_1),\YValVar{1}(\ny_1), \dots,
				\YValVar{\ly}(y_\ly),\YValVar{\ly}(\ny_\ly)\})$, 
                                $\valtransfrom(\xdomain)$, and $\valtransfrom(\gates)$.

Any facts of the form $\valtransfrom(\Neg(u,\negu))$ that do not occur
in $\valtransfrom(\gates)$ are of the form
$\Neg(\valtransto(x_\itx),\cons)$, for some foul variable $x_\itx$. As
$x_\itx$ occurs in the head, and there is at most one such fact per
foul variable, these facts can not be avoided in
$\valtransfromce(\bodyfrom)$. As all facts of the form
$\valtransfromce(\Neg(u,\negu))$ have to be in
$\valtransfromce(\bodyfrom)$ and all variables $x_i,y_i$ occur in
\headfrom, we can conclude that \valtransfromce has to agree with
\valtransfrom for all variables of the form $x_i,\nx_i,y_i,\ny_i$ and
on $w_0$ and $w_1$. 

Therefore, it is clear for all facts from $\valtransto(\bodyto)$
except $\Res(c_1)$ that they are captured by
$\valtransfromce(\bodyfrom)$. It therefore only remains to show
$\Res(c_1)\in \valtransfromce(\bodyfrom)$.

Let $x_g$ be the foul variable that was used to define $a\mydef \valtransto(x_g)$
 and let $C_j$ be some clause in which it occurs. Thus, by definition
 of \valtransfrom there is an
 \Con-fact in $\valtransfrom(\circuit)$ with value $a$ in one of its first three positions and with
 $c_1$ in its fourth position. Furthermore, all
 \Con-facts in $\valtransfrom(\circuit)$
 with $a$-values have $c_1$ in their fourth position. Therefore,
 $\valtransfromce(\circuit)$ needs to contain at least one \Con-fact
 with $a$ in one of its first three positions and with $c_1$ in its
 fourth position. That is, $\valtransfromce(s_i)=c_1$, for at least
 one $i$. As $\valtransfromce(\circuit)$ can only contain \Dis-facts
 from $\valtransfrom(\gates)$, it follows that
 $\valtransfromce(r_h)=c_1$, for all $h\ge i$ and, in particular, for
 $h=k$.  Therefore, $\Res(c_1)\in \valtransfromce(\bodyfrom)$ and \conttofromce.
	\end{case}

\end{proof}

\subsection{Proof of Lemma~\ref{lem:comp_st_minimal}}
\label{ssec:Lemma_comp_st_minimal}

We define the following problem:

\problemdefinition
	{$\SMQ$}
	{CQ $\qr$}
	{Is $\qr$ strongly minimal?}

It remains to prove the following lemma:
\begin{lemma}
	$\SMQ$ is $\coNP$-hard.
\end{lemma}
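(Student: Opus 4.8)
The plan is to prove \coNP-hardness of $\SMQ$ by a polynomial-time reduction from (the complement of) graph $3$-colorability: I would map a graph $G=(V_G,E_G)$, $V_G=\{1,\dots,n\}$, to a CQ $\qr_G$ such that $\qr_G$ admits a \emph{non-minimal} valuation if and only if $G$ is $3$-colorable. Since, by the easy upper bound, the existence of a non-minimal valuation of a CQ is in \NP, this reduction shows that the complement of $\SMQ$ is \NP-hard, i.e.\ that $\SMQ$ is \coNP-hard. It will be convenient (and harmless for $3$-colorability) to assume $G$ is connected with at least one edge.

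For the construction I would use a single binary relation $E$, three \emph{color variables} $c_1,c_2,c_3$, a \emph{vertex variable} $v_i$ for each $i\in V_G$, and a fresh unary relation $L_i$ for each vertex (an ``identity gadget''). The head atom is $H(c_1,c_2,c_3)$, so that a candidate witness $V'$ to the non-minimality of a valuation $V$ must satisfy $V'(c_a)=V(c_a)$ for $a\in\{1,2,3\}$. The body of $\qr_G$ consists of: (i) a \emph{palette}, the atoms $E(c_a,c_b)$ for all $a\ne b$; (ii) the \emph{graph}, the atoms $E(v_i,v_j)$ and $E(v_j,v_i)$ for each $\{i,j\}\in E_G$; and (iii) for each $i$, the atoms $L_i(v_i),L_i(c_1),L_i(c_2),L_i(c_3)$, which force $V'(v_i)\in\{V(v_i),V(c_1),V(c_2),V(c_3)\}$ for every witness $V'$ (because the only $L_i$-facts in $V(\body{\qr_G})$ are $L_i(V(v_i)),L_i(V(c_1)),L_i(V(c_2)),L_i(V(c_3))$). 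As explained below, this basic construction still has to be fortified.

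The completeness direction ($G$ $3$-colorable $\Rightarrow$ $\qr_G$ has a non-minimal valuation) is routine. Given a proper $3$-coloring $\chi$, take an injective valuation $V$ and define $V'$ to agree with $V$ on $c_1,c_2,c_3$ and to send $v_i$ to $V(c_{\chi(i)})$. Since $\chi$ is proper, every graph atom $E(v_i,v_j)$ is mapped by $V'$ to a palette fact, and every identity-gadget atom to a fact already present in $V(\body{\qr_G})$; hence $V'(\body{\qr_G})\subseteq V(\body{\qr_G})$, the inclusion being strict because, e.g., $L_1(V(v_1))$ lies in $V(\body{\qr_G})$ but not in $V'(\body{\qr_G})$. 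As $V'(\head{\qr_G})=V(\head{\qr_G})$, this gives $V'<_{\qr_G}V$, so $V$ is not minimal.

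The soundness direction ($\qr_G$ has a non-minimal valuation $\Rightarrow$ $G$ is $3$-colorable) is the heart of the argument and the step I expect to be the main obstacle. Given $V'<_{\qr_G}V$, the identity gadgets pin each $V'(v_i)$ to $\{V(v_i),V(c_1),V(c_2),V(c_3)\}$; one classifies the vertices into those $V'$ ``recolors'' (sends to a color value) and those it ``keeps'', and — when $V$ sends $c_1,c_2,c_3$ and all the $v_i$ to pairwise distinct values — shows that no edge can join a recolored and a kept vertex and that the recolored part must be properly $3$-colored, whence (using connectedness of $G$) either nothing is recolored, forcing $V'=V$ and contradicting strictness, or everything is, yielding a proper $3$-coloring of $G$. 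The difficulty is that this clean analysis only handles ``generic'' valuations $V$: if $V$ identifies two color variables, the palette acquires a loop $E(\gamma,\gamma)$, and likewise if some $V(v_i)$ coincides with a color value, so such degenerate valuations can collapse spuriously (e.g.\ by mapping every vertex variable onto the loop value). Ruling these out — by adding further forcing atoms that make every such degenerate valuation automatically minimal, and/or by first normalizing $G$ — is where the real work lies, and it is the step I expect to require the most care.
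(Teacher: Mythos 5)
Your reduction has a genuine gap in the soundness direction, and it is exactly the one you flag at the end: as the construction stands, the existence of a non-minimal valuation for $\qr_G$ does \emph{not} imply that $G$ is $3$-colorable, because degenerate outer valuations $V$ are not merely hard to analyse --- they are \emph{automatically non-minimal}. Concretely, take any $V$ with $V(c_1)=V(c_2)=\gamma$ and $V(v_1)$ fresh, and let $V'$ agree with $V$ on $c_1,c_2,c_3$ and map every $v_i$ to $\gamma$. The palette fact $E(\gamma,\gamma)=V(E(c_1,c_2))$ absorbs every graph atom under $V'$, each identity-gadget atom goes to $L_i(\gamma)=V(L_i(c_1))$, and $L_1(V(v_1))$ witnesses strictness; hence $V'<_{\qr_G}V$ for \emph{every} input graph, so $\qr_G$ is never strongly minimal and the reduction is vacuous. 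This cannot be repaired by ``adding forcing atoms'': a CQ cannot force a valuation to be injective on $c_1,c_2,c_3$; at best one could hope to make degenerate valuations minimal, but here they are irreparably non-minimal. Nor does ``normalizing $G$'' help, since the collapse above is independent of $G$. Any fix must change the architecture of the encoding, not merely fortify it.

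The paper's proof avoids this trap by reducing from \ThreeSat with the opposite division of labour: all variables encoding the existential choice (the truth values of the propositional variables, together with the representatives of true and false) are placed \emph{in the head}, so a witness $V'$ must agree with $V$ on all of them, and the only remaining freedom is to swap the two non-head variables $r_0,r_1$, a swap kept under control by the atoms $\Dom(r_0,r_1)$ and $\Dom(r_1,r_0)$. That swap yields a strictly smaller set of required facts precisely when the head values of $V$ encode a satisfying assignment, so degeneracy of $V$ on head variables is harmless. Your scheme instead places the existential object (the colouring) in the \emph{witness} $V'$, which forces the $v_i$ to stay outside the head and thereby hands the adversarial valuation $V$ the same freedom, which it abuses as above. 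If you want to salvage the $3$-colourability route, you need a gadget ensuring that whenever $V$ identifies two colours (or sends some $v_i$ onto a colour value) the resulting collapse is already incurred by $V$ itself rather than being available only to $V'$; as written, you have correctly identified the obstacle but not overcome it, so the argument is incomplete.
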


\begin{proof}
	The proof is by a reduction from the $\NP$-complete $\ThreeSat$ problem to the complement of $\SMQ$. To this end, let $\phi$ be an input for $\ThreeSat$. That is, a propositional formula in
	3-CNF with variables $\seqx$. Let $\seqc$ denote the (disjunctive) clauses in $\varphi$. We next construct the corresponding input instance $\dquery$ for $\SMQ$.
		
	The head of $\dquery$ is $H(\vart,\varf,x_1,\nx_1,\dots,x_\lx,\nx_\lx)$. Intuitively, we represent Boolean values by
	pairs of variables: true is represented by $(\vart,\varf)$ and false is
	represented by $(\varf,\vart)$. Similarly, each literal of
$\phi$ is represented by a pair of variables.  Formally, this
representation is by a function	$\rep$, which maps literals from $\phi$ to pairs of
variables from $\qr$ via
	$\rep(x_\itx)\mydef(x_\itx,\nx_\itx)$ and $\rep(\neg x_\itx)\mydef(\nx_\itx,x_\itx)$, for each $\itx \in
	\{\indx\}$. We extend $\rep$ to 3-clauses via
	$\rep\big(\disClause{\itc}\big)\mydef(y_1,y'_1, y_2,y'_2, y_3,y'_3)$, where
	$(y_i,y'_i)=\rep(\lit{j}{i})$ for every $i \in \{1,2,3\}$.
	Besides the already mentioned head variables, $\dquery$ has only two non-head variables $r_0,r_1$.
	
	Let $\sixVar$ be the set of all 6-tuples $(u_1,u'_1,u_2,u'_2,u_3,u'_3)$,
	where $(u_i,u'_i)$ equals either $(\vart,\varf)$ or $(\varf,\vart)$ for each
	$i \in \{1,2,3\}$, and let $\sixPosVar =
	\sixVar \setminus \{(\varf,\vart,\varf,\vart,\varf,\vart)\}$.
Intuitively, $\sixPosVar$ represents all truth assignments that
satisfy a three-way disjunction.

	We define
	$\body{\dquery}\mydef \domain \cup \consistency{3-CNF} \cup \structure{3-CNF}{\phi}$,
	where
    \begin{itemize}\itemsep=0pt
    \item 
   
	\begin{math}
		\domain \mydef \{\Dom(r_0,r_1), \Dom(r_1,r_0)\}
	\end{math}
	is a set of binary atoms restricting the possible mappings of variables
	$r_0,r_1$; 

    \item 
	\begin{math}
		\consistency{3-CNF} \mydef
		\big\{
			\Clause_\itc (\vart, \varf,\vu) \mid \itc \in
			\{\indc\},\; \vu \in \sixPosVar \big\}
	\end{math}
	is a set of 8-ary atoms encoding satisfying valuations of clauses; and,
    \item 	\begin{math}
		\structure{3-CNF}{\phi} \mydef
		\big\{
			\Clause_\itc
			\big(
				r_1, r_0, \vy_\itc
			\big)
			\mid
			\itc \in \{\indc\},
			\text{ where }
			\vy_\itc\mydef	\rep(C_\itc)
		\big\}
	\end{math}
	is a set of 8-ary atoms representing the actual clauses of $\phi$.
	\end{itemize}

	\paragraph*{Correctness}
	Obviously,  $\dquery$ can be computed in polynomial time. It
        therefore only
	remains to show that $\dquery \notin \SMQ$ if and only if $\phi \in \ThreeSat$.
	
	\subparagraph*{(if)}
	Let $\phi \in \ThreeSat$. We show that $\dquery \notin \SMQ$ by constructing
	two valuations, that derive the same head fact, of which one requires strictly
	less facts than the other. By assumption, there is a \ta $\beta$ for variables
	$\seqx$ in $\phi$ such that $\beta \models \phi$.
	
	Let $\goodVal$ and $\badVal$ both be defined on the head variables
        as follows: $\varf
	\mapsto 0, \vart \mapsto 1$, and $x \mapsto \beta(x), \nx \mapsto \overline{\beta(x)}$
	for each $x \in \{\seqx\}$. The non-head variables
	are mapped for $\goodVal$ by $r_0 \mapsto 0, r_1 \mapsto 1$, and for
	$\badVal$ by $r_0 \mapsto 1, r_1 \mapsto 0$.
	
	Obviously, $\goodVal(\head{\dquery})=\badVal(\head{\dquery})$. Since
	$\beta$ is a satisfying \ta for $\phi$, it particularly satisfies every clause
	$C_\itc$ from $\phi$. Therefore,
	for every $\itc \in \{\indc\}$ there is some $\vu_\itc \in \sixPosVar$ such that
	\begin{math}
		\goodVal(\Clause_\itc(r_1,r_0,\vy_\itc))
		=
		\goodVal(\Clause_\itc(\vart,\varf,\vu_\itc)),
	\end{math}
	which implies $\goodVal(\structure{3-CNF}{\phi}) \subseteq
	\goodVal(\consistency{3-CNF})$. Together with
	$\goodVal(\domain)=\badVal(\domain)$ and
	$\goodVal(\consistency{3-CNF})=\badVal(\consistency{3-CNF})$ this implies
	$\goodVal(\body{\dquery}) \subseteq \badVal(\body{\dquery})$.
	Because $\badVal(r_1,r_0)=(0,1) \neq
        (1,0)=\badVal(\vart,\varf)$, it holds
	$\badVal(\structure{3-CNF}{\phi})\not\subseteq\badVal(\consistency{3-CNF})$,
	which further implies $\goodVal(\body{\dquery}) \subsetneq
	\badVal(\body{\dquery})$. Thus, $\goodVal$ contradicts \imity of $\badVal$ and
	therefore we have $\dquery \notin \SMQ$.
	
	\subparagraph*{(only-if)}
	Now, let $\phi$ be unsatisfiable. We show that $\dquery \in \SMQ$.
	Actually, we prove that there are even no two different valuations 
    $\val$ and $\altVal$ for $\dquery$ for which $V'\le_\dquery V$.
    
    Towards a contradiction, assume
    that  $V$ and $V'$ are two different valuations 
    for $\dquery$ with $V'(\head{\dquery})=V(\head{\dquery})$ and $V'(\body{\qr})\subseteq V(\body{\qr})$.
 By design of the set
	\begin{math}
		\domain = \{\Dom(r_0,r_1), \Dom(r_1,r_0)\},
	\end{math}
	 one of the following holds: (1) $\altVal(r_0)=\val(r_0)$ and $\altVal(r_1)=\val(r_1)$; or (2)
	$\altVal(r_0)=\val(r_1)$ and $\altVal(r_1)=\val(r_0)$.
	From our assumption $\altVal\neq\val$ and since $r_0,r_1$ are the only non-head-variables, we can conclude    that (2) must hold and that in addition $\val(r_0)\neq\val(r_1)$.
	
	To obtain a contradiction, we next show that there is a
        $\Clause_\itc$-atom $A \in \structure{3-CNF}{\phi}$ for which
        $V'(A)\not\in V(\body{\dquery})$ implying that  
$V'(\body{\dquery})\not\subseteq V(\body{\dquery})$. In fact, it already suffices to
	show that $\altVal(A) \notin \val(\consistency{3-CNF})$, since $A$ cannot be
	mapped to any fact in $\val(\structure{3-CNF}{\phi})$ because of the
	different mapping of variables $r_0,r_1$.

		Let $\consFalse \mydef V(\varf)$ and $\consTrue \mydef
                V(\vart)$. We distinguish two cases.

	\begin{case}{1}{$\val(r_1,r_0) = (c_1,c_0)$}
		Let $A\mydef\Clause_\itc(r_1,r_0,\vy_\itc)$ be an
                arbitrary $\Clause_\itc$-atom. Then $\altVal(\Clause_\itc(r_1,r_0,\vy_\itc)) \notin
		\val(\consistency{3-CNF})$ because $\altVal(r_1,r_0)\neq\val(\vart,\varf)$.
	\end{case}
	
	\begin{case}{2}{$\val(r_1,r_0) =  (c_0,c_1)$}
		
		We can assume that $V'(x_\itx,\nx_\itx) \in \{(\consFalse,\consTrue),\,
		(\consTrue,\consFalse)\}$ for every $\itx \in
                \{\indx\}$. Otherwise, existence of $A$ follows
                immediately, as  $\val(\consistency{3-CNF})$ only uses
                values $\consFalse,\consTrue$.

		\noindent
		Therefore, valuation $V'$
		induces a \ta $\beta$ by $x \mapsto 0$, if
		$\val(x)=\consFalse$, and $x \mapsto 1$, if $\val(x)=\consTrue$, for every $x
		\in \{\seqx\}$. Since $\phi$ is not satisfiable, there
                is some $\itc$ such that
		$\beta\not\models C_\itc$. Thus, we
		have $\beta(\seqlit{\itc})=(0,0,0)$ and
			$\altVal(\vy_\itc) = 
			(\consFalse,\consTrue,\consFalse,\consTrue,\consFalse,\consTrue)
			\notin \sixPosVar$.
	
		Therefore, valuation $\altVal$ cannot map atom $A=\Clause_\itc(r_0,r_1,\vy_j)$
		from $\structure{3-CNF}{\phi}$ to any fact in
		$\val(\consistency{3-CNF})$.
	\end{case}
	
\end{proof}



\section{Proofs for Section \ref{sec:families}: Families of Distribution Policies}
\subsection{Proof of Proposition~\ref{prop:cthree}}
\label{app:proof-families-complexity}

\newcommand{\threecolor}{three-colorability}
\newcommand{\colors}{\EM{\textit{colors}}}
\newcommand{\graph}{\EM{\textit{graph}}}
\newcommand{\fixed}{\EM{\textit{fixed}}}
\newcommand{\acyclicity}{\EM{\textit{acyclicity}}}
\newcommand{\rest}{\EM{\textit{rest}}}
    \newcommand{\hg}{\EM{\mathcal{H}}}

\newcommand{\id}{\EM{\ell}}

We provide two reductions from an \NP-complete problem to the problem of deciding condition (C3) on arbitrary queries $\qr'$ and $\qr$.
In the first reduction, given in
Proposition~\ref{prop:hc-com-acyclic2}, $\qr$ is acyclic while $\qr'$
is acyclic in the second reduction, given in
Proposition~\ref{prop:hc-com-acyclic1}. 
Clearly, Proposition~\ref{prop:cthree} follows from
Proposition~\ref{prop:hc-com-acyclic1} and Proposition~\ref{prop:hc-com-acyclic2}. That the proposition holds
even for full \qr follows from the fact that the head of \qr  is
irrelevant for the truth of condition
(C3). That is, e.g., in the proof of
Proposition~\ref{prop:hc-com-acyclic1} the head of \qr could be chosen full.

For a CQ $\qr$, we denote by $\hg_\qr$ the hypergraph in which every node corresponds to a variable in $\qr$, and there is a hyperedge between a set of nodes in $\hg_\qr$ if the corresponding variables of $\qr$ occur together in some atom in the body of $\qr$. For the definition of acyclicity, we use the well-known GYO reduction~\cite{ahv_book}. In particular, a query $\qr$ is acyclic if repeatedly removing nodes in
$\hg_\qr$ that are in only one hyperedge and all the hyperedges that are
contained in another hyperedge, results in the empty hyperedge.

For both cases, \NP-hardness relies on a reduction from the \NP-complete graph
3-colorability problem, which asks for a given undirected graph $G=(V,E)$, whether there exists a mapping $h$ that assigns to every node in $G$ a color from the set $C =\{r, g, b\}$, such that a different color is assigned to every pair of adjacent nodes. 
Hereafter we will refer to such a mapping $h$ for $G$ as a 3-color assignment or mapping for $G$. 

For ease of exposition, we assume $V \subseteq \uvar$, $C \subseteq \uvar$ (where $V$ and $C$
are disjoint sets).
Further, by $E_C$ we denote the set of all pairs $(c,d) \in C^2$ where $c \ne d$.
Intuitively, these pairs correspond to possible valid colorings of edges.

\begin{proposition} For Boolean CQs \qr and $\qr'$, where $\qr$ is acyclic, it is \NP-hard
    to decide whether
    condition (C3) holds.
    \label{prop:hc-com-acyclic2}
\end{proposition}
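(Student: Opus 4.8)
The plan is to give a polynomial reduction from graph $3$-colorability, following the recipe indicated in the Remark: encode the input graph into $\qr'$ and the boundedly many valid color combinations into $\qr$, adding one extra atom to $\qr$ to force acyclicity. Throughout, $C=\{r,g,b\}$ is the fixed set of colors (three variables in $\uvar$), $E$ is the binary relation symbol used for edges, and I additionally use one ternary relation symbol $\mathtt{Aux}$.

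Given an undirected graph $G=(V,E)$ (without loss of generality without isolated vertices), with $V\subseteq\uvar$ disjoint from $C$, I define two Boolean CQs with head $H()$. The query $\qr$ is independent of $G$: $\body{\qr}\mydef\{\mathtt{Aux}(r,g,b)\}\cup\{E(c,d)\mid(c,d)\in E_C\}$. The query $\qr'$ is $\body{\qr'}\mydef\{\mathtt{Aux}(r,g,b)\}\cup\{E(c,d)\mid(c,d)\in E_C\}\cup\{E(u,v)\mid\{u,v\}\in E\}$, fixing one orientation per edge of $G$. Both queries clearly can be computed in polynomial time (in fact $\qr$ has constant size). Acyclicity of $\qr$ follows from the GYO reduction: the three hyperedges $\{r,g\}$, $\{r,b\}$, $\{g,b\}$ stemming from the $E$-atoms are contained in the hyperedge $\{r,g,b\}$ of the $\mathtt{Aux}$-atom and are removed; the remaining single hyperedge $\{r,g,b\}$ then has each vertex in exactly one hyperedge, so all vertices are removed and the empty hyperedge is reached.

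It remains to prove that (C3) holds for $(\qr',\qr)$ if and only if $G$ is $3$-colorable. For the ``if'' direction, from a proper coloring $h\colon V\to C$ I define the substitution $\theta$ on $\vars{\qr'}$ that fixes $r,g,b$ and maps $u\mapsto h(u)$ for $u\in V$. Since $h$ is proper, $\theta$ sends every graph-edge atom $E(u,v)$ to a triangle atom $E(h(u),h(v))$ with $h(u)\neq h(v)$; hence $\body{\theta(\qr')}=\{\mathtt{Aux}(r,g,b)\}\cup\{E(c,d)\mid(c,d)\in E_C\}=\body{\qr}$, which is contained in $\body{\qr'}$, so $\theta$ is a simplification of $\qr'$. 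With $\rho$ the identity substitution for $\qr$, condition (C3) is then witnessed. For the ``only if'' direction, suppose (C3) holds via a simplification $\theta$ of $\qr'$ and a substitution $\rho$ for $\qr$ with $\body{\theta(\qr')}\subseteq\body{\rho(\qr)}$. Since $\mathtt{Aux}$ occurs exactly once in each query, $\mathtt{Aux}(\theta(r),\theta(g),\theta(b))=\theta(\mathtt{Aux}(r,g,b))\in\body{\theta(\qr')}\subseteq\body{\qr'}$ forces $\theta(r)=r$, $\theta(g)=g$, $\theta(b)=b$; then $\mathtt{Aux}(r,g,b)\in\body{\theta(\qr')}\subseteq\body{\rho(\qr)}$ forces $\rho(r)=r$, $\rho(g)=g$, $\rho(b)=b$, whence the $E$-atoms of $\body{\rho(\qr)}$ are exactly $\{E(c,d)\mid c\neq d,\ c,d\in C\}$. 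Consequently, for every edge $\{u,v\}$ of $G$, $E(\theta(u),\theta(v))=\theta(E(u,v))\in\body{\theta(\qr')}\subseteq\body{\rho(\qr)}$, so $\theta(u),\theta(v)\in C$ and $\theta(u)\neq\theta(v)$; thus the restriction of $\theta$ to $V$ is a proper $3$-coloring of $G$.

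I expect the main obstacle to be the ``only if'' direction: one must rule out any degenerate choice of $\theta$ and $\rho$ satisfying (C3) without actually encoding a coloring. This is exactly what the uniquely occurring ternary atom $\mathtt{Aux}(r,g,b)$ buys: it pins $\theta$ and $\rho$ to the identity on the color variables, after which the graph-edge atoms of $\qr'$ are forced into the color triangle. Secondary points to handle carefully are that this extra $\mathtt{Aux}$-atom does not destroy acyclicity of $\qr$ (verified above via GYO) and the harmless corner case of isolated vertices of $G$, which can simply be deleted before applying the reduction.
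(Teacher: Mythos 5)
Your reduction is correct and is essentially identical to the paper's: the paper uses the same $\qr$ and $\qr'$ (with a ternary atom $\Fix(r,g,b)$ playing the role of your $\mathtt{Aux}(r,g,b)$ to pin $\theta$ and $\rho$ to the identity on the colors), the same GYO argument for acyclicity of $\qr$, and the same correctness argument in both directions.
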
 

\begin{proof} As mentioned above, we reduce from the \NP-complete graph  $3$-colorability problem. To this end, let $G=(V,E)$ be an arbitrary input graph.

    Based on $G$, we construct a query $\qr'$ and $\qr$ as follows:\footnote{The construction is inspired by the construction given in
        \cite{DBLP:conf/stoc/ChandraM77} to show \NP-completes for the folding problem.}
        \begin{align*} 
            Q' & : () \leftarrow \bigwedge_{(x,y) \in E}E(x,y), \bigwedge_{(c,d) \in E_C} E(c,d), \Fix(r,g,b). \\ 
            Q & : () \leftarrow \bigwedge_{(c,d)\in E_C}E(c,d),\Fix(r, g, b).
        \end{align*}
 
        Here, both $\qr$ and $\qr'$ are defined over the schema $\{E,\fix\}$
        where $E$ is binary and $\fix$ is ternary. Notice in particular
        that the second part of the body of $\qr'$ corresponds exactly
        to $\body{\qr}$.  Furthermore, notice that $\qr'$ and $\qr$ 
        can be computed in time polynomial in the size of $G$.

    \paragraph{Claim.} There is a $3$-color assignment for $G$  if and only
    if there is a simplification $\theta$ for $\qr'$ and a substitution $\rho$ for
    $\qr$ such that $\body{\theta(\qr')} \subseteq \body{\rho(\qr)}$.

    \paragraph{(if)}
    Suppose there is a simplification $\theta$ for $\qr'$ and a substitution
    $\rho$ for $\qr$, such that $\body{\theta(\qr')} \subseteq \body{\rho(\qr)}$. By definition of simplification, 
    $\body{\theta(\qr')} \subseteq \body{\qr'}$.
    In particular, $\theta(\Fix(r, g, b)) \subseteq \body{\qr'}$, so it must be that
    $\theta(\Fix(r, g, b)) = \Fix(r, g, b)$. 
    So, $\Fix(r, g, b)$ must be in $\body{\rho(\qr)}$, implying that $\body{\rho(\qr)} = \body{\qr}$ as $\qr$ contains only colors. 
    This means that $\theta$ maps every node in $V$ onto a color in $C$. 
    Specifically, by construction of $\qr'$ and $\qr$, every two adjacent nodes
    in $G$ are mapped by $\theta$ onto distinct colors in $C$.  Now, let $h$ be the mapping defined as $h(x) = \theta(x)$ for $x\in V$. 
    Then, by the above, $h$ is a $3$-color assignment for
    $G$.

    \paragraph{(only-if)}  
        Suppose there is a $3$-color assignment $h$ for $G$. Let $\theta$ be the 
mapping defined as $\theta(x)=h(x)$ for $x \in V$ and $h(c)=c$
for $c \in C$.  As $h$ is a $3$-color assignment, $\body{\theta(\qr')} \subseteq
\body{\qr}$.  As there are no head variables, it readily follows that $\theta$ is a
simplification. Taking $\rho$ as the identity for $C$, it follows that
    $\body{\theta(\qr')} \subseteq \body{\qr} = \body{\rho(\qr)}$.

    \paragraph{Acyclicity of $\qr$.}

    By construction of $\qr$, $\vars{\body{\qr}} \subseteq C$. So, every
    hyperedge in $\hg_\qr$ is contained in the hyperedge that
    represents atom $\Fix(r, g, b)$, implying that all the other hyperedges can be removed
    immediately. Hereafter,  only one hyperedge remains (i.e, the hyperedge that represents
    $\Fix(r,g,b)$), implying that all the remaining nodes can be removed, which results in
    the empty hyperedge. 
    Hence, $\qr$ is acyclic.
\end{proof}


\begin{proposition} For Boolean CQs \qr and $\qr'$, where $\qr'$ is
    acyclic, it is \NP-hard to decide whether 
    condition (C3) holds.
    \label{prop:hc-com-acyclic1} \end{proposition}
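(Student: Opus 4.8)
The plan is to reduce from the \NP-complete \textsc{Directed Hamiltonian Path} problem (the paper could equivalently use $3$-colorability, as announced in the preceding remark, but the Hamiltonian-path encoding keeps the acyclic side almost effortless). Given a digraph $G=(V,E)$ with $V=\{v_1,\dots,v_n\}$, I would output the two Boolean CQs over a single binary relation $E$
\[
\qr:\ ()\gets \bigwedge_{(u,v)\in E} E(u,v),
\qquad
\qr':\ ()\gets E(z_1,z_2),\,E(z_2,z_3),\,\dots,\,E(z_{n-1},z_n),
\]
with $z_1,\dots,z_n$ fresh; degenerate cases ($n\le 1$ or $E=\emptyset$) are mapped to a fixed yes- resp.\ no-instance. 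The query $\qr'$ is acyclic, since its hypergraph is the path $P_n$, which GYO-reduces to the empty hypergraph. Both queries are computable in polynomial time, so it remains to prove that condition (C3) holds for $(\qr,\qr')$ iff $G$ has a directed Hamiltonian path.

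The first key step is to observe that $\qr'$ is \emph{rigid}: for a Boolean CQ a simplification is exactly an endomorphism, and the only endomorphism of the directed path $\qr'$ is the identity (an endomorphism would send $z_1,\dots,z_n$ to a length-$(n{-}1)$ walk in a directed path on $n$ nodes, which can only be the whole path starting at $z_1$). Hence (C3) collapses to: \emph{there is a substitution $\rho$ for $\qr$ with $\{E(z_i,z_{i+1})\mid 1\le i<n\}\subseteq\{E(\rho(u),\rho(v))\mid (u,v)\in E\}$.} For the ``if'' direction, a Hamiltonian path $v_{\pi(1)}\to\cdots\to v_{\pi(n)}$ gives the witness $\rho(v_{\pi(i)})\mydef z_i$. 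For the ``only if'' direction, any witnessing $\rho$ must have all of $z_1,\dots,z_n$ in $\rho(V)$ (they occur in $\body{\qr'}\subseteq\body{\rho(\qr)}$); as these are $n$ pairwise distinct variables and $|V|=n$, $\rho$ is a bijection $V\to\{z_1,\dots,z_n\}$, say $\rho(v_{\pi(i)})=z_i$, and then the inclusion forces $(v_{\pi(i)},v_{\pi(i+1)})\in E$ for every $i<n$ — a directed Hamiltonian path.

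The step I expect to carry the real weight is the rigidity argument for $\qr'$: it is precisely this that neutralizes the acyclicity constraint, since once the simplification $\theta$ is pinned to the identity, (C3) becomes a plain containment, and the connectedness of the path gadget (its $n$ pairwise distinct variables) then forces $\rho$ to be a renaming, turning the containment into Hamiltonicity. If instead one insists on the $3$-colorability route (graph encoded in $\qr$, valid colour-assignments in $\qr'$), the analogous obstacle is considerably heavier: the obvious way to make the colour gadget acyclic is to adjoin an atom containing all of its variables, but such an atom over-constrains the simplification $\theta$, so a more delicate acyclic encoding of the ``valid colour assignments'' is required; the Hamiltonian-path reduction sidesteps this difficulty entirely.
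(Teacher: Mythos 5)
Your reduction is correct, and it takes a genuinely different route from the paper's. The paper proves this proposition by reducing from graph $3$-colorability: the input graph is encoded in $\qr$ (one ternary $E$-atom per edge, tagged with a fresh edge-label variable, plus five extra ``free'' $E$-atoms per label), the valid colourings are encoded in $\qr'$ (six $E$-atoms per label), and the simplification $\theta$ is pinned to the identity by threading a chain of atoms $\Fix(z_i,z_{i+1},r,g,b)$ through both queries --- which costs a separate case analysis for rigidity and an inductive GYO argument for acyclicity. Your \textsc{Directed Hamiltonian Path} gadget makes all three jobs essentially free: the directed path on $z_1,\dots,z_n$ is immediately acyclic, its only body endomorphism is the identity (your walk argument is sound, and for Boolean queries a simplification is exactly a body endomorphism, since the head condition is vacuous), and the residual containment $\body{\qr'}\subseteq\body{\rho(\qr)}$ forces $\rho$ to be a bijection of $\vars{\qr}$ onto $\{z_1,\dots,z_n\}$ by pigeonhole, which is precisely Hamiltonicity. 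The only detail worth spelling out is the isolated-vertex case in the only-if direction: if $G$ has an isolated vertex then $|\vars{\qr}|<n$, no $\rho$ can cover all $n$ variables $z_i$, and (C3) fails --- consistent with the absence of a Hamiltonian path for $n\ge 2$; your counting step covers this implicitly. Net effect: your proof is shorter and its gadget analysis is nearly trivial, at the cost of starting from a different \NP-complete problem than the one the paper uses; both establish exactly the claimed statement.
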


\begin{proof} We give a reduction from the \NP-complete graph 3-colorability problem.  Let
    $G=(V,E)$ be an undirected graph, where $m=|E|$.  For ease of presentation, we assume
    that every edge is represented by only one tuple in $E$, i.e., that $(x,y) \in E$
    implies $(y,x) \not \in E$ \footnote{Notice that this requires to choose a direction for
    every edge, but that no information gets lost. The choice itself is of no further concern
and can be implemented in an arbitrary fashion.}. 

    Further, we consider a labeling function
    $\ell$ that assigns to every edge in $G$ a unique label in $\uvar\setminus
    (V\cup C)$. Let $Z = \img{\ell}=\{z_1,\ldots,z_m\}$. Let $W = \{w_{z,i} \mid
    z \in Z, i\in \{1,\ldots,10m\}$. We assume $W$ is disjoint from $V$, $C$, and $Z$.
    In the construction we represent edges of $G$ as
    atoms over the ternary relation $E$, where the first variable denotes its associated
    label, the second and third variable denote its end nodes.

    Based on $G$, we construct queries $\qr'$ and $\qr$ as follows:
    \begin{align*}
        \qr' & :() \leftarrow \bigwedge_{(c,d) \in E_C, z \in Z} E(z,c,d),
        \bigwedge_{i\in\{1, \ldots, m-1\}} \Fix(z_{i}, z_{i+1}, r, g, b). \\
        \qr & :() \leftarrow \bigwedge_{(x,y) \in E}E({\ell(x,y)},x, y),
        \bigwedge_{z\in Z,i\in \{1, 3, 5, 7, 9\}}E(z,w_{z,i}, w_{z,i+1}),
        \bigwedge_{i\in\{1, \ldots, m-1\}} \Fix(z_{i}, z_{i+1}, r, g, b).
    \end{align*}

    Notice that $\qr'$ represents for every edge in $G$ all the potentially valid
    color-assignments within the range $C$, implying $6$ $E$-atoms for each edge in $G$,
    whereas $\qr$ has for every edge in $G$, one $E$-atom that represents the edge and $5$
    additional $E$-atoms, containing unique variables on the end-node positions.
    Hereafter, we will refer to the latter as \emph{free} $E$-atoms.

    \paragraph{Claim.} There is a simplification $\theta$ for $\qr'$ and a substitution
    $\rho$ for $\qr$ such that $\body{\theta(\qr')} \subseteq \body{\rho(\qr)}$ if and only
    if there is a 3-color assignment for $G$.

    \paragraph{(if)}
    Let $h$ be a valid $3$-color assignment for $G$. We show that there is a
    simplification $\theta$ for $\qr'$ and substitution $\rho$ for $\qr$ such that
    $\body{\theta(\qr')}
    \subseteq \body{\rho(\qr)}$. Let $\theta$ be the identity simplification for $\qr'$.   Let $\rho'$ be the partial substitution for $\qr$ that is the identity on edge labels and colors, and maps the variables in $V$ onto variables in $C$ consistent with $h$. That is, $\rho'(x) = x$ for $x \in C\cup Z$ and $\rho'(x)=h(x)$ for $x \in V$. Now, $\rho'$ already guarantees that every extension of $\rho'$ satisfies the  containment property $\body{\qr'} \subseteq \rho'(\body{\qr})$ for all the $\Fix$-atoms
    in $\qr'$ and for exactly $m$ $E$-atoms in $\qr'$ (all having a distinct edge label).  Next, we extend $\rho'$ to a complete substitution $\rho$ by mapping the variables in $W$
    in such a way that the $5m$ free $E$-atoms are mapped on exactly the $5m$ remaining
    $E$-atoms in $\qr'$. Notice that such an extension exists because the variables in $W$
    are all unique for $\qr$. 

    \paragraph{(only-if)}
    Let $\theta$ be a simplification for $\qr'$ and $\rho$ a substitution for $\qr$, where
    $\body{\theta(\qr')} \subseteq \body{\rho(\qr)}$. We show that there is a valid
    $3$-color assignment for $G$. The proof proceeds under the assumption that
    \begin{align*}
        \text{$\body{\theta(\qr')} = \body{\qr'}$.} \hspace{4ex}(\dagger)
    \end{align*}
    We prove below that $(\dagger)$ holds. Assuming $(\dagger)$, it follows that
    $\body{\qr'} \subseteq \body{\rho(\qr)}$.  Let $h$ be the mapping defined by $h(x) = \rho(x)$ for every $x \in V$.
    To show that $h$ is a valid $3$-color mapping for $G$, we argue that
    $\img{h} \subseteq C$ and $h(x) \ne h(y)$ for every two adjacent nodes $x,y$ in $G$.
    We observe that for $\rho$ to satisfy $\body{\qr'} \subseteq \body{\rho(\qr)}$, $\rho$ must be such that every $E$-atom in the body of $\qr'$ is mapped onto an $E$-atom in the body of $\qr$. Then, by construction of $\qr'$ and $\qr$ (both contain precisely six $E$-atoms per edge label $z$) it  follows that every variable in $V$ is
    mapped by $\rho$ onto a variable in $C$. In particular, for every edge-representing
    atom in $\qr$, $\rho(E(z, x, y)) = E(z', c, d)$, where $c \ne d$.  So,
    $\img{h} \subseteq C$, and $h(x) \ne h(y)$ for every two adjacent nodes in
    $V$, implying $h$ to  be a valid $3$-color mapping for $G$.

    It remains to show $(\dagger)$. We actually prove a stronger result, namely that $\theta$ must be the identity simplification.
    By definition of simplification, $\body{\theta(\qr')} \subseteq \body{\qr'}$.  In
    particular, $\theta(\Fix(z_{i}, z_{i+1}, r, g, b)) \subseteq \body{\qr'}$ for every $i
    \in \{1, \ldots, m-1\}$. As the $\Fix$-atoms in $\qr'$ all agree on the last three  variables, it follows that $\theta$ is the identity on the
    variables $r, g, b$. It remains to show that $\theta$ is also the identity on the edge labels.  Towards a contradiction suppose that one of the edge labels in $\qr'$ is mapped onto another
    variable by $\theta$. Then, by construction of $\qr'$, it must be that this other
    variable is also an edge-label. Therefore, let $z_i$ be such a variable, where
    $\theta(z_i) = z_j$, for $i,j \in \{1, \ldots, m\}$ and $i \ne j$.
    Hereafter, by $\Fix_k$ we denote the atom $\Fix(z_k, z_{k+1}, r, g, b)$, where $1\le k \le
    {m-1}$.

    We distinguish two cases:
    
    \begin{enumerate}
        
        \item \underline{Case ($i < j$):}
            By the assumption $\theta(z_i) = z_j$ and the definition of simplification
            (i.e., $\theta(\fix_i) \in \body{\qr'}$), it must be that $\theta(\Fix_i) = \Fix_j$
            (because $\fix_j$ is the only atom in $\qr'$ where variable $x_j$ occurs on the
            first position).  So, $\theta(x_{i+1}) = \theta(x_{j+1})$. 
            We can repeat the above argument to show that $\theta(\Fix_{i+1}) = \Fix_{j+1}$,
            and in general, $\theta(\Fix_{i+k}) = \Fix_{j+k}$ for every positive integer $k
            \le m-j-1$. Notice that $\theta(\Fix_{m-j-1}) = \Fix_{m-1}$, i.e., it is the
            last atom of the chain. Further, $\theta(z_{m-j}) = z_{m}$. 
            Now, because $\theta(\body{\qr'}) \subseteq \body{\qr'}$, it must be that
            $\theta(\Fix_{m-j-1}) \subseteq \body{\qr'}$, requiring that there is an atom
            in $\body{\qr'}$ with $z_{m}$ on the first position. By construction of $\qr'$
            there is no such atom. 

        \item \underline{Case ($i > j$):} The proof is analogous to the proof for case
            $(1)$, except that we show an argument in the reverse direction, towards an
            atom where $z_1$ is on the second position.  
    
    \end{enumerate}
    
    \paragraph{Acyclicity of $\qr$.}
    Again, acyclicity can be verified by a
    straightforward GYO reduction on the hypergraph $\hg_{\qr'}$.
    Notice that all the hyperedges that represent an $E$-atom for $\qr'$
    are contained in the hyperedges that represent the $\Fix$-atoms for a matching
    edge label. So, all the hyperedges representing an $E$-atom can be removed immediately. Now, the proof proceeds by induction on the number of $\Fix$-atoms.  For the base case, assume there is only one $Fix$-atom. Then 
    all variables occur in only one hyperedge and thus can be removed which results in the empty hyperedge.  Next, assume that $\qr'$ is acyclic when it has a chain of at most $i$ $\Fix$-atoms. For the induction step we assume that there are $i+1$ $\Fix$-atoms.
    W.l.o.g., we assume that the first variable in the chain is $z_1$, and the last is $z_{i+2}$.  Notice that $z_1$ is in only one hyperedge, namely the
    hyperedge that represents $Fix(z_1, z_2, r, g, b)$. After removing $z_1$, this hyperedge is entirely contained by the hyperedge representing the next $\Fix$-atom, i.e., $\Fix(z_2,
    z_2, r, g, b)$. Consequently, the hyperedge that represents $\Fix(z_1, z_2, r, g, b)$ is
    eliminated. The resulting hypergraph now represents a chain of $i$ $\Fix$-atoms.
    Hence, the result follows from the induction hypothesis.
\end{proof}

\begin{remark} As Proposition~\ref{prop:hc-com-acyclic2} shows NP-completeness when $\qr$
    is acyclic and Proposition~\ref{prop:hc-com-acyclic1} shows NP-completeness when
    $\qr'$ is acyclic, it is an interesting question whether it is still NP-complete to
    decide condition (C3) 
        if both $\qr$ and $\qr'$ are acyclic. 
    When allowing arbitrary arity relations, acyclicity is easily achieved by the use of one atom that contains all variables of the query. Particularly, one can add to
    $\qr$ in the reduction for Proposition~\ref{prop:hc-com-acyclic1} an
    atom $A$ that contains every variable of $\qr$.
    Then, the proof remains valid as-is, but both $\qr'$ and $\qr$ are acyclic.
    Nevertheless, under bounded-arity database schemas, the problem remains open.
\end{remark}





\end{document}